\title{Recovery Sets of Subspaces from a Simplex Code\vspace{-1.0ex}}
\date{\today}
\author{\textbf{Yeow Meng Chee$^\text{x}$}, \textbf{Tuvi Etzion$^*$}, \textbf{Han Mao Kiah$^+$}, \textbf{Hui Zhang$^\text{x}$}\\
{\small $^\text{x}$Department of Industrial Systems Engineering and Management, National University of Singapore, Singapore}\\
{\small $^*$Computer Science Department, Technion, Israel Institute of Technology, Haifa 3200003, Israel}\\
{\small $^+$School of Physical and Mathematical Sciences, Nanyang Technological University, Singapore}\\
{\small {\it ymchee@nus.edu.sg}, {\it etzion@cs.technion.ac.il}, {\it hmkiah@ntu.edu.sg}, {\it hzhang.chn@gmail.com}\vspace{-0.13ex}}
\thanks{Parts of this work have been presented at the \emph{IEEE International Symposium on Information Theory 2020}, Los Angeles, California, U.S.A., June 2020 (virtual conference).}
}
\theoremstyle{plain}
\newtheorem{thm}{Theorem\hspace{-1pt}}
\newenvironment{theorem}
{\begin{thm}\hspace*{-1ex}{\bf.}}{\end{thm}}
\newtheorem{lem}[thm]{Lemma\hspace{-.75pt}}
\newenvironment{lemma}{\begin{lem}\hspace*{-1ex}{\bf.}}{\end{lem}}
\newtheorem{prop}[thm]{Proposition$\!$}
\newtheorem{cor}[thm]{Corollary$\!$}
\newenvironment{corollary}{\begin{cor}\hspace*{-1ex}{\bf.}}{\end{cor}}
\newtheorem{defn}{Definition$\!$}
\newtheorem{exmp}{Example$\!$}
\newenvironment{example}{\begin{exmp}\hspace*{-1ex}{\bf.}}{\end{exmp}}
 \DeclareRobustCommand{\nsbinom}{\genfrac[]\z@{}}
 \newcommand{\sbinom}[2]{\nsbinom{{#1}}{{#2}}}
 \newcommand{\sbinomq}[2]{\nsbinom{{#1}}{{#2}}_{q}}
\newcommand{\cC}{{\cal C}}
\newcommand{\cL}{{\cal L}}
\newcommand{\cV}{{\cal V}}
\DeclareMathAlphabet{\mathbfsl}{OT1}{ppl}{b}{it} 
\newcommand{\field}[1]{\mathbb{#1}}
\newcommand{\F}{\field{F}}
\newcommand{\dP}{\field{P}}
\DeclareMathAlphabet{\mathbfsl}{OT1}{ppl}{b}{it}
\newcommand{\bfzero}{{\bf 0}}
\newcommand{\abs}[1]{\left|#1\right|}
\newcommand{\be}[1]{\begin{equation}\label{#1}}
\newcommand{\ee}{\end{equation}}
\renewcommand{\leq}{\leqslant}
\renewcommand{\ge}{\geqslant}
\renewcommand{\geq}{\geqslant}
\renewcommand{\Bbb}{\mathbb}
\newcommand{\C}{{\Bbb C}}
\newcommand{\mmod}{{\mbox{mod}}}
\newcommand{\Fn}{\smash{\Bbb{F}_{\kern-1pt2}^{\hspace{0.5pt}n}}}
\newcommand{\Fkap}{\smash{\Bbb{F}_{\kern-1pt2}^{\hspace{0.5pt}\kappa}}}
\newcommand{\Cref}[1]{Co\-ro\-lla\-ry\,\ref{#1}}
\newcommand{\Span}[1]{{\left\langle {#1} \right\rangle}}
\let\over\@@over
\let\atop\@@atop
\let\atopwithdelims\@@atopwithdelims
\renewcommand{\@endtheorem}{\endtrivlist}
\renewcommand{\thefigure}{{\bf \@arabic\c@figure}}
\renewcommand{\fnum@figure}{{\bf Figure}\,\thefigure}
\begin{document}

\maketitle

\vspace*{-3.50ex}

\hspace*{-1pt}
\begin{abstract}
Recovery sets for vectors and subspaces are important in the construction
of distributed storage system codes. These concepts are also interesting
in their own right. In this paper, we consider the following very
basic recovery question: what is the maximum number of possible pairwise disjoint recovery sets
for each recovered element? The recovered elements in this work are
$d$-dimensional subspaces of a $k$-dimensional vector space over $\F_q$. Each server stores one representative
for each distinct one-dimensional subspace of the $k$-dimensional vector space, or equivalently
a distinct point of PG$(k-1,q)$. As column vectors, the associated vectors of the stored one-dimensional subspaces
form the generator matrix of the $[(q^k -1)/(q-1),k,q^{k-1}]$ simplex code
over $\F_q$. Lower bounds and upper bounds on the maximum number of such recovery sets are provided.
It is shown that generally, these bounds are either tight or very close to being tight.
\end{abstract}

\vspace{-0.11cm}

\section{Introduction and Preliminaries}
\label{sec:intro}

An important problem in distributed storage systems is the design of a code which can recover any $d$-dimensional
subspace $U$ ($d$-subspace in short) of a $k$-space over $\F_q$ from $n$ disjoint subsets of servers,
where each server stores one distinct 1-subspace of the $k$-space.
Each such subset of servers is called a {\bf \emph{recovery set}} if the subspace spanned by their 1-subspaces
(or equivalently by the vectors of their 1-subspaces) contains $U$.
The set of vectors in this recovery set of servers will be also called a recovery set.
Similarly, the set of 1-subspaces of this recovery set of servers will be also called a recovery set.
Given $d$ and~$k$, one wishes to know what is the minimum number of servers required
for a given multiple recovery of each $d$-subspace of the $k$-space over $\F_q$,
using linear combinations of pairwise disjoint sets of servers.
This problem is of interest for distributed storage system codes and it is called availability~\cite{RPDV16}.
A related problem is to find the availability of distributed storage codes whose codewords are subspaces of possibly higher
dimension~\cite{SES19}. It is also related to a problem associated with a new model in private information
retrieval codes defined for minimizing storage, e.g.~\cite{AsYa18,BlEt19,CKYZ19,FVY15,FVY15a,KuYa21,NaYa22,VaYa23,YoYa22,ZEY19}.
In general, coding for subspaces has become quite fashionable, mainly for wide applications for distributed storage
and also for random network coding, e.g.~\cite{BEOVW,EtSt16,KoKu08,KoKs08,EtVa11,RaEt15}.

In this paper, we will solve another related problem for distributed storage systems. Let $d$ and $k$ be integers such that
$1 \leq d \leq k$, let $U$ be any $d$-subspace of $\F_q^k$, and assume that each server stores a 1-subspace of~$\F_q^k$.
What is the maximum number of subsets of pairwise disjoint servers
such that each such subset can recover~$U$, i.e., their 1-subspaces span $U$?
This is equivalent to the recovery of $d$-subspaces from the columns of the generator matrix
of the simplex code. Recovering a batch of elements from the columns of a generator matrix for the $[(q^k -1)/(q-1),k,q^{k-1}]$ simplex code
was considered in the past~\cite{HKRS,WKCB17,YoYa22,ZEY19}, but only for binary alphabet and not
for subspaces. This work is a generalization in this direction.

Let $\F_q^k$ be the vector space of dimension $k$ over $\F_q$,
the finite field with $q$ elements, $q \in \dP$, where $\dP$~is the set of prime powers.
Clearly, $\F_q^k$ consists of $q^k$ vectors of length $k$ and it contains $\frac{q^k-1}{q-1}$ distinct
1-subspaces. In general $\F_q^k$ contains $\sbinom{k}{\ell}$ distinct $\ell$-subspaces, where
\begin{small}
$$
\sbinomq{k}{\ell}
\triangleq \frac{(q^k-1)(q^{k-1}-1) \cdots
(q^{k-\ell+1}-1)}{(q^\ell-1)(q^{\ell-1}-1) \cdots (q-1)}
$$
\end{small}
is the well-known \emph{$q$-binomial coefficient}, known also as the \emph{Gaussian coefficient}.
To distinguish between the set of $q^k$ vectors of $\F_q^k$
and its 1-subspaces we will denote the set of all 1-subspaces
of~$\F_q^k$ by $V_q^k$. When $q=2$, $V_q^k$ corresponds to the nonzero elements of $\F_q^k$ and hence we will not
distinguish between them. We note that each 1-subspace of $V_q^k$ can be represented by $q-1$ elements of~$\F_q^k$.
Two such representations can be obtained from each other using multiplication by an element of~$\F_q$.
Such representation for a 1-subspace of $V_q^k$ can be viewed as a projective point in the
projective geometry PG($k-1,q$). Note, that a $d$-subspace in $\F_q^k$ is a $(d-1)$-subspace in PG$(k-1,q)$.
Using this representation the generator matrix of the $[(q^k -1)/(q-1),k,q^{k-1}]$ simplex code contains
the projective points of the projective geometry PG($k-1,q$).
Finally, we will use the isomorphism between the finite field $\F_{q^n}$ in a way that sometimes we will consider
the information in the servers as elements in the vector space $\F_q^n$ and
sometimes the information in the servers will be presented by elements from the finite field $\F_{q^n}$.
But, there will never be a mixed notation in the representation.

In a distributed storage system, a server stores an $\ell$-subspace of~$\F_q^k$ and a typical user
is interested in a $d$-subspace $U$ of~$\F_q^k$. The information of this $d$-subspace is usually stored
in sub-packets across several servers. It is quite common that some servers, which hold important sub-packets
of $U$, are not available. In this case, it is required that it will be possible to recover~$U$ from other sets
of servers. Hence, it is required that for some~$n$, each $d$-subspace will be recovered
from $n$ pairwise disjoint sets of servers.

Assume now, that each server stores a different 1-subspace of $\F_q^k$, i.e., an element of $V_q^k$.
Let $N_q(k,d)$ be the maximum number of such pairwise disjoint sets of servers that will be able to recover any given $d$-subspace $U$.
W.l.o.g. (without loss of generality), in some places, we will assume that $U$ is $\F_q^d$ (in $\F_q^k$) since
$\F_q^k$ can be written as $W + U$ for any $d$-subspace $U \in \F_q^k$ and some $(k-d)$-subspace~$W$ of $\F_q^k$.
The $(k-d)$-subspace $W$ can be taken w.l.o.g. and abuse of notation as $\F_q^{k-d}$. In this case, an element of $\F_q^k$
can be written as $(x,y)$, where $x \in W$ and $y \in U$. It can be also written as $x+y$. Before we start to discuss the
general results in the following sections, let us consider a simple scenario that could be very useful in the sequel.
\begin{theorem}
\label{thm:basic_parts}
The set $V_q^d$ contains $n=\left\lfloor \frac{q^d -1}{d(q-1)} \right\rfloor$ disjoint recovery sets for the $d$-subspace $V_q^d$.
\end{theorem}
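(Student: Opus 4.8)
The plan is to realize the $N := (q^d-1)/(q-1)$ one-dimensional subspaces of $\F_q^d$ as a cyclic list of length $N$ and then to cut that list into consecutive blocks of size $d$, each block being a basis. Using the field isomorphism allowed in the preliminaries, I would regard $\F_q^d$ as the field $\F_{q^d}$ and fix a primitive element $\alpha \in \F_{q^d}$, so that $\alpha$ has multiplicative order $q^d-1 = (q-1)N$. The one algebraic fact I would invoke is standard: a primitive element of $\F_{q^d}$ has degree $d$ over $\F_q$, hence $1, \alpha, \alpha^2, \dots, \alpha^{d-1}$ are $\F_q$-linearly independent and form a basis of $\F_{q^d}$ over $\F_q$; multiplying through by the unit $\alpha^{j}$, the tuple $\alpha^{j}, \alpha^{j+1}, \dots, \alpha^{j+d-1}$ is likewise an $\F_q$-basis for every integer $j$.

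Next I would pass to one-dimensional subspaces. Since $\alpha$ has order $(q-1)N$, one has $\Span{\alpha^{j}} = \Span{\alpha^{j'}}$ precisely when $\alpha^{j-j'} \in \F_q^\ast$, i.e. precisely when $N \mid (j-j')$; consequently $\Span{\alpha^{0}}, \Span{\alpha^{1}}, \dots, \Span{\alpha^{N-1}}$ are the $N$ distinct members of $V_q^d$. Putting $n := \lfloor N/d \rfloor = \lfloor (q^d-1)/(d(q-1)) \rfloor$, I would define, for $t = 1, 2, \dots, n$,
\[
A_t \ := \ \bigl\{\, \Span{\alpha^{(t-1)d}},\ \Span{\alpha^{(t-1)d+1}},\ \dots,\ \Span{\alpha^{td-1}} \,\bigr\}.
\]
All exponents appearing here lie in $\{0,1,\dots,N-1\}$ because $nd \le N$, so the blocks $A_1,\dots,A_n$ have pairwise disjoint index sets and are therefore pairwise disjoint subsets of $V_q^d$; and by the basis fact above the $d$ elements $\alpha^{(t-1)d},\dots,\alpha^{td-1}$ span $\F_{q^d}=\F_q^d$, so each $A_t$ is a recovery set for the $d$-subspace $V_q^d$. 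This produces the desired $n$ pairwise disjoint recovery sets. (If one also wants the matching bound, it is immediate by counting: a recovery set for a $d$-subspace contains at least $d$ points of $V_q^d$, and $|V_q^d| = N$, so no more than $\lfloor N/d\rfloor$ of them can be pairwise disjoint.)

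The construction is short, and the only delicate point is the bookkeeping at the ends of the cyclic list. It is tempting to take all $N$ cyclically consecutive windows of length $d$, but a window that runs past the exponent $N-1$ corresponds to powers $\alpha^{j},\dots,\alpha^{N-1},\alpha^{N},\dots$, where $\alpha^{N}$ is a nonzero scalar of $\F_q^\ast$ rather than $\alpha^{0}$, so the ``$d$ consecutive powers form a basis'' reasoning need not apply there. I avoid this entirely by using only the $n$ non-overlapping blocks that fit inside $\{0,\dots,N-1\}$ — these already attain the optimum $\lfloor N/d\rfloor$ — so the main, and essentially the only, obstacle is to verify $nd\le N$ and that distinct blocks yield genuinely distinct subspaces, both of which follow at once from the collision criterion $N \mid (j-j')$.
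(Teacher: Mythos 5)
Your proposal is correct and follows essentially the same route as the paper: fix a primitive element $\alpha$ of $\F_{q^d}$, observe that any $d$ consecutive powers form an $\F_q$-basis and that $\Span{\alpha^{j}}=\Span{\alpha^{j'}}$ only when $(q^d-1)/(q-1)$ divides $j-j'$, and cut the list of exponents $0,\dots,N-1$ into $\lfloor N/d\rfloor$ consecutive blocks $S_i=\{\alpha^{(i-1)d},\dots,\alpha^{id-1}\}$. You are merely more explicit than the paper about why consecutive powers are independent, why the blocks are genuinely disjoint as subsets of $V_q^d$, and about the matching counting upper bound.
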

\begin{proof}
Let $\alpha$ be a primitive element of $\F_{q^d}$. Any $d$ consecutive powers of $\alpha$ are linearly independent
and any $\frac{q^d-1}{q-1}$ consecutive powers of $\alpha$ are contained in a distinct set of 1-subspaces.
Hence, if we set $S_i = \{\alpha^{(i-1)d},\alpha^{(i-1)d+1},\ldots, \alpha^{id-1}\}$,
we have that $\langle S_i\rangle = \F_q^d$ for any ${1 \leq i \leq n}$.
\end{proof}

Since any two $d$-subspaces (possibly of two different spaces with different dimensions) are isomorphic we have the following consequence.
\begin{corollary}
\label{cor:basic_parts}
If $U$ is the $d$-subspace to be recovered from $U$, then the 1-subspaces of~~$U$ can be partitioned into
$\left\lfloor \frac{q^d -1}{d(q-1)} \right\rfloor$ recovery sets for $U$ and possibly one more subset with less than $d$ 1-subspaces.
\end{corollary}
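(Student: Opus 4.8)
The plan is to transfer the explicit construction in \Tref{thm:basic_parts} to an arbitrary $d$-subspace $U$ by means of a linear isomorphism, after first upgrading that construction from a family of disjoint recovery sets to an honest partition. Fix a primitive element $\alpha$ of $\F_{q^d}$ as in the proof of \Tref{thm:basic_parts}. Since $\alpha^{(q^d-1)/(q-1)}$ has multiplicative order $q-1$ and therefore generates $\F_q^*$, two powers $\alpha^i$ and $\alpha^j$ span the same $1$-subspace if and only if $i\equiv j \pmod{(q^d-1)/(q-1)}$; equivalently, as already noted there, any $(q^d-1)/(q-1)$ consecutive powers of $\alpha$ run through all $1$-subspaces of $\F_q^d$ exactly once. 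Hence the sets $S_1,\dots,S_n$ with $n=\floor{(q^d-1)/(d(q-1))}$ use the first $nd$ of these $1$-subspaces, and the remaining $\tfrac{q^d-1}{q-1}-nd$ of them form a single leftover set whose size is strictly less than $d$ by the definition of the floor. This settles the claim for $U=V_q^d$.

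Next I would handle a general $d$-subspace $U$ of $\F_q^k$. Because $U$ and $\F_q^d$ are both $d$-dimensional vector spaces over $\F_q$, there is an $\F_q$-linear isomorphism $\phi\colon \F_q^d \to U$, which induces a bijection $\bar\phi$ from $V_q^d$ onto the set of $1$-subspaces of $U$, with $\Span{\bar\phi(S)}=\phi(\Span{S})$ for every collection $S$ of $1$-subspaces of $\F_q^d$. In particular $\bar\phi$ preserves disjointness and carries each recovery set for $V_q^d$ to a recovery set for $U$ that uses only $1$-subspaces of $U$ (hence of $\F_q^k$). Applying $\bar\phi$ to the partition of $V_q^d$ obtained above yields a partition of the $1$-subspaces of $U$ into $\floor{(q^d-1)/(d(q-1))}$ recovery sets for $U$, together with at most one further subset of fewer than $d$ elements, which is exactly the assertion.

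The whole argument is a bookkeeping one, and the only point that deserves attention is the step from ``disjoint recovery sets'' to a genuine partition, that is, checking that at most $d-1$ of the $(q^d-1)/(q-1)$ one-dimensional subspaces remain uncovered by $S_1,\dots,S_n$. This rests on the observation that $(q^d-1)/(q-1)$ consecutive powers of $\alpha$ exhaust all $1$-subspaces, combined with the elementary inequality $nd \le (q^d-1)/(q-1) < (n+1)d$. Once this is in hand, the existence of $\phi$ and the compatibility of $\bar\phi$ with spans and with disjointness are immediate from linearity, and no further computation is needed.
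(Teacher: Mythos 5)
Your proposal is correct and follows essentially the same route as the paper, which derives the corollary from \Tref{thm:basic_parts} by observing that any two $d$-subspaces are isomorphic. The extra care you take in verifying that the sets $S_1,\dots,S_n$ together with the leftover of size less than $d$ genuinely partition the $1$-subspaces (via the fact that $\alpha^{(q^d-1)/(q-1)}$ generates $\F_q^*$) is a worthwhile detail the paper leaves implicit, but it does not change the argument.
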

\begin{corollary}
For each $q \geq 2$ and $k \geq 2$ we have
$$
N_q (k,k) = \left\lfloor \frac{q^k-1}{k(q-1)} \right\rfloor~.
$$
\end{corollary}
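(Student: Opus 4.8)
The plan is to prove the equality by matching a lower bound coming from the construction already in hand and an upper bound coming from a simple counting argument.

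For the lower bound I would invoke Corollary~\ref{cor:basic_parts} (equivalently Theorem~\ref{thm:basic_parts}) with $d=k$ and $U=\F_q^k$. It yields a partition of the $1$-subspaces of $\F_q^k$ into $\floor{(q^k-1)/(k(q-1))}$ blocks, each of which spans $\F_q^k$, possibly together with one leftover block of size less than $k$. The spanning blocks are pairwise disjoint by construction and each is a recovery set for $U=\F_q^k$, so $N_q(k,k)\ge \floor{(q^k-1)/(k(q-1))}$.

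For the upper bound the key observation is that every recovery set for the full space $\F_q^k$ must span $\F_q^k$, and hence must consist of at least $k$ distinct $1$-subspaces, since fewer than $k$ of them cannot span a $k$-dimensional vector space. There are exactly $(q^k-1)/(q-1)$ distinct $1$-subspaces in $V_q^k$, and any family of pairwise disjoint recovery sets consists of pairwise disjoint subsets of $V_q^k$ each of size at least $k$; therefore the number of such sets is at most $\floor{\bigl((q^k-1)/(q-1)\bigr)/k} = \floor{(q^k-1)/(k(q-1))}$. Combining the two bounds gives the claimed value of $N_q(k,k)$.

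I do not expect a genuine obstacle here: the statement is essentially a direct corollary of the preceding lemma together with the elementary fact that a spanning set of a $k$-dimensional space has at least $k$ members. The only mild point worth recording is that a would-be recovery set of size less than $k$ cannot span $\F_q^k$, so the ``leftover'' block in the partition does not contribute, and consequently both the lower and the upper bound are exactly the same floor expression.
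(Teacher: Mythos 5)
Your proof is correct and is exactly the argument the paper intends: the lower bound is Theorem~\ref{thm:basic_parts} with $d=k$, and the upper bound follows because any recovery set for the whole space must span $\F_q^k$ and hence contain at least $k$ of the $(q^k-1)/(q-1)$ available $1$-subspaces. The paper states this corollary without written proof, so your write-up simply makes the implicit reasoning explicit.
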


We start by stating the most simple upper bound and the most simple lower bound on $N_q(k,d)$.
First, the upper bound will be derived.
\begin{theorem}
\label{thm:gen_upper}
If $q \in \dP$ and $k \geq d$ is a positive integer, then
$$
N_q (k,d) \leq \left\lfloor \frac{q^d-1}{d(q-1)} \right\rfloor + \left\lfloor \frac{\ell (q-1) + q^k - q^d}{(d+1)(q-1)} \right\rfloor ,
$$
where $\ell$ is the reminder from the division of $\frac{q^d -1}{q-1}$ by $d$.
\end{theorem}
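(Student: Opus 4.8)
The plan is to bound $n:=N_q(k,d)$ by a capacity argument that sorts the disjoint recovery sets into those living inside $U$ and those that do not. Assume $U=\F_q^d$, recall $|V_q^k|=\frac{q^k-1}{q-1}$ and $|V_q^d|=\frac{q^d-1}{q-1}$, and write $\frac{q^d-1}{q-1}=dm+\ell$ where $m=\left\lfloor\frac{q^d-1}{d(q-1)}\right\rfloor$ and $0\le\ell<d$, so that $\ell$ is exactly the remainder appearing in the statement. The first ingredient is a size dichotomy for recovery sets. Every $1$-subspace of $\F_q^k$ either lies in $U$ or meets $U$ trivially. If a recovery set $S$ of $U$ satisfies $S\subseteq V_q^d$, then $\langle S\rangle=U$ is $d$-dimensional, so $|S|\ge d$. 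If instead some $\langle v\rangle\in S$ has $v\notin U$, then $|S|\ge d+1$: were $|S|=d$, the space $\langle S\rangle$ would be at most $d$-dimensional yet contain the $d$-dimensional space $U$, forcing $\langle S\rangle=U$ and contradicting $v\in\langle S\rangle$.

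Next I would count. Given $n$ pairwise disjoint recovery sets, let $a$ be the number contained in $V_q^d$ and $b=n-a$ the number using at least one $1$-subspace outside $U$. The $a$ sets of the first kind are pairwise disjoint and each consumes at least $d$ of the $|V_q^d|$ one-subspaces of $U$, hence $ad\le|V_q^d|$, i.e.\ $a\le m$. Counting all one-subspaces used (the $n$ sets being pairwise disjoint subsets of $V_q^k$) gives $ad+b(d+1)\le|V_q^k|$, so $b\le\frac{|V_q^k|-ad}{d+1}$ and therefore $n=a+b\le\frac{a+|V_q^k|}{d+1}$. The right-hand side is increasing in $a$, so using $a\le m$ together with $|V_q^k|=dm+\ell+\frac{q^k-q^d}{q-1}$ we obtain $n\le\frac{m+|V_q^k|}{d+1}=m+\frac{\ell(q-1)+q^k-q^d}{(d+1)(q-1)}$; since $n$ and $m$ are integers, the claimed bound follows by taking floors.

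The step I expect to be the real obstacle is recognizing that the two capacity constraints --- only $\frac{q^d-1}{q-1}$ one-subspaces lie inside $U$, and only $\frac{q^k-1}{q-1}$ lie in $\F_q^k$ altogether --- must be used together rather than separately: because a size-$d$ recovery set is ``cheaper'' than a size-$(d+1)$ one, the extremal configuration saturates the inside-$U$ recovery sets first, and this is precisely the monotonicity in $a$ that pins the bound down to the stated floor expression. Everything else --- the $d$-versus-$(d+1)$ dichotomy and the bookkeeping with the remainder $\ell$ --- is elementary.
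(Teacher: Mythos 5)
Your proof is correct and follows essentially the same route as the paper's: the size-$d$ versus size-$(d+1)$ dichotomy, the cap of $\left\lfloor \frac{q^d-1}{d(q-1)} \right\rfloor$ on recovery sets contained in $U$, and the count of the remaining $\ell(q-1)+q^k-q^d$ nonzero elements. The only difference is that you make explicit, via the monotonicity in $a$, the trade-off between size-$d$ and larger recovery sets that the paper's proof leaves implicit --- a welcome bit of extra rigor, not a different argument.
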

\begin{proof}
Let $U$ be the $d$-subspace which has to be recovered.
Recovery sets of size $d$ can be obtained only from $d$ linearly independent 1-subspaces of $U$. By Theorem~\ref{thm:basic_parts},
there are at most $\left\lfloor \frac{q^d-1}{d(q-1)} \right\rfloor$ such recovery sets.
The number of remaining nonzero elements from $U$ is $\ell (q-1)$ and the number of nonzero elements in $\F_q^k$ which
are not contained in $U$ is $q^k - q^d$.
All the remaining recovery sets are from these elements and they must be of size at least $d+1$ which implies the claim of the theorem.
\end{proof}
The bound of Theorem~\ref{thm:gen_upper} will be improved when we are more precise in the number of recovery sets
of size $d+1$ and we will have some approximation on the number of recovery sets of larger size.

We continue to derive a related simple lower bound. For this bound, we need the following lemma and its consequence.
\begin{lemma}
\label{lem:coeff_sum_not0}
If $f(z)= z^d + a_{d-1}z^{d-1} + \cdots + a_1z + a_0$ is a primitive polynomial over $\F_q$, then
$$
\sum_{i=0}^{d-1} a_i  +1 \neq 0.
$$
\end{lemma}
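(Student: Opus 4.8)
The plan is to observe first that the expression in question is nothing but a polynomial evaluation. Substituting $z=1$ into $f(z)=z^d+a_{d-1}z^{d-1}+\cdots+a_1z+a_0$ gives $f(1)=1+a_{d-1}+\cdots+a_1+a_0=\sum_{i=0}^{d-1}a_i+1$. Hence the assertion $\sum_{i=0}^{d-1}a_i+1\neq 0$ is exactly the assertion that $1$ is not a root of $f$, and it is this reformulation that does all the work.

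To finish, I would invoke the standard fact that a primitive polynomial over $\F_q$ is in particular irreducible over $\F_q$. For $d\geq 2$ an irreducible polynomial of degree $d$ has no root in $\F_q$ whatsoever, and since $1\in\F_q$ this gives $f(1)\neq 0$ immediately. Alternatively, and without appealing to irreducibility, one can argue straight from the definition of primitivity: if $\alpha$ is a root of $f$ then $\alpha$ generates $\F_{q^d}^{*}$ and therefore has multiplicative order $q^d-1$; were $f(1)=0$, the element $1$ would be such a root, forcing $q^d-1=1$. Either route yields the conclusion outside the single degenerate case $(q,d)=(2,1)$, which does not arise in the applications of this lemma (where $d\geq 2$).

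The only point that needs any care is the purely bookkeeping one of not overlooking that boundary case; the substance of the proof is the one-line identity $\sum_{i=0}^{d-1}a_i+1=f(1)$ together with the elementary structure theory of primitive polynomials. I therefore do not expect a genuine obstacle — recognizing that the quantity equals $f(1)$ is the key, and essentially the only, step.
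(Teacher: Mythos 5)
Your proof is correct and takes the same route as the paper: both identify $\sum_{i=0}^{d-1}a_i+1$ with $f(1)$ and conclude that $1$ would otherwise be a root of the primitive polynomial $f$. You are in fact slightly more careful than the paper, which leaves the justification of ``a contradiction'' implicit and does not flag the degenerate case $(q,d)=(2,1)$ where $z+1$ is primitive and the statement fails.
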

\begin{proof}
If $\sum_{i=0}^{d-1} a_i  +1 =0$, then $f(1)=1+\sum_{i=0}^{d-1} a_i =0$ and hence 1 is a root of $f(z)$, a contradiction.
\end{proof}

For $x_1,x_2,\ldots, x_\ell \in \F_{q^n}$ and $\gamma \in \F_q$ let $\gamma (x_1,x_2,\ldots,x_\ell)=(\gamma x_1,\gamma x_2,\ldots,\gamma x_\ell)$.

\begin{corollary}
\label{cor:primitiveRecover}
If $\alpha$ is a root of the primitive polynomial $f(z)= z^d + a_{d-1}z^{d-1} + \cdots + a_1z + a_0$ over~$\F_q$, then the
$d$ vectors in
$$
R \triangleq \{(x,\alpha^i),(x,\alpha^{i+1}),\ldots,(x,\alpha^{i+d}) \}
$$
form a recovery set for $\F_q^d$ in $\F_q^k$, where $x \in \F_{q^{k-d}} \setminus \{ 0 \}$. (note that $\alpha$ is a primitive element in $\F_{q^d}$.).
In other words, $R$ is a recovery set for $\F_q^d$.
\end{corollary}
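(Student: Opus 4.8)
The plan is to work inside the decomposition $\F_q^k=W\oplus U$ with $W=\F_q^{k-d}$ and $U=\F_q^{d}$, writing a generic vector as a pair $(w,y)$ with $w\in W$ and $y\in U$, and to show directly that the span $\langle R\rangle$ of the vectors listed in $R$ contains all of $U=\{(0,y):y\in\F_q^{d}\}$. The observation that makes this clean is that every vector of $R$ has the \emph{same} first coordinate $x$: an arbitrary linear combination of them equals $\sum_{j=0}^{d}c_j(x,\alpha^{i+j})=\bigl((\sum_{j}c_j)\,x,\ \sum_{j}c_j\alpha^{i+j}\bigr)$, and since $x\neq 0$ this lies in $U$ exactly when $\sum_{j=0}^{d}c_j=0$. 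Hence $\langle R\rangle\cap U$ is precisely the image of the hyperplane $H=\{(c_0,\dots,c_d)\in\F_q^{d+1}:\sum_{j}c_j=0\}$ under the $\F_q$-linear map $\psi\colon(c_0,\dots,c_d)\mapsto\alpha^{i}\sum_{j=0}^{d}c_j\alpha^{j}$, and the corollary reduces to proving $\psi(H)=U$.

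Next I would strip off the invertible factor $\alpha^{i}$ — multiplication by $\alpha^{i}$ is an $\F_q$-linear bijection of $\F_{q^{d}}$ — so that it suffices to show the map $\varphi\colon(c_0,\dots,c_d)\mapsto\sum_{j=0}^{d}c_j\alpha^{j}$ carries $H$ onto $\F_{q^{d}}=\F_q^{d}$. Since $f$ is primitive, hence irreducible of degree $d$, the powers $1,\alpha,\dots,\alpha^{d-1}$ are $\F_q$-independent, so $\varphi$ maps the $(d+1)$-dimensional space $\F_q^{d+1}$ onto the $d$-dimensional space $\F_q^{d}$ and therefore has a one-dimensional kernel; and from $\alpha^{d}+a_{d-1}\alpha^{d-1}+\cdots+a_1\alpha+a_0=0$ this kernel is spanned by $(a_0,a_1,\dots,a_{d-1},1)$. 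A linear map restricted to a hyperplane still surjects onto the full target iff its kernel is not contained in that hyperplane, which here reads $(a_0,\dots,a_{d-1},1)\notin H$, i.e. $\sum_{j=0}^{d-1}a_j+1\neq 0$; this is exactly Lemma~\ref{lem:coeff_sum_not0}. A one-line dimension count then finishes the reduction: $\ker\varphi\cap H=\{0\}$ forces the $d$-dimensional $H$ to map isomorphically onto $\F_q^{d}$. Consequently $\langle R\rangle\supseteq U$, i.e. $R$ is a recovery set for $\F_q^{d}$.

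I expect the only genuine obstacle to be the modest bookkeeping in this last reduction: recognizing that the ``$+1$'' appearing in Lemma~\ref{lem:coeff_sum_not0} is precisely the statement that the relation vector $(a_0,\dots,a_{d-1},1)$ escapes the hyperplane $\sum_{j}c_j=0$, and seeing why the extra vector $(x,\alpha^{i+d})$ is essential — with only the $d$ vectors $(x,\alpha^{i}),\dots,(x,\alpha^{i+d-1})$ the corresponding coefficient space would be a $(d-1)$-dimensional hyperplane of $\F_q^{d}$, whose image could never fill $U$. The remaining ingredients — $\F_q$-independence of $d$ consecutive powers of a primitive element, the identification $\langle R\rangle\cap U=\psi(H)$, and factoring out $\alpha^{i}$ — are routine. (It is also worth noting that, notwithstanding the phrase ``$d$ vectors'' in the statement, the displayed set $R$ actually consists of the $d+1$ vectors with exponents $i,i+1,\dots,i+d$.)
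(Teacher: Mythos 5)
Your argument is correct and is essentially the paper's proof in dual form: the explicit combination $(x,\alpha^{i+d})+\sum_{j=0}^{d-1}a_j(x,\alpha^{i+j})=(\gamma x,\bfzero)$ with $\gamma=1+\sum_j a_j\neq 0$ that the paper uses to recover $(x,\bfzero)$ and hence each $(\bfzero,\alpha^{i+j})$ is exactly your kernel generator $(a_0,\dots,a_{d-1},1)$ escaping the zero-sum hyperplane, and both arguments rest on the same decomposition, Lemma~\ref{lem:coeff_sum_not0}, and the $\F_q$-independence of consecutive powers of $\alpha$. Your side remark that $R$ actually consists of $d+1$ vectors (not $d$ as written) is also right, and the paper's proof implicitly uses all $d+1$ of them.
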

\begin{proof}
Since $\alpha$ is a root of $f(z)$, it follows that
$$
\alpha^d + a_{d-1}\alpha^{d-1} + \cdots + a_1 \alpha + a_0 =0.
$$
and hence
$$
\alpha^{i+d} + a_{d-1}\alpha^{i+d-1} + \cdots + a_1 \alpha^{i+1} + a_0 \alpha^i =0, ~~~ \text{for~each} ~ i \geq 0.
$$
On the other hand, since by Lemma~\ref{lem:coeff_sum_not0}, we have that $\sum_{j=0}^{d-1} a_j  +1 \neq 0$, it follows that
$$
(x,\alpha^{i+d}) + a_{d-1}(x,\alpha^{i+d-1}) + \cdots + a_1 (x,\alpha^{i+1}) + a_0 (x,\alpha^i)
$$
$$
= (x,\alpha^{i+d}) + \sum_{j=0}^{d-1} a_j(x,\alpha^{i+j}) = ( \left(\sum_{i=0}^{d-1} a_i  +1 \right) x, \bfzero)  =(\gamma x, \bfzero) ,
$$
for some $\gamma \in \F_q \setminus \{ 0 \}$.
Thus, $(x,\bfzero)$ can be recovered from the vectors in $R$ and therefore
\begin{equation}
\label{eq:d_independent}
(\bfzero,\alpha^i), (\bfzero,\alpha^{i+1}), \ldots , (\bfzero,\alpha^{i+d-1}), (\bfzero,\alpha^{i+d})
\end{equation}
can be recovered too. Since each $d$ consecutive elements in Eq.~(\ref{eq:d_independent}) are linearly independent, it follows that
$$
\{ (x,\alpha^i),(x,\alpha^{i+1}),\ldots,(x,\alpha^{i+d}) \}
$$
is a recovery set for $\F_q^d$ in $\F_q^k$.
\end{proof}

The last part of the proof of Corollary~\ref{cor:primitiveRecover} can be used to prove the following lemma.
\begin{lemma}
\label{lem:simple_recover}
If $\alpha$ is a primitive element in $\F_{q^d}$ and $x \in \F_{q^{k-d}} \setminus \{ 0 \}$, then
$$
R \triangleq \{(x,\bfzero),(x,\alpha^i),(x,\alpha^{i+1}),\ldots,(x,\alpha^{i+d-1}) \}
$$
is a recovery set for $\F_q^d$ in $\F_q^k$, where $x \in \F_{q^{k-d}} \setminus \{ 0\}$.
\end{lemma}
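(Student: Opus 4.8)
The plan is to verify the definition of a recovery set directly, namely that $\F_q^d\subseteq\langle R\rangle$, by stripping the first ($\F_{q^{k-d}}$-)coordinate off the ``mixed'' vectors of $R$ using the pure vector $(x,\bfzero)$ that $R$ already contains, and then invoking the linear independence of consecutive powers of a primitive element.

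First I would observe that, since $(x,\bfzero)\in R$, for each $j$ with $0\le j\le d-1$ the vector
$$
(x,\alpha^{i+j})-(x,\bfzero)=(\bfzero,\alpha^{i+j})
$$
is an $\F_q$-linear combination of the two members $(x,\alpha^{i+j})$ and $(x,\bfzero)$ of $R$, hence lies in $\langle R\rangle$. This is exactly the manipulation carried out at the end of the proof of Corollary~\ref{cor:primitiveRecover}; the only difference is that there $(x,\bfzero)$ had to be reconstructed from the primitive-polynomial identity, whereas here it is handed to us as one of the servers of $R$. Next, since $\alpha$ is a primitive element of $\F_{q^d}$, the $d$ consecutive powers $\alpha^i,\alpha^{i+1},\ldots,\alpha^{i+d-1}$ are linearly independent over $\F_q$ (the fact already used in the proof of Theorem~\ref{thm:basic_parts}), so the $d$ vectors $(\bfzero,\alpha^i),(\bfzero,\alpha^{i+1}),\ldots,(\bfzero,\alpha^{i+d-1})$, all of which lie in $\langle R\rangle$, span $\F_q^d$ inside $\F_q^k$. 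Therefore $\F_q^d\subseteq\langle R\rangle$ and $R$ is the required recovery set.

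I do not expect any genuine obstacle: Lemma~\ref{lem:simple_recover} is the ``free'' companion of Corollary~\ref{cor:primitiveRecover}, with the algebraic reconstruction of $(x,\bfzero)$ replaced by the trivial remark that $(x,\bfzero)\in R$. The one item worth spelling out is that $R$ is a legitimate set of $d+1$ \emph{distinct} $1$-subspaces: all $d+1$ vectors of $R$ share the same nonzero first coordinate $x$, so two of them represent the same $1$-subspace only if they are equal, which fails because their second coordinates $\bfzero,\alpha^i,\ldots,\alpha^{i+d-1}$ are pairwise distinct (the $d$ powers are distinct since $d-1$ is smaller than the order $q^d-1$ of $\alpha$, and $\bfzero$ equals none of them). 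This distinctness is what makes $R$ usable as a size-$(d+1)$ block in the later packings into pairwise disjoint recovery sets.
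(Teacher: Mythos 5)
Your proof is correct and is essentially the proof the paper intends: the paper does not write out a separate argument for Lemma~\ref{lem:simple_recover} but states that ``the last part of the proof of Corollary~\ref{cor:primitiveRecover} can be used,'' i.e., exactly your step of peeling off $(x,\bfzero)$ to obtain $(\bfzero,\alpha^{i+j})$ for $0\le j\le d-1$ and then invoking the linear independence of $d$ consecutive powers of $\alpha$. Your added remark on the distinctness of the $d+1$ elements of $R$ is a harmless (and correct) bonus.
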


\begin{theorem}
\label{thm:tight}
If $q \in \dP$ and $k \geq d$ is a positive integer, then
\begin{equation}
\label{eq:bound1}
N_q (k,d) \geq \left\lfloor \frac{q^d-1}{d(q-1)} \right\rfloor + \left\lfloor \frac{q^d}{d+1} \right\rfloor \frac{q^{k-d}-1}{q-1}.
\end{equation}
Furthermore, if $d+1$ divides $q^d$, then there is equality in Eq.~(\ref{eq:bound1}).
\end{theorem}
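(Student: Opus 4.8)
The plan is to prove the lower bound of Eq.~(\ref{eq:bound1}) by an explicit construction of pairwise disjoint recovery sets, and then — under the hypothesis $d+1\mid q^d$ — to match it with a weighting (fractional‑covering) upper bound. Throughout write $\F_q^k = W\oplus U$ with $W=\F_q^{k-d}$ and $U=\F_q^d$, so every $1$‑subspace is spanned by a vector $(x,y)$ with $x\in W$, $y\in U$.

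\textbf{Lower bound.} First, by Corollary~\ref{cor:basic_parts} the $1$‑subspaces lying inside $U$ already contain $\floor{\frac{q^d-1}{d(q-1)}}$ pairwise disjoint recovery sets for $U$. Next, for each of the $\frac{q^{k-d}-1}{q-1}$ one‑dimensional subspaces $\Span{x_0}$ of $W$ form the ``block'' $B_{x_0}=\{\Span{(x_0,y)} : y\in U\}$, which consists of exactly $q^d$ distinct $1$‑subspaces (since $(x_0,y)$ and $(x_0,y')$ are proportional only when $y=y'$), and which is disjoint from $U$ and from $B_{x_0'}$ whenever $\Span{x_0}\ne\Span{x_0'}$. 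Identifying $U$ with $\F_{q^d}$ and letting $\alpha$ be a primitive element, the nonzero $y$'s are $\alpha^0,\dots,\alpha^{q^d-2}$. By Corollary~\ref{cor:primitiveRecover} every run $\{(x_0,\alpha^i),\dots,(x_0,\alpha^{i+d})\}$ of $d+1$ consecutive powers is a recovery set for $U$ lying inside $B_{x_0}$, and by Lemma~\ref{lem:simple_recover} so is $\{(x_0,\bfzero),(x_0,\alpha^i),\dots,(x_0,\alpha^{i+d-1})\}$. Chopping the $q^d-1$ powers into $\floor{\frac{q^d-1}{d+1}}$ disjoint consecutive runs of length $d+1$ and — precisely in the case $d+1\mid q^d$, where exactly $d$ powers remain — adjoining those $d$ leftover powers to $\Span{(x_0,\bfzero)}$ via Lemma~\ref{lem:simple_recover}, produces $\floor{\frac{q^d}{d+1}}$ pairwise disjoint recovery sets inside $B_{x_0}$. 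Summing the contributions of $U$ and of all blocks (which are pairwise disjoint families of $1$‑subspaces) gives Eq.~(\ref{eq:bound1}).

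\textbf{Upper bound when $d+1\mid q^d$.} Shrinking each recovery set in an optimal disjoint family to a minimal sub‑recovery‑set preserves disjointness and the number of sets, so we may assume every recovery set $R$ is minimal; then $R$ is linearly independent (a dependent recovery set has a removable element), hence $|R|=\dim\Span{R}=:d+t$ with $t\ge 0$. Give weight $\frac1d$ to each $1$‑subspace contained in $U$ and weight $\frac1{d+1}$ to each $1$‑subspace not contained in $U$. If $R$ has $u$ of its members inside $U$, then $u\le d$, and when $t=0$ necessarily $\Span{R}=U$ so $u=d$; in every case $u\ge d(1-t)$, which is exactly $\frac{u}{d}+\frac{d+t-u}{d+1}\ge 1$, i.e.\ the weight of $R$ is at least $1$. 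Therefore the family has at most $\frac1d\cdot\frac{q^d-1}{q-1}+\frac1{d+1}\cdot\frac{q^k-q^d}{q-1}$ members. When $d+1\mid q^d$ the second summand equals the \emph{integer} $\floor{\frac{q^d}{d+1}}\cdot\frac{q^{k-d}-1}{q-1}$, so the bound may be rewritten as $\floor{\frac{q^d-1}{d(q-1)}}+\floor{\frac{q^d}{d+1}}\cdot\frac{q^{k-d}-1}{q-1}$, matching Eq.~(\ref{eq:bound1}); together with the lower bound this gives equality.

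\textbf{Anticipated obstacle.} The lower‑bound construction is routine given Corollary~\ref{cor:primitiveRecover} and Lemma~\ref{lem:simple_recover}; the only care needed is the bookkeeping that yields the extra recovery set per block exactly when $d+1\mid q^d$. The delicate step is the upper bound: one must choose a weighting (equivalently, a counting scheme) that is simultaneously valid for \emph{all} minimal recovery sets — including ``mixed'' ones using some $1$‑subspaces of $U$ and some outside it — and tight, and then observe that the divisibility $d+1\mid q^d$ is precisely what makes the off‑$U$ contribution an integer, so that the fractional bound rounds down to the stated value rather than losing a unit.
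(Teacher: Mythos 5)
Your proposal is correct. The lower‑bound half is essentially the paper's own construction: split the $1$-subspaces of $V_q^k$ into those inside $U$ and the $\frac{q^{k-d}-1}{q-1}$ blocks indexed by the $1$-subspaces of $W$, take $\left\lfloor \frac{q^d-1}{d(q-1)}\right\rfloor$ recovery sets from $U$ via Theorem~\ref{thm:basic_parts} and $\left\lfloor \frac{q^d}{d+1}\right\rfloor$ from each block via Corollary~\ref{cor:primitiveRecover} and Lemma~\ref{lem:simple_recover}; your bookkeeping showing that the extra set using the zero column is needed exactly when $d+1$ divides $q^d$ is the same observation the paper leaves implicit. Where you genuinely diverge is the equality claim: the paper's proof of Theorem~\ref{thm:tight} says nothing about it and relies on the matching counting bound of Theorem~\ref{thm:gen_upper}, whose argument tacitly assumes that maximizing the number of size-$d$ recovery sets first is optimal. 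Your weighting argument (reduce to minimal, hence linearly independent, recovery sets; assign weight $1/d$ inside $U$ and $1/(d+1)$ outside; verify $u+dt\geq d$ so every recovery set has weight at least $1$) is self-contained, treats mixed recovery sets uniformly, and amounts to exhibiting a feasible dual solution of the kind the paper only constructs later, in Section~\ref{sec:Integer_program}, for the special case $d=2$. The paper's route buys brevity by reusing an earlier theorem; yours buys a cleaner and fully rigorous justification of the upper bound, including the final observation that divisibility of $q^d$ by $d+1$ is what lets the fractional bound round down to the stated integer value. Both are valid.
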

\begin{proof}
The lower bound is obtained similarly to the upper bound of Theorem~\ref{thm:gen_upper} by observing that recovery sets
of size $d$ can be obtained only from the elements of $U$. From the remaining space, we construct as many as possible recovery sets of size $d+1$.
Let $\cL$ be the set of nonzero vectors from~$\F_q^{k-d}$ whose first nonzero entry is a \emph{one}. Clearly,
$$
\abs{\cL} = q^{k-d-1} + q^{k-d-2} + \cdots + q +1 = \frac{q^{k-d}-1}{q-1},
$$
and the set $\{ (x,y) ~:~ x \in \cL , ~ y \in \F_q^d \} \cup \hat{U}$, where $\hat{U}$ is the set of one
subspaces of $U$, is a set of representatives for $V_q^k$.
By Corollary~\ref{cor:primitiveRecover} and Lemma~\ref{lem:simple_recover},
for each $x \in \cL$ we can form $\left\lfloor \frac{q^d}{d+1} \right\rfloor$ recovery sets for $U$.

From $U$ we obtain $\left\lfloor \frac{q^d-1}{d(q-1)} \right\rfloor$ recovery sets. There are $\frac{q^{k-d}-1}{q-1}$
elements in $\cL$ and each one yields $\left\lfloor \frac{q^d}{d+1} \right\rfloor$ recovery sets.
Thus, the claim of the theorem follows.
\end{proof}

The value of $N_q(k,1)$ was analyzed in~\cite{CEKZ20}, where the following results were obtained.

\begin{theorem}
\label{thm:one_dim}
Let $q \in \dP$ and $n \geq 2$ be positive integers.
\begin{itemize}
\item[(1)] If $q$ is even, then $N_q(k,1) = 1 + \frac{q^k - q}{2(q-1)}$.

\item[(2)] If $q$ is odd, then $N_q(k,1)= 1 + \frac{q^{k-1}-1}{2} + \left\lfloor \frac{q^{k-1}-1}{3(q-1)} \right\rfloor$.
\end{itemize}
\end{theorem}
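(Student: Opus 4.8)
The plan is to treat the two parts separately. Part~(1) will follow immediately from the bounds already in hand: taking $d=1$ in Theorem~\ref{thm:tight}, its lower bound is $\floor{\frac{q-1}{q-1}}+\floor{\frac q2}\cdot\frac{q^{k-1}-1}{q-1}$, and when $q$ is even this equals $1+\frac{q^k-q}{2(q-1)}$, an integer; moreover the divisibility hypothesis of Theorem~\ref{thm:tight} holds (here $d+1=2\mid q=q^d$), so equality holds, and one may also check directly that the bound of Theorem~\ref{thm:gen_upper} (with remainder $\ell=0$) gives the same value. Thus only the case $q$ odd requires work, and for it one needs to sharpen both Theorem~\ref{thm:gen_upper} and Theorem~\ref{thm:tight}.

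For the upper bound in part~(2), I would refine the size-based counting behind Theorem~\ref{thm:gen_upper}. Given a family $\mathcal F$ of pairwise disjoint recovery sets for $U$, split it by size and let $f_1,f_2,f_{\ge 3}$ be the numbers of members of size $1$, size $2$, and size $\ge 3$. A size-$1$ recovery set can only be $\{U\}$, so $f_1\le 1$; a size-$2$ recovery set avoiding $U$ is a pair of non-$U$ points collinear with $U$, hence lies on one of the $\frac{q^{k-1}-1}{q-1}$ lines through $U$, each of which carries only $q$ (an odd number of) non-$U$ points and therefore at most $\frac{q-1}{2}$ disjoint such pairs, giving $f_2\le\frac{q^{k-1}-1}{2}$ when no member contains $U$. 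Point-disjointness and $\abs{V_q^k}=\frac{q^k-1}{q-1}$ give $f_1+2f_2+3f_{\ge 3}\le\frac{q^k-1}{q-1}$, so $\abs{\mathcal F}\le f_1+f_2+\floor{\frac13\parenv{\frac{q^k-1}{q-1}-f_1-2f_2}}$; this last expression is non-decreasing in $f_1$ and in $f_2$, so substituting the extremal values $f_1=1,\ f_2=\frac{q^{k-1}-1}{2}$ (for which $\frac{q^k-1}{q-1}-f_1-2f_2=\frac{q^{k-1}-1}{q-1}$) yields exactly $\abs{\mathcal F}\le 1+\frac{q^{k-1}-1}{2}+\floor{\frac{q^{k-1}-1}{3(q-1)}}$, and the residual possibility that some (necessarily unique) member of $\mathcal F$ contains $U$, forcing $f_1=0$, is checked to give a weakly smaller value.

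For the matching lower bound I would build on the construction of Theorem~\ref{thm:tight}: keep $\{U\}$; on each of the $m\triangleq\frac{q^{k-1}-1}{q-1}$ lines through $U$ keep $\frac{q-1}{2}$ of the size-$2$ recovery sets it carries, so that exactly one non-$U$ point of each line is left unused; and then combine these $m$ leftover points, three at a time, into size-$3$ recovery sets. Writing a non-$U$ point as $(w,c)$ with $w\in\F_q^{k-1}\setminus\{0\}$, $c\in\F_q$, and indexing lines through $U$ by the points $\bar w$ of $\mathrm{PG}(k-2,q)$, a short computation shows that three leftover points $(w_1,c_1),(w_2,c_2),(w_3,c_3)$ from three distinct lines recover $U$ precisely when $\bar w_1,\bar w_2,\bar w_3$ are collinear in $\mathrm{PG}(k-2,q)$ and the coefficients $a_1,a_2,a_3$ of the essentially unique $\F_q$-linear dependence among $w_1,w_2,w_3$ satisfy $a_1c_1+a_2c_2+a_3c_3\neq 0$. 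Since the triples we use are pairwise point-disjoint, the second condition can always be met (choose the leftover value $c$ on each line appropriately in turn, in the spirit of Lemma~\ref{lem:coeff_sum_not0}), so the whole construction reduces to the purely geometric task of packing the $m$ points of $\mathrm{PG}(k-2,q)$ into $\floor{m/3}$ pairwise disjoint collinear triples; with $\{U\}$ and the $\frac{q-1}{2}m=\frac{q^{k-1}-1}{2}$ within-line pairs this produces $1+\frac{q^{k-1}-1}{2}+\floor{m/3}$ recovery sets.

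The main obstacle is exactly this collinear-triple packing, and it has to be achieved exactly rather than up to lower-order terms: a naive greedy peeling of triples line by line loses $\Theta(q^{k-3})$ of them, because a line of $q+1$ points need not split evenly into triples and the residues accumulate over the $\Theta(q^{k-2})$ lines. I would obtain the exact count by a recursive decomposition of $\mathrm{PG}(k-2,q)$ — fix a point $p$, use the pencil of $\frac{q^{k-2}-1}{q-1}$ lines through $p$ (which partition $\mathrm{PG}(k-2,q)$), recurse inside a complementary hyperplane, and absorb the residual points of each line into triples using a fixed small sub-configuration — with the bookkeeping splitting according to the residues of $q+1$ and of $m$ modulo $3$ (the subcase $3\mid q$, where $q+1\equiv 1$ and $m\equiv 1\pmod 3$, being easiest). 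It is this exactness that produces the floor function in the statement, and getting it right is the delicate part of the proof.
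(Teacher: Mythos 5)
You should first be aware that the paper contains no proof of Theorem~\ref{thm:one_dim} to compare against: it imports the result verbatim from~\cite{CEKZ20}. Judged on its own, your proposal is very much in the spirit of the paper's methodology elsewhere (the row-by-row decomposition, the leftover bookkeeping, and the quintriple-style partitions of Section~\ref{sec:small_sets}). Part~(1) is indeed immediate from Theorem~\ref{thm:tight} with $d=1$. Your upper bound for part~(2) is correct and essentially complete: $f_1\le 1$; a size-two recovery set avoiding $U$ consists of two points collinear with $U$, hence lies on one of the $\frac{q^{k-1}-1}{q-1}$ lines through $U$, each carrying $q$ non-$U$ points and so at most $\frac{q-1}{2}$ disjoint pairs; the point-count constraint and the monotonicity of $f_1+f_2+\floor{\frac{1}{3}(M-f_1-2f_2)}$ then give the stated value, and the case of a member containing $U$ is routine. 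The reduction of the lower bound to packing the $m=\frac{q^{k-1}-1}{q-1}$ points of PG$(k-2,q)$ into $\floor{m/3}$ pairwise disjoint collinear triples is also correct, including the observation that the free choice of the leftover $c_i$ on each line lets you force $a_1c_1+a_2c_2+a_3c_3\neq 0$.

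The gap is the collinear-triple packing itself, and the recursion you sketch does not close it. If at each stage you set aside a pivot $p_i$ and recurse into a hyperplane $H_i$ with $p_i\notin H_i$ and $H_i\supseteq H_{i+1}\supseteq\cdots$, then for $i<j<l$ the line $\Span{p_j,p_l}$ lies inside $H_i$ and misses $p_i$, so no three of the set-aside points are ever collinear; since the recursion from PG$(k-2,q)$ down to PG$(0,q)$ produces $k-1$ such points, for $k\geq 4$ you are left with at least three mutually uncombinable points, while the target allows at most $m-3\floor{m/3}\leq 2$ uncovered points. In addition, the per-line residues do not vanish uniformly: a line of the pencil contributes $q$ points besides the pivot (or $q-1$ besides the pivot and its trace on $H$), and neither is divisible by $3$ for all $q$, so residues accumulate over the $\Theta(q^{k-3})$ lines exactly as you feared. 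Achieving the exact count $\floor{m/3}$ therefore needs a different construction --- for instance a Singer-cycle/difference-triple tiling of $\Z_m$ analogous to the translated quintriples $S,\alpha^{5}S,\alpha^{10}S$ used for $d=2$, or a lifted-MRD recursion that keeps the total residue bounded by two --- and this missing lemma is precisely the content one would have to take from~\cite{CEKZ20} or prove from scratch.
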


In this paper, an analysis for $d > 1$ will be done and the rest of this paper is organized as follows.
In Section~\ref{sec:main_ideas} we describe some
of the ideas for our main construction for a lower bound on the maximum number of recovery sets.
We demonstrate these ideas first only for $q=2$,
where we don't distinguish between the vectors of $\F_2^n$ and its 1-subspaces.
The ideas for a lower bound will be demonstrated for $d=2^m-1$ and subspaces over $\F_2$ using a binary perfect code
of length $2^m-1$. Also, some ideas for an upper bound will be demonstrated for $d=2$ using integer programming.
Tight bounds or almost tight bounds
for $q=2$ and specific small values of $d$, will be presented in Sections~\ref{sec:lower_binary} and~\ref{sec:upper_binary},
where lower bounds will be considered in the first and upper bounds will be considered in the second.
The bounds on the number of recovery sets for $q >2$ are presented in Section~\ref{sec:general}. Some of the bounds that we obtain
are tight, e.g., when $k-d$ is even and $d+2$ divides~${q+1}$. In general, optimality can be obtained
in other cases by constructing certain partitions.
Furthermore, the upper bound for $q>2$ will be improved compared to $q=2$.
Conclusion and problems for further research are presented in Section~\ref{sec:conclusion}.

\section{The Main Ideas for a Lower Bound and an Upper Bound}
\label{sec:main_ideas}
\vspace{0.0cm}

We start this section by describing first our main ideas for constructing many recovery sets from the
columns of the parity check matrix of the simplex code over $\F_q$. This will yield a lower bound on
the maximum number of possible recovery sets. We will continue by describing the main ideas to obtain
an upper bound on this number. Then in subsection~\ref{sec:perfect} we will demonstrate the main ideas
to obtain a lower bound by using binary perfect codes. In subsection~\ref{sec:Integer_program} we will
rephrase the main ideas of the upper bound in terms of integer programming to obtain a specific upper bound.
Most of the discussion in this section is for $q=2$, but the main ideas will be similar for $q>2$.

We start by first representing the points of PG$(k-1,2)$ which are the 1-subspaces of $V_2^k$ by
a ${2^{k-d} \times 2^d}$ matrix $T$.
The rows of the matrix $T$ will be indexed and identified by the $2^{k-d}$ elements of $\F_2^{k-d}$
and the columns of $T$ by the elements of $\F_2^d$, where the first one will be $\bf0$, and the other columns
will be identified, and ordered, as $\alpha^0,\alpha^1,\alpha^2,\ldots,\alpha^{2^d-2}$ and also by their appropriate
vector representation of length $d$, where $\alpha$ is a primitive element
in $\F_{2^d}$. Entry $T(x,y)$, ${x \in \F_2^{k-d}}$, ${y \in \F_2^d}$, in the matrix will contain the pair $(x,y)$,
which represents an element in~$\F_2^k$.
It will be equivalent if the entry $T(x,y)$ will contain the element $x+y$, where $x \in W$, $y \in U$, $W$ is
a ${(k-d)\text{-subspace}}$ and $U$ is a $d$-subspace such that $\F_2^k = W + U$.
The elements of the matrix $T$ represent together all the 1-subspaces (elements)
of $\F_2^k$, i.e., the points of PG$(k-1,2)$. Note, that $T(0,0)$ is the only entry of~$T$ which does not contain a 1-subspace.

\begin{lemma}
\label{lem:representPG}
The entries of the matrix $T$, excluding $T(0,0)=0$, represent exactly all the points of PG$(k-1,2)$
(which are the 1-subspaces of $\F_2^k$).
\end{lemma}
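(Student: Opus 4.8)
The plan is to verify the counting directly and then check the two required properties: that every nonzero point of $\mathrm{PG}(k-1,2)$ appears, and that each appears exactly once. First I would count entries: the matrix $T$ has $2^{k-d}$ rows and $2^d$ columns, hence $2^k$ entries, and excluding $T(0,0)$ we have $2^k-1$ entries. Since $q=2$, the number of $1$-subspaces of $\F_2^k$ is exactly $2^k-1$, so a bijection is plausible and it suffices to prove surjectivity (or equivalently injectivity) of the map $(x,y)\mapsto x+y$ on the index set $(\F_2^{k-d}\times\F_2^d)\setminus\{(0,0)\}$ onto the nonzero vectors of $\F_2^k$.

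Next I would use the direct sum decomposition $\F_2^k = W \oplus U$ with $W$ a $(k-d)$-subspace identified with $\F_2^{k-d}$ and $U$ a $d$-subspace identified with $\F_2^d$. By the definition of direct sum, every vector $v \in \F_2^k$ has a \emph{unique} representation $v = x + y$ with $x \in W$ and $y \in U$; this is precisely the statement that the addition map $W \times U \to \F_2^k$ is a bijection. Restricting to nonzero $v$ removes exactly the preimage $(0,0)$, so the restriction to $(\F_2^{k-d}\times\F_2^d)\setminus\{(0,0)\}$ is a bijection onto $\F_2^k\setminus\{0\}$. Since over $\F_2$ the nonzero vectors are in one-to-one correspondence with the $1$-subspaces (each $1$-subspace $\{0,v\}$ having the single nonzero representative $v$), this means the entries of $T$ other than $T(0,0)$ list each point of $\mathrm{PG}(k-1,2)$ exactly once. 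I would also remark that the alternative description of the entries as pairs $(x,y)$ rather than sums $x+y$ is merely the coordinate form of the same decomposition, so nothing changes.

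I expect there to be essentially no obstacle here: the lemma is a bookkeeping statement, and the only thing to be careful about is making explicit that $T(0,0)=0$ is the unique non-point entry (which follows because $x+y=0$ with $x\in W$, $y\in U$ forces $x=y=0$ by directness of the sum) and that the column indexing by $0,\alpha^0,\alpha^1,\ldots,\alpha^{2^d-2}$ indeed enumerates all of $\F_2^d$ without repetition, which is immediate since $\alpha$ is primitive in $\F_{2^d}$ so $\alpha^0,\ldots,\alpha^{2^d-2}$ are the $2^d-1$ nonzero elements. Assembling these observations gives the claim in a few lines.
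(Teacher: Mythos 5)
Your argument is correct and is exactly the reasoning the paper implicitly relies on: the paper states this lemma without proof, treating it as immediate from the direct-sum decomposition $\F_2^k = W + U$ described just before, and your write-up simply makes that bijection (and the identification of nonzero vectors with $1$-subspaces over $\F_2$) explicit. No gaps; nothing further is needed.
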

%

Using the representation of $\F_2^k$ with the matrix $T$ we can now design recovery sets for $\F_2^d$ as follows.
\begin{enumerate}
\item Any $d$ consecutive powers of $\alpha$ in the first row of $T$ are linearly independent and hence can be used to generate
$\lfloor \frac{2^d-1}{d} \rfloor$ pairwise disjoint recovery sets (see Theorem~\ref{thm:basic_parts}).

\item Any $d+1$ consecutive elements in a row $x$ of $T$, $x \neq \bf0$, span $\F_2^d$ (see Corollary~\ref{cor:primitiveRecover}).
Similarly, any $d$ consecutive elements in a row $x$ of $T$ together with $T(x,\bf0)$ span $\F_2^d$ (see Lemma~\ref{lem:simple_recover}).
Hence, any such row of $T$ can be used to generate $\left\lfloor \frac{2^d}{d+1} \right\rfloor$ pairwise
disjoint recovery sets. The same is true if there exists another construction for partition (or almost a partition)
of the elements in a row of $T$ into sets for which each one spans $\F_2^d$.

\item The remaining elements in each row are combined and recovery sets, usually of size $d+2+\epsilon$, for small $\epsilon$,
are generated.
\end{enumerate}

We will try to improve the bound of Theorem~\ref{thm:tight} in a simple way. After the construction of $\lfloor \frac{q^d}{d+1} \rfloor$
recovery sets from each row of the matrix $T$, not all the elements in~$T$ will be used
in these recovery sets. The remaining elements, which will be called {\bf \emph{leftovers}}, can be used to obtain more
recovery sets whose size is $d+2$ or more (as we will see in the sequel). If there were many leftovers in the first row of $T$, then
some recovery sets of size $d+1$ can be obtained from these leftovers and combined with the leftovers from the other rows of $T$.
For the other leftovers, it will be proved that in some cases recovery sets of size $d+2$ are sufficient.
If $d+1$ divides $q^d$, there are no leftovers in each row and hence we cannot obtain more recovery sets. Note,
that in such case the leftovers from the first row of~$T$ cannot be of any help to provide more recovery sets. Hence,
the bound of Theorem~\ref{thm:tight} in Eq.~(\ref{eq:bound1}) is attained.

The representation of PG$(k-1,2)$ with the matrix $T$ can be used also to obtain an upper bound
on the maximum number of recovery sets. Recovery sets of size $d$ can be constructed only from the elements
in the first row of $T$ and we cannot have more than $\left\lfloor \frac{2^d -1}{d} \right\rfloor$ such recovery sets.
It is readily verified that involving elements from the first row of $T$ with elements from the other rows of $T$ will require more
than $d$ elements in the recovery sets. This will imply that the leftovers from the first row of $T$ can be used with elements from the other
rows of $T$ to form a recovery set of size at least~${d+1}$. Other leftovers in other rows
can be used in recovery sets of size at least $d+2$. This can imply
bounds that are better than the one in Theorem~\ref{thm:gen_upper}. The upper bounds obtained by these arguments can
be found either by careful analysis or by using integer programming.

In the two subsections which follow we will give examples for the implementation of these ideas.
In Sections~\ref{sec:lower_binary}, \ref{sec:upper_binary}, and~\ref{sec:general} we will implement these ideas for various parameters
and also for $q > 2$.

\subsection{Recovery Sets via Perfect Codes}
\label{sec:perfect}

To demonstrate our main ideas for lower bounds in the following sections, we start with a special case which on
one hand explains the main idea, and on the other hand it has its beauty as it is
based on binary perfect codes with radius one, and finally, it is optimal.
For a word $x \in \F_2^n$, the {\bf \emph{ball}} of radius one around~$x$, $B(x)$, is the
set of words at Hamming distance at most one from $x$, i.e.
$$
B(x) \triangleq \{ y ~:~ y \in \F_2^n , ~ d_H (x,y) \leq 1 \},
$$
where, $d_H(x,y)$ is the Hamming distance between $x$ and $y$.
A {\bf \emph{binary perfect code}} $\cC$ of length $n$ is a set of binary words of length $n$,
such that the balls of radius one around the codewords of $\cC$ form a partition
of $\F_2^n$. It is well known~\cite{Etz22} that a binary perfect code
exists for each $n=2^m-1$ and there are many nonequivalent such codes from which only
one is a linear code, {\bf \emph{the Hamming code}}. Assume now that the dimension $d$
for the subspace to be recovered is of the form $d=2^m-1$. Let $U$ be
the $d$-subspace to be recovered and $W$ be a $(k-d)$-subspace of $\F_2^k$ such that $W + U = \F_2^k$.
The vectors stored in the servers are represented by the $2^{k-d} \times 2^d$ array $T$.
By Corollary~\ref{cor:basic_parts},
the first row of $T$ is partitioned into $\left\lfloor \frac{2^d-1}{d} \right\rfloor$ recovery
sets. Now, let $\cC$ be any binary perfect code of length $d=2^m-1$ (linear or nonlinear).
$\F_2^d$ is partitioned into $\frac{2^d}{d+1}$ balls of radius one whose centers are the codewords
of~$\cC$. Consider row $x$ of $T$, where $x$ is nonzero. This row is partitioned into $\frac{2^d}{d+1}$
subsets, where the $i$-th subset is $P_i \triangleq \{ x + y ~:~ y \in B_i \}$ and $B_i$ is the $i$-th ball
from the partition of $\F_2^d$ into balls, of radius one, by the perfect codes $\cC$.
Clearly, $B_i = \{ c_i \} \cup \{ c_i + e_j ~:~ 1 \leq j \leq d \}$,
where $c_i$ is the $i$-th codeword of $\cC$ and $e_j$ is a unit vector with a \emph{one} in the $j$-th coordinate. It implies that
$P_i = \{ x+c_i \} \cup \{ x+c_i + e_j ~:~ 1 \leq j \leq d \}$ and hence $e_j \in \Span{P_i}$ for
each $1 \leq j \leq d$. Therefore, $U \subset \Span{P_i}$ and $P_i$ is a recovery set for $U$.
Since a recovery set of size $d$ can be obtained only from elements of $U$, it follows that except for
the $\left\lfloor \frac{2^d-1}{d} \right\rfloor$ recovery sets obtained from the first row of $T$ all the other recovery sets
must be of size at least $d+1$. Thus, we have a special case of Theorem~\ref{thm:tight}.
\begin{theorem}
\label{thm:recover_perfect}
If $d=2^m-1$ then
$$
N_2(k,d) = \left\lfloor \frac{2^d-1}{d}  \right\rfloor + \frac{2^k-2^d}{d+1}
$$
\end{theorem}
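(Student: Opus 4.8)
The plan is to establish matching lower and upper bounds for $N_2(k,d)$ when $d = 2^m - 1$. For the lower bound, I would invoke the construction already sketched in the text: write $\F_2^k = W + U$ with $\dim W = k-d$, represent the $2^k - 1$ nonzero vectors (equivalently the points of PG$(k-1,2)$) by the nonzero entries of the $2^{k-d} \times 2^d$ array $T$, and split the recovery sets by rows. The first row (indexed by $x = \bf0$) consists of $\alpha^0, \alpha^1, \ldots, \alpha^{2^d-2}$, and by Corollary~\ref{cor:basic_parts} it yields $\lfloor (2^d-1)/d \rfloor$ pairwise disjoint recovery sets for $U$. For each of the remaining $2^{k-d} - 1$ nonzero rows $x$, fix a binary perfect code $\cC$ of length $d = 2^m - 1$; its $2^d/(d+1)$ balls of radius one partition $\F_2^d$, so the sets $P_i = \{x + y : y \in B_i\}$ partition row $x$ into $2^d/(d+1)$ blocks. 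The key computation is that $P_i = \{x + c_i\} \cup \{x + c_i + e_j : 1 \le j \le d\}$, so each unit vector $e_j$ lies in $\Span{P_i}$, hence $U = \Span{e_1,\dots,e_d} \subseteq \Span{P_i}$ and each $P_i$ is a recovery set. Summing, the total number of pairwise disjoint recovery sets obtained is $\lfloor (2^d-1)/d \rfloor + (2^{k-d}-1)\cdot \frac{2^d}{d+1}$, and since $2^{k-d} - 1 = (2^k - 2^d)/2^d$ this equals $\lfloor (2^d-1)/d \rfloor + (2^k - 2^d)/(d+1)$.

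For the upper bound, I would argue that no partition into recovery sets can do better. Any recovery set of size exactly $d$ must consist of $d$ linearly independent $1$-subspaces spanning $U$, hence must lie entirely inside $U$; by Theorem~\ref{thm:basic_parts} (applied with $k = d$) there can be at most $\lfloor (2^d-1)/d \rfloor$ pairwise disjoint such sets. Every recovery set that uses at least one element outside $U$ — equivalently, at least one entry of $T$ in a nonzero row — must have size at least $d+1$: a recovery set must have size at least $d$ always, and if it had size exactly $d$ it would be $d$ independent vectors inside $U$, contradicting the presence of an element outside $U$. This covers all elements of $T$ not used by the size-$d$ sets. Counting elements: after removing the $1$-subspaces used in size-$d$ recovery sets from the first row, the number of remaining nonzero entries of $T$ is $(2^k - 1) - d\lfloor (2^d-1)/d \rfloor$; when $d+1 \mid 2^d$ (which holds since $d = 2^m - 1$), we have $d \mid 2^d - 1$, so $d \lfloor (2^d-1)/d \rfloor = 2^d - 1$ and the remaining count is exactly $2^k - 2^d$. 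Each remaining recovery set has size $\ge d+1$, so the number of them is at most $(2^k - 2^d)/(d+1)$, giving the matching upper bound.

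The main obstacle I anticipate is making the upper-bound dichotomy fully rigorous: I must confirm that a recovery set of size $d$ cannot "mix" an element outside $U$ with elements inside $U$ while still spanning only something containing $U$ — but this is immediate, since $d$ linearly independent vectors spanning a space containing the $d$-dimensional $U$ must span exactly $U$, forcing all of them into $U$. A secondary subtlety is that the bound $\lfloor (2^d-1)/d \rfloor$ on size-$d$ recovery sets drawn from $U$ is exactly the statement of Theorem~\ref{thm:basic_parts} with the roles of $k$ and $d$ identified, so I would just cite it. Finally I would note that the arithmetic identity $d = 2^m - 1 \Rightarrow d \mid 2^d - 1$ (since $2^d \equiv 2^{d \bmod m} \pmod{2^m - 1}$ and $d = 2^m - 1 \equiv 0$, wait—rather, $2^m \equiv 1$, and $d+1 = 2^m \mid 2^d$ because $d = 2^m - 1 \ge m$) makes the floor terms collapse cleanly, so the lower and upper bounds coincide and the theorem follows. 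This is exactly the "no leftovers" situation already flagged after Theorem~\ref{thm:tight}, so the result can also be read as a clean instance of that theorem with the nonlinear perfect-code construction supplying the row partitions.
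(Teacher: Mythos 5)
Your construction and overall strategy coincide with the paper's: the lower bound is exactly the perfect-code partition of each nonzero row of $T$ into balls $B_i$ (giving $2^d/(d+1)$ recovery sets per row) together with the $\left\lfloor (2^d-1)/d \right\rfloor$ sets from the first row, and the upper bound is the same size-counting argument that the paper delegates to Theorems~\ref{thm:gen_upper} and~\ref{thm:tight} rather than redoing inline.

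There is, however, one concretely false step in your upper-bound count. You assert that $d+1 \mid 2^d$ implies $d \mid 2^d-1$, hence that $d\left\lfloor (2^d-1)/d\right\rfloor = 2^d-1$ and the pool of elements available for size-$(d+1)$ sets is exactly $2^k-2^d$. This fails already for $m=2$: there $d=3$ and $2^d-1=7$ is not divisible by $3$ (indeed $2^d-1 \equiv 2^{d \bmod m}-1 \pmod d$, which vanishes only when $m \mid d$, essentially never for $d=2^m-1$). The correct count of remaining elements is $2^k-2^d+\ell$ with $\ell = (2^d-1) \bmod d$, and your bound survives only because $\ell \leq d-1 < d+1$ while $d+1=2^m$ divides $2^k-2^d$, so $\left\lfloor (2^k-2^d+\ell)/(d+1)\right\rfloor = (2^k-2^d)/(d+1)$. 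This is precisely the role the extra $\ell$ term plays in the paper's Theorem~\ref{thm:gen_upper}, and the hesitation visible in your final paragraph suggests you sensed the issue; the fix is a one-line replacement of the false divisibility claim by the floor computation above. A minor secondary point: the cap of $\left\lfloor (2^d-1)/d\right\rfloor$ on disjoint size-$d$ recovery sets follows from counting the $2^d-1$ nonzero vectors of $U$, not from Theorem~\ref{thm:basic_parts} itself (which is an existence statement), though the paper commits the same shorthand.
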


It should be noted that the same result can be obtained by using the construction based on Theorem~\ref{thm:basic_parts}
and Corollary~\ref{cor:primitiveRecover}. We introduced the construction based on perfect codes for its uniqueness.

\subsection{Integer Programming Bound}
\label{sec:Integer_program}
\vspace{0.0cm}

We continue and demonstrate an upper bound on $N_2(k,2)$ using integer programming.
As before, consider the $2^{k-2} \times 4$ array $T$ and seven types of recovery sets from this array,
whose entries, except for $T(0,0)$, are exactly the elements stored in the $2^k -1$ servers. Now, the variables
which denote the number of recovery sets of each type are defined as follows.
\begin{enumerate}
\item The first type has recovery sets with two elements from the first row of $T$. The number
of recovery sets from this type will be denoted by $X_1$.

\item The variable $X_2$ denotes the number of recovery sets with one element from the first row and two elements
from an internal row (not the first one) of $T$. (note that there is no recovery set when one element is taken
from the first row of $T$ and two elements are taken from two different internal rows of $T$.)

\item The variable $X_3$ denotes the number of recovery sets with one element from the first row and three elements
from three internal rows of $T$.

\item The variable $Y_3$ denotes the number of recovery sets with three elements from one of the internal rows
of $T$.

\item The variable $Y_{22}$ will denote the number of recovery sets with two elements from one internal row of $T$ and
two elements from another internal row of $T$.

\item The variable $Y_4$ denotes the number of recovery sets with four elements: two elements from one internal row of $T$ and
two elements from two other different internal rows of~$T$.

\item The variable $Y_5$ denotes the number of recovery sets with five elements from five
different internal rows of $T$.
\end{enumerate}
Note, that four elements from four different internal rows cannot be used to form a recovery set.
It can be verified that each other possible recovery set contains by its definition one
of these seven types and hence can be replaced to save some servers for other recovery sets.

We continue with a set of three inequalities related to these variables. The first row
of $T$ contains three elements and hence we have that
$$
2X_1 + X_2 + X_3 \leq 3 ~.
$$
The second inequality is also very simple, we just sum the contribution of each type to the
total number of elements in the internal rows of $T$ which have a total of $2^k-4$ elements.
This implies that
$$
2X_2 + 3X_3 + 3Y_3 + 4Y_{22} + 4Y_4 + 5Y_5 \leq 2^k -4~.
$$
For the last equation, we consider any specific internal row of~$T$ and the types
that are using at least two elements from this row. The number of related recovery
sets using this row will be denoted by $X'_2,Y'_3,Y'_{22},Y'_4$ and we have the inequality
$$
2X'_2 + 3Y'_3 + 2Y'_{22} + 2 Y'_4 \leq 4 ~.
$$
But, since $Y'_3 \in \{0,1\}$ while the other variables can get the values 0,1, or 2, it follows that
$$
2X'_2 + 4Y'_3 + 2Y'_{22} + 2 Y'_4 \leq 4 ~.
$$
Next, we sum this equation over all the $2^{k-2}-1$
internal rows, taking into account that $Y'_{22}$ in the equation is counted
for two distinct rows. Hence, we have
$$
2X_2 + 4Y_3 + 4Y_{22} + 2Y_4 \leq 2^k -4~.
$$

We can now write the following linear programming problem
$$
\gamma = \text{maximize}~~X_1 + X_2 + X_3 + Y_3 + Y_{22} + Y_4 + Y_5
$$
subject to nonnegative integer variables and the following three constraints
$$
2X_1 + X_2 + X_3 \leq 3
$$
$$
2X_2 + 3X_3 + 3Y_3 + 4Y_{22} + 4Y_4 + 5Y_5 \leq 2^k -4
$$
$$
2X_2 + 4Y_3 + 4Y_{22} + 2Y_4 \leq 2^k -4~.
$$
We apply now the dual linear programming problem~\cite{Chv83}
$$
\gamma = \text{minimize}~~3Z_1 + (2^k-4) Z_2 + (2^k-4)Z_3
$$
subject to nonnegative variables and the following seven constraints
$$
2 Z_1 \geq 1
$$
$$
Z_1 + 2Z_2 + 2Z_3 \geq 1
$$
$$
Z_1 + 3Z_2 \geq 1
$$
$$
3Z_2 + 4Z_3 \geq 1
$$
$$
4Z_2 + 4Z_3 \geq 1
$$
$$
4Z_2 + 2Z_3 \geq 1
$$
$$
5Z_2  \geq 1
$$

Using IBM software CPLEX
it was found that
this optimization problem has exactly one solution, ${Z_1=\frac{1}{2}}$, $Z_2=\frac{1}{5}$, and ${Z_3 = \frac{1}{10}}$,
and hence $\gamma = \frac{3}{2} +\frac{3(2^{k-1}-2)}{5}$. But, since our solution is an integer solution, it follows that
$\gamma = \left\lfloor \frac{3}{2} +\frac{3(2^{k-1}-2)}{5} \right\rfloor$ and as a consequence.

\begin{lemma}
\label{lem:two_upper}
For $k \geq 2$, $N_2(k,2) \leq \left\lfloor \frac{3}{2} +\frac{3(2^{k-1}-2)}{5} \right\rfloor
= \left\lfloor \frac{3 \cdot 2^k +3 }{10}  \right\rfloor$.
\end{lemma}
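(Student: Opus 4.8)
The plan is to convert the counting problem into a linear program whose value bounds $N_2(k,2)$ from above, and then to certify that value by exhibiting a single feasible point of the dual, exactly along the lines indicated above. First I would fix the recovered subspace to be $U=\F_2^2$, write $\F_2^k=W\oplus U$ with $W=\F_2^{k-2}$, and encode the $2^k-1$ nonzero vectors in the $2^{k-2}\times 4$ array $T$ of \Lref{lem:representPG}, so that the first row is a set of representatives for $U$ and the remaining $2^{k-2}-1$ \emph{internal} rows partition the rest, each internal row having four entries of the form $(x,0),(x,\alpha^0),(x,\alpha^1),(x,\alpha^2)$ with $x\neq\zero$. The key preliminary step is a classification lemma: every recovery set for $U$ contains a subset that is a recovery set of one of the seven listed types (equivalently, every inclusion-minimal recovery set is of one of these types), so that any family of $N_2(k,2)$ pairwise disjoint recovery sets may be shrunk, set by set, to a family of the same cardinality in which every member is one of the seven types. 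This rests on a handful of small $\F_2$-linear-algebra checks: a recovery set of size $d=2$ must consist of two linearly independent vectors of $U$, hence two entries of the first row (the size-$d$ part of \Tref{thm:basic_parts}); summing two entries of one internal row cancels the $W$-component and produces a nonzero vector of $U$, so two entries of a row recover a $1$-subspace of $U$ while three entries recover all of $U$; one first-row entry combined with entries of internal rows needs at least $d+1=3$ entries; one first-row entry together with one entry from each of two distinct internal rows never recovers $U$ (the two $W$-components are distinct and nonzero, hence independent, so no further vector of $U$ lies in the span); and four entries lying in four distinct internal rows never recover $U$ (four distinct nonzero $W$-vectors span a space of dimension at least $3$, so the space of coefficient vectors producing a vector of $U$ has dimension at most $1$).

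Next I would record the three inequalities satisfied by the type counts $X_1,X_2,X_3,Y_3,Y_{22},Y_4,Y_5$ of such a family. Counting the three usable entries of the first row gives $2X_1+X_2+X_3\le 3$. Summing over the whole array the number of internal-row entries consumed by each type gives $2X_2+3X_3+3Y_3+4Y_{22}+4Y_4+5Y_5\le 2^k-4$. For the third, fixing one internal row and summing the entries it contributes to each type that uses at least two of its four entries gives $2X'_2+3Y'_3+2Y'_{22}+2Y'_4\le 4$; since two disjoint size-$3$ sets cannot fit in one row we have $Y'_3\in\{0,1\}$, hence also $2X'_2+4Y'_3+2Y'_{22}+2Y'_4\le 4$; summing this over the $2^{k-2}-1$ internal rows, with each $Y_{22}$-set counted in two rows, yields $2X_2+4Y_3+4Y_{22}+2Y_4\le 2^k-4$. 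Consequently $N_2(k,2)$ is at most the optimum $\gamma$ of the linear program that maximizes $X_1+X_2+X_3+Y_3+Y_{22}+Y_4+Y_5$ over nonnegative reals subject to these three constraints (passing to the real relaxation can only increase the optimum, so it still upper-bounds $N_2(k,2)$).

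Finally I would bound $\gamma$ by weak duality, for which it suffices to exhibit one feasible point of the dual and evaluate its objective; I do not need the reported optimality of that point, only its feasibility, so the CPLEX run can be replaced by a one-line verification. Taking $(Z_1,Z_2,Z_3)=(\tfrac{1}{2},\tfrac{1}{5},\tfrac{1}{10})$, each of the seven dual constraints listed above holds, so $\gamma\le 3Z_1+(2^k-4)(Z_2+Z_3)=\tfrac{3}{2}+\tfrac{3(2^{k-1}-2)}{5}$, and this real number equals $\tfrac{3\cdot 2^k+3}{10}$. Since $N_2(k,2)$ is a nonnegative integer at most this value, $N_2(k,2)\le\floor{\tfrac{3}{2}+\tfrac{3(2^{k-1}-2)}{5}}=\floor{\tfrac{3\cdot 2^k+3}{10}}$, which is the claim. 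I expect the real work to be the classification of inclusion-minimal recovery sets into exactly these seven types — in particular verifying the two ``impossibility'' statements and confirming that no other type is minimal — after which the linear-programming part is mechanical.
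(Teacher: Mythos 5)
Your proposal follows the paper's proof essentially verbatim: the same seven types, the same three primal inequalities (including the $Y'_3\in\{0,1\}$ sharpening), the same dual. The one genuine improvement is your observation that weak duality suffices: checking that $(Z_1,Z_2,Z_3)=(\tfrac12,\tfrac15,\tfrac1{10})$ satisfies the seven dual constraints replaces the paper's reliance on a CPLEX run asserting optimality, and makes the argument self-contained. That is worth keeping.

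However, the step you correctly identify as ``the real work'' --- proving that every inclusion-minimal recovery set is one of the seven types --- is not literally true, so the classification lemma as you state it (and as the paper asserts it) cannot be proved. For $k\ge 6$ take $x_1=e_1$, $x_2=e_2$, $x_3=e_1+e_2$, $x_4=e_3$, $x_5=e_4$, $x_6=e_3+e_4$ in $\F_2^{k-2}$ and the six entries $(x_1,\bfzero),(x_2,\bfzero),(x_3,u),(x_4,\bfzero),(x_5,\bfzero),(x_6,v)$: the first three sum to $(\bfzero,u)$ and the last three to $(\bfzero,v)$, so this is a recovery set with one element from each of six distinct internal rows, and one checks that deleting any single element destroys recovery. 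It therefore contains no subset of any of the seven types (no first-row entry, no row used twice, and no $5$-element sub-recovery-set). A similar omission occurs for one first-row entry plus four entries from four distinct internal rows. The fix is to prove the right statement instead: every inclusion-minimal recovery set, recorded as a column $(c_1,c_2,c_3)$ of the LP (number of first-row entries used; number of internal entries used; total over internal rows of entries used in rows contributing at least two, with the $Y'_3$ adjustment), satisfies $c_1Z_1+c_2Z_2+c_3Z_3\ge 1$ at the point $(\tfrac12,\tfrac15,\tfrac1{10})$. This holds for the exotic configurations too (e.g.\ $6Z_2=\tfrac65\ge 1$ and $Z_1+4Z_2=\tfrac{13}{10}\ge1$), because any minimal configuration not in the list uses strictly more internal entries than the listed type it ``degenerates'' from; so the bound survives, but your proof plan needs this reformulation, and the case analysis behind it is where the effort actually lies.
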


This method of integer programming can be used for general parameters, i,e., also for $q >2$, and
also for $d>2$. We will omit the evolved computations which are increased as $q$ and $d$ are increased.

\section{Constructions of Recovery Sets for Binary Alphabet}
\label{sec:lower_binary}

In this section lower bounds on the number of recovery sets for binary alphabet, when the subspace to recover
has a low dimension, will be derived.

\subsection{Recovery Sets for Two-Dimensional Subspaces}
\label{sec:small_sets}

The goal in this subsection is first to find the value of $N_2(k,2)$ which considers binary 2-subspaces.
The proof consists of three steps. In the first step, we will
take a recovery set of size two (it is not possible to have two such disjoint recovery sets) and generate as many possible
pairwise disjoint recovery sets of size three which is the best possible. In the second step, we will continue and from
the remaining 1-subspaces (the leftovers) we generate pairwise disjoint recovery sets of size five.
This construction will provide a lower bound on the number of recovery sets. For the upper bound, in the third step,
we will use either integer programming or careful analysis for the possible size of recovery sets to prove that our construction
yields the largest possible number of pairwise disjoint recovery sets.
The integer programming was demonstrated in Section~\ref{sec:Integer_program}.
We can also obtain similar results to the ones obtained with integer programming,
by careful analysis. After that, we will finish to find the value of $N_2(k,2)$.

Consider the $2^{k-2} \times 4$ array $T$ whose rows are indexed by the vectors of $\F_2^{k-2}$, where the first row
is indexed by $\bf 0$. The columns are indexed by $\bf 0$, $u$, $v$, and $u+v$, where
$U=\Span{ u,v}$ is the 2-subspace to be recovered.

We have seen that any two nonzero elements from the first row span the subspace $U$ and any three elements from any other row also span $U$.
Together we have $2^{k-2}$ recovery sets and $2^{k-2}$ leftovers,
one from each row, where these elements can be chosen arbitrarily. We will prove that these
leftovers can be partitioned into recovery sets of size five and possibly one or two sets
of a different small size. This will be based on a structure
that we call a {\bf \emph{quintriple}} (as it consists of five elements
and three ways to generate one of its elements). We start with the definition of a quintriple.

A subset $\{ x_1,x_2,x_3,x_4,x_5 \} \subset \F_2^{k-2}$ is called a {\bf \emph{quintriple}} if
$x_1 = x_2 + x_3 = x_4 + x_5$. A quintriple yields one recovery set with one
element from each row associated with the elements of the quintriple. As the leftovers
in each row can be chosen arbitrarily we choose the following elements in the rows indexed by $x_1,x_2,x_3,x_4$, and $x_5$.
$$
(x_1 , \bfzero )
$$
$$
(x_2 , u)
$$
$$
(x_3 , u+v)
$$
$$
(x_4 , v)
$$
$$
(x_5 , u+v).
$$
Since
$$
(\bfzero,v)= (x_1, \bfzero ) + (x_2,u) + (x_3,u+v)
$$
and
$$
(\bfzero , u )=(x_1, \bfzero ) + (x_4 ,v) + (x_5,u+v) ,
$$
it follows that $U$ is recovered from $\{ (x_1, \bfzero) (x_2,u) ,(x_3,u+v),(x_4,v),(x_5,u+v) \}$.
Thus, our goal is to form as many as possible pairwise disjoint quintriples of $\F_2^{k-2}$,
and possibly one or two more subsets of small size, to complete our construction.

Now, we will consider a partition of $\F_2^m$ into disjoint quintriples
and possibly one or two more subsets of a small size. Each quintriple
will yield one recovery set; one more recovery set will be obtained
from the other elements (outside the quintriples). We distinguish between four cases,
depending on whether $m$ is congruent to 0, 1, 2, or 3 modulo $4$. The construction will be recursive,
where the initial conditions are quintriples for $m=4$, $m=5$, $m=6$, and $m=7$. For the recursive
construction, rank-metric codes and lifted rank-metric codes will be introduced, and their definitions are
provided now.

For two $\tau \times \eta$ matrices $A$ and $B$ over $\F_q$ the \emph{rank distance}, $d_R (A,B)$, is defined by
$$
d_R (A,B) \triangleq \text{rank}(A-B)~.
$$
A code $\cC$ is a $[\tau \times \eta,\ell,\delta]$ rank-metric code if its codewords are $\tau \times \eta$
matrices over $\F_q$, which form a linear subspace of dimension $\ell$ of $\F_q^{\tau \times \eta}$, and
each two distinct codewords $A$ and $B$ satisfy $d_R (A,B) \geq \delta$. Rank-metric codes were well
studied~\cite{Gab85,Rot91,Del78}. A Singleton bound for rank-metric codes was proved in~\cite{Del78,Gab85,Rot91}:

\begin{theorem}
\label{thm:MRD_bound}
If $\cC$ is a $[\tau \times \eta,\ell,\delta]$ rank-metric code, then
\begin{equation}
\label{eq:MRD_bound}
\ell \leq \textup{min}\{\tau(\eta-\delta+1),\eta(\tau-\delta+1)\},
\end{equation}
and this bound is attained for all possible parameters.
\end{theorem}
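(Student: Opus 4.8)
The plan is to prove the bound in two symmetric ways and then handle attainability via the Gabidulin construction. For the bound $\ell \leq \eta(\tau - \delta + 1)$, first I would fix a set $I$ of $\delta - 1$ rows and consider the projection map $\pi_I$ that deletes those rows from each $\tau \times \eta$ codeword, sending $\cC$ into $\F_q^{(\tau - \delta + 1) \times \eta}$. The key claim is that $\pi_I$ is injective on $\cC$: if $A, B \in \cC$ agree outside the rows in $I$, then $A - B$ is supported on only $\delta - 1$ rows, hence $\operatorname{rank}(A - B) \leq \delta - 1 < \delta$, forcing $A = B$ by the minimum distance hypothesis. Since $\pi_I$ is an injective linear map, $\dim \cC \leq \dim \F_q^{(\tau - \delta + 1)\times \eta} = \eta(\tau - \delta + 1)$. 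The bound $\ell \leq \tau(\eta - \delta + 1)$ follows by transposing every codeword (transposition is a rank isometry and a linear isomorphism, so the transposed code is a $[\eta \times \tau, \ell, \delta]$ rank-metric code) and applying the first inequality; alternatively, repeat the argument deleting $\delta - 1$ columns instead of rows. Taking the minimum of the two gives Eq.~(\ref{eq:MRD_bound}).

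For attainability, I would invoke the Gabidulin (maximum rank distance) codes~\cite{Gab85}, whose construction I would sketch rather than develop in full. Assume w.l.o.g. $\tau \leq \eta$ (otherwise transpose, as above), so the minimum in Eq.~(\ref{eq:MRD_bound}) equals $\eta(\tau - \delta + 1)$. View $\F_q^{\tau \times \eta}$ by fixing a basis of $\F_{q^\eta}$ over $\F_q$ so that length-$\eta$ rows become elements of $\F_{q^\eta}$ and a matrix becomes a vector in $\F_{q^\eta}^\tau$; rank over $\F_q$ then equals the maximum number of $\F_q$-linearly independent coordinates. Pick $\beta_1,\dots,\beta_\tau \in \F_{q^\eta}$ that are linearly independent over $\F_q$, and for $m \triangleq \tau - \delta + 1$ define the code as the image of the evaluation map sending the linearized polynomial $f(x) = \sum_{i=0}^{m-1} c_i x^{q^i}$ (with $c_i \in \F_{q^\eta}$) to $(f(\beta_1),\dots,f(\beta_\tau))$. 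This code is $\F_{q^\eta}$-linear of dimension $m$ over $\F_{q^\eta}$, hence dimension $\eta m = \eta(\tau - \delta + 1)$ over $\F_q$, matching the bound. The distance claim reduces to the fact that a nonzero linearized polynomial of $q$-degree less than $m$ has an $\F_q$-kernel of dimension at most $m - 1$, so at most $m - 1$ of the $\F_q$-independent $\beta_j$ can be roots; consequently every nonzero codeword has at least $\tau - (m-1) = \delta$ coordinates that are $\F_q$-independent, giving rank $\geq \delta$.

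The main obstacle is the attainability half, specifically the rank computation for Gabidulin codes: one must translate "rank distance $\geq \delta$" into a statement about the $\F_q$-kernel dimension of linearized polynomials and then cite or prove the bound on the number of roots of such a polynomial. The Singleton-type inequality itself is a short and routine deletion argument; the genuine content — and the part that earlier literature~\cite{Del78,Gab85,Rot91} actually established — is the existence of codes meeting it for \emph{all} parameters. For the purposes of this paper I would state the Gabidulin construction and its distance property with a citation rather than reproving the root-counting lemma, since Theorem~\ref{thm:MRD_bound} is used here only as a black box for the recursive quintriple construction that follows.
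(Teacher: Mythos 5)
Your proposal is correct, but note that the paper does not prove this theorem at all: it simply states it and cites Delsarte, Gabidulin, and Roth \cite{Del78,Gab85,Rot91}, using MRD codes purely as a black box for the lifted-code constructions. What you supply is the standard textbook proof, and it holds up. The puncturing argument is sound: deleting a fixed set of $\delta-1$ rows is an $\F_q$-linear map that is injective on $\cC$ (a nonzero difference supported on $\delta-1$ rows has rank at most $\delta-1<\delta$), giving $\ell\leq\eta(\tau-\delta+1)$, and transposition --- a linear rank isometry --- yields the symmetric bound. Your reduction for attainability is also right: when $\tau\leq\eta$ one has $\tau(\eta-\delta+1)-\eta(\tau-\delta+1)=(\eta-\tau)(\delta-1)\geq 0$, so the minimum is $\eta(\tau-\delta+1)$, which the Gabidulin code of $q$-degree less than $m=\tau-\delta+1$ achieves; the rank computation via $\dim f(V)=\tau-\dim(\ker f\cap V)\geq\tau-(m-1)=\delta$ is the correct way to phrase ``at least $\delta$ independent coordinates,'' and the same kernel bound gives injectivity of the evaluation map, hence dimension $\eta m$ over $\F_q$. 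Your decision to cite rather than reprove the root-counting lemma for linearized polynomials is reasonable and mirrors what the paper itself does for the entire theorem. The only caveat worth recording is the implicit assumption $1\leq\delta\leq\min\{\tau,\eta\}$ (the ``possible parameters''), without which neither the deletion argument nor the construction parses.
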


Codes which attain the upper bound of Eq.~(\ref{eq:MRD_bound})
are called {\bf \emph{maximum rank distance}} codes (or {\bf \emph{MRD}} codes in short).

A $\tau \times \eta$ matrix $A$ over $\F_q$ is {\bf \emph{lifted}} to a $\tau$-subspace of $\F_q^{\tau+\eta}$
whose generator matrix is $[I_\tau~A]$, where $I_\tau$ is the identity matrix of order $\tau$.
A $[\tau \times \eta,\ell,\delta]$ rank-metric code $\cC$
is \emph{lifted} to a code~$\C$ of $\tau$-subspaces in $\F_q^{\tau+\eta}$ by lifting
all the codewords of~$\cC$, i.e., $\C \triangleq \{ \Span{[I_\tau ~ A]}_R ~:~ A \in \cC \}$, where
$\Span{B}_R$ is the linear span of the rows of $B$. This code will be denoted by $\C^{\text{MRD}}$.
If $\delta = \tau$ then all the codewords ($\tau$-subspaces) of $\C$ are pairwise disjoint (intersect in the null-space $\{ {\bf 0}\}$).
These codes have found applications in random network coding, where their constructions
can be found~\cite{EtSi09,EtSi13,EtSt16,KoKs08,SKK08}.

We start by describing the construction of quintriples from~$\F_2^4$ (one can find such quintriples
by an appropriate computer search).
Let $\alpha$ be a root of the primitive polynomial ${x^4+x+1}$, i.e., $\alpha^4 = \alpha +1$.
The nonzero elements of the finite field $\F_{16}$ are given in the following table.

\begin{table}[h]
\centering
\begin{tabular}{|c|c|c|c|c|c|}
  \hline
   & $\alpha^3$ & $\alpha^2$ & $\alpha^1$ & $\alpha^0$  \\ \hline
  $\alpha^0$ & 0 & 0 & 0 & 1  \\\hline
  $\alpha^1$ & 0 & 0 & 1 & 0  \\\hline
  $\alpha^2$ & 0 & 1 & 0 & 0  \\\hline
  $\alpha^3$ & 1 & 0 & 0 & 0  \\\hline
  $\alpha^4$ & 0 & 0 & 1 & 1  \\\hline
  $\alpha^5$ & 0 & 1 & 1 & 0  \\\hline
  $\alpha^6$ & 1 & 1 & 0 & 0  \\\hline
  $\alpha^7$ & 1 & 0 & 1 & 1  \\\hline
  $\alpha^8$ & 0 & 1 & 0 & 1  \\\hline
  $\alpha^9$ & 1 & 0 & 1 & 0  \\\hline
  $\alpha^{10}$ & 0 & 1 & 1 & 1  \\\hline
  $\alpha^{11}$ & 1 & 1 & 1 & 0  \\\hline
  $\alpha^{12}$ & 1 & 1 & 1 & 1  \\\hline
  $\alpha^{13}$ & 1 & 1 & 0 & 1  \\\hline
  $\alpha^{14}$ & 1 & 0 & 0 & 1  \\
  \hline
\end{tabular}
\end{table}

It is easy to verify that if $S$ is a quintriple, then also $\alpha^i S$ is a quintriple for each $i \geq 0$.
The set of elements $S=\{ \alpha^0,\alpha^1,\alpha^3,\alpha^4,\alpha^7 \}$ is a quintriple,
where $\alpha^4 = \alpha + 1$ and $\alpha^4 = \alpha^3 + \alpha^7$.
The three sets $S$, $\alpha^5 S = \{ \alpha^5,\alpha^6,\alpha^8,\alpha^9,\alpha^{12} \}$,
and $\alpha^{10} S= \{ \alpha^2,\alpha^{10},\alpha^{11},\alpha^{13},\alpha^{14} \}$, form a partition of the nonzero
elements of $F_{2^4}$ and hence also of $\F_2^4$ into three quintriples.
This partition will be called the {\bf \emph{basic partition}}. Partitions have an important role in our exposition.
A basic theorem on partitions will be given later (see Theorem~\ref{thm:spread_parts}), but before the partitions will be based
on lifted rank-metric codes and will be described individually.

Now, let $m=4 r$, $r \geq 2$, and let $\cC$ be a $[4 \times (4r-4),\ell,4]$ MRD code.
By Theorem~\ref{thm:MRD_bound}, we have that $\ell \leq 4r-4$ and there exists a $[4 \times (4r-4),4r-4,4]$ code $\cC$.
Let $\C^{\text{MRD}}$ be the lifted code of~$\cC$.
Each codeword of $\C^{\text{MRD}}$ is a 4-subspace of $\F_2^{4r}$ and any two
such 4-subspaces intersect in the null-space~$\{ {\bf 0}\}$. Hence, each such codeword can be partitioned into three disjoint
quintriples, isomorphic to the quintriples of the basic partition.
The nonzero elements which are contained in these $2^{4r-4}$ 4-subspaces are spanned by the rows of matrices of the form $[I_4~A]$,
where $A$ is a $4 \times (4r-4)$ binary matrix.
Hence, the only nonzero elements of $\F_2^{4r}$ which are not contained in these $2^{4r-4}$
4-subspaces are exactly the $2^{2r-4}-1$ vectors of the form $(0,0,0,0, z_1,z_2, \ldots ,z_{4r-4})$, where $z_i \in \{0,1\}$, and
at least one of the~$z_i$'s is not a \emph{zero}. This set of vectors and the all-zeros vector of length~$4r$,
form a $(4r-4)$-space isomorphic to $\F_2^{4r-4}$. We continue recursively with the same procedure
for $m'=4(r-1)$. The process ends with the initial condition which is
the basic partition of $\F_2^4$. The outcome will be a partition of~$\F_2^{4r}$ into
$\frac{2^{4r}-1}{5}$ pairwise disjoint quintriples. Each one of the $2^{k-2}$ rows of the matrix $T$ is used to construct
one recovery set and each quintriple obtained in this way yields another recovery set. Hence we have the following result
(note that $N_2(2,2)=1$).

\begin{lemma}
\label{lem:sun2}
If $k \geq 2$ and $k \equiv 2~(\text{mod}~4)$ then $N_2(k,2) \geq 2^{k-2} + \frac{2^{k-2}-1}{5}=\frac{3 \cdot 2^{k-1} -1}{5}$.
\end{lemma}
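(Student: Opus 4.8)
The plan is to establish the lower bound $N_2(k,2) \ge 2^{k-2} + \frac{2^{k-2}-1}{5}$ by exhibiting that many pairwise disjoint recovery sets for the $2$-subspace $U = \Span{u,v}$, realized inside the $2^{k-2} \times 4$ array $T$ whose rows are indexed by $\F_2^{k-2}$ and whose columns are indexed by $\bfzero, u, v, u+v$. Write $k = m+2$ with $m = k-2$, so by hypothesis $m \equiv 0 \pmod 4$, say $m = 4r$. The recovery sets come in two families: first, from each of the $2^{m}$ rows of $T$ we peel off one recovery set of the appropriate minimal size (two elements from the first row, three elements from each internal row, as in Theorem~\ref{thm:basic_parts} and Corollary~\ref{cor:primitiveRecover}), leaving exactly one \emph{leftover} element in each row, which (since these leftovers may be chosen freely) we arrange to sit in a prescribed column; second, from the $2^m$ leftovers we build recovery sets of size five, one per quintriple, using the explicit five-element configuration displayed just before the statement, where the identities $(\bfzero,v) = (x_1,\bfzero)+(x_2,u)+(x_3,u+v)$ and $(\bfzero,u) = (x_1,\bfzero)+(x_4,v)+(x_5,u+v)$ show that a quintriple $\{x_1,\dots,x_5\}$ with $x_1 = x_2+x_3 = x_4+x_5$ yields a recovery set for $U$. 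So the whole problem reduces to partitioning $\F_2^{4r}$ into pairwise disjoint quintriples.

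The core step is therefore the recursive partition of $\F_2^{4r}$ into $\frac{2^{4r}-1}{5}$ quintriples, and this is the step I expect to carry the weight. The base case $r=1$ is the explicit \emph{basic partition} of $\F_2^4$ into three quintriples $S$, $\alpha^5 S$, $\alpha^{10}S$ given above (using $\alpha$ a root of $x^4+x+1$). For the inductive step, take a $[4\times(4r-4),\,4r-4,\,4]$ MRD code $\cC$ (which exists by Theorem~\ref{thm:MRD_bound}) and form its lifted code $\C^{\text{MRD}}$: its $2^{4r-4}$ codewords are $4$-subspaces of $\F_2^{4r}$, pairwise intersecting only in $\{\bfzero\}$, and each is the row span of some $[I_4 \mid A]$. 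Each such $4$-subspace is abstractly isomorphic to $\F_2^4$ via an isomorphism carrying the standard basis to the rows of $[I_4 \mid A]$; since the quintriple property $x_1 = x_2+x_3 = x_4+x_5$ is preserved by linear isomorphisms, the basic partition transports to a partition of each codeword into three quintriples. The nonzero vectors of $\F_2^{4r}$ \emph{not} covered by $\C^{\text{MRD}}$ are precisely those of the form $(0,0,0,0,z)$ with $z \in \F_2^{4r-4}\setminus\{\bfzero\}$, and together with $\bfzero$ they form a subspace isomorphic to $\F_2^{4(r-1)}$; recurse there. The union of all the quintriples obtained is a partition of $\F_2^{4r}$, and counting gives $\frac{2^{4r}-2^{4r-4}}{5} + \frac{2^{4r-4}-1}{5} = \frac{2^{4r}-1}{5}$ quintriples, confirmed by the telescoping of the recursion down to the base case.

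Assembling the two families: the $2^{k-2}$ row-derived recovery sets are pairwise disjoint and disjoint from the quintriple-derived sets (the latter use only leftovers, one per row), and the $\frac{2^{k-2}-1}{5}$ quintriple-derived sets are pairwise disjoint because the quintriples partition the row-index set. Hence
$$
N_2(k,2) \;\ge\; 2^{k-2} + \frac{2^{k-2}-1}{5} \;=\; \frac{5\cdot 2^{k-2} + 2^{k-2} - 1}{5} \;=\; \frac{3\cdot 2^{k-1} - 1}{5},
$$
which is the claimed inequality; the boundary check $k=2$ is the trivial $N_2(2,2)=1$. The only genuine obstacle is the bookkeeping in the recursion — verifying that the leftover-column assignment is globally consistent (i.e., that one can simultaneously realize all quintriple configurations with a valid choice of one leftover per row) and that the transported basic partitions on distinct MRD codewords are automatically disjoint because the codewords themselves are disjoint; both follow from the structure but deserve explicit mention.
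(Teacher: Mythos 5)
Your proposal is correct and follows essentially the same route as the paper: one recovery set per row of $T$ (size two from the first row, size three from each internal row), then the leftovers organized via a recursive partition of $\F_2^{k-2}\setminus\{\bfzero\}$ into quintriples, built from the basic partition of $\F_2^4$ transported along the pairwise-disjoint codewords of a lifted $[4\times(4r-4),4r-4,4]$ MRD code. The only cosmetic difference is that you explicitly flag the consistency of the leftover assignment (each internal row lies in exactly one quintriple, so the choices never conflict), which the paper leaves implicit.
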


\begin{example}
The following four $4 \times 4$ matrices form a basis for a $[4 \times 4,4,4]$ MRD code $\cC$.
$$
\left[
\begin{array}{cccc}
0 & 0 & 0 & 1 \\
1 & 0 & 0 & 0 \\
1 & 1 & 0 & 1 \\
1 & 0 & 1 & 1
\end{array}
\right],
\left[
\begin{array}{cccc}
1 & 0 & 0 & 0 \\
0 & 1 & 0 & 0 \\
0 & 0 & 1 & 1 \\
0 & 0 & 0 & 1
\end{array}
\right],
\left[
\begin{array}{cccc}
0 & 1 & 0 & 0 \\
0 & 0 & 1 & 0 \\
1 & 0 & 1 & 0 \\
0 & 1 & 0 & 1
\end{array}
\right] ,~ \text{and} ~
\left[
\begin{array}{cccc}
0 & 0 & 1 & 0 \\
0 & 1 & 0 & 1 \\
0 & 0 & 0 & 1 \\
1 & 1 & 0 & 1
\end{array}
\right] .
$$
The code $\cC$ contain sixteen $4 \times 4$ matrices. Each one of these sixteen matrices
is lifted to a 4-subspace of $\F_2^8$ and a code $\C^{\text{MRD}}$ with sixteen 4-subspaces of $\F_2^8$ is obtained.
As an example, the first $4 \times 4$ matrix is lifted to a $4 \times 8$ matrix
$$
\left[
\begin{array}{cccccccc}
1 & 0 & 0 & 0 & 0 & 0 & 0 & 1 \\
0 & 1 & 0 & 0 & 1 & 0 & 0 & 0 \\
0 & 0 & 1 & 0 & 1 & 1 & 0 & 1 \\
0 & 0 & 0 & 1 & 1 & 0 & 1 & 1
\end{array}
\right]
$$
which is a generator matrix of a 4-subspace of $\F_2^8$ whose fifteen nonzero vector are ordered to
have the isomorphism $\varphi$ of this 4-subspace to the additive group of $\F_{16}$, where
$\alpha$ is a root of the primitive polynomial $x^4+x+1$ and $\varphi (\alpha^i)=a_i$
for each $0 \leq i \leq 14$.
$$
\begin{array}{cccccccccc}
a_0 & = & 1 & 0 & 0 & 0 & 0 & 0 & 0 & 1 \\
a_1 & = & 0 & 1 & 0 & 0 & 1 & 0 & 0 & 0 \\
a_2 & = & 0 & 0 & 1 & 0 & 1 & 1 & 0 & 1 \\
a_3 & = & 0 & 0 & 0 & 1 & 1 & 0 & 1 & 1 \\
a_4 = a_0 + a_1 & = & 1 & 1 & 0 & 0 & 1 & 0 & 0 & 1 \\
a_5 = a_1 + a_2 & = & 0 & 1 & 1 & 0 & 0 & 1 & 0 & 1 \\
a_6 = a_2 + a_3 & = & 0 & 0 & 1 & 1 & 0 & 1 & 1 & 0 \\
a_7 = a_0 + a_1 + a_3 & = & 1 & 1 & 0 & 1 & 0 & 0 & 1 & 0 \\
a_8 = a_0 + a_2 & = & 1 & 0 & 1 & 0 & 1 & 1 & 0 & 0 \\
a_9 = a_1 + a_3 & = & 0 & 1 & 0 & 1 & 0 & 0 & 1 & 1 \\
a_{10} = a_0 + a_1 + a_2 & = & 1 & 1 & 1 & 0 & 0 & 1 & 0 & 0 \\
a_{11} = a_1 + a_2 + a_3 & = & 0 & 1 & 1 & 1 & 1 & 1 & 1 & 0 \\
a_{12} = a_0 + a_1 + a_2 + a_3 & = & 1 & 1 & 1 & 1 & 1 & 1 & 1 & 1 \\
a_{13} = a_0 + a_2 + a_3 & = & 1 & 0 & 1 & 1 & 0 & 1 & 1 & 1 \\
a_{14} = a_0 + a_3 & = & 1 & 0 & 0 & 1 & 1 & 0 & 1 & 0 \\
\end{array}
$$
Now, we have a partition of the nonzero vectors of the 4-subspace which is isomorphic to the basic partition as follows:
$$
\{ a_0,a_1,a_3,a_4,a_7 \},~\{ a_5,a_6,a_8,a_9,a_{12} \},~\{ a_2,a_{10},a_{11},a_{13},a_{14} \},
$$
where $a_4=a_0+a_1=a_3+a_7$, $a_9=a_5+a_6=a_8+a_{12}$, and $a_{14}=a_{10}+a_{11}=a_2+a_{13}$.

The same procedure is done on the sixteen 4-subspace of $\C^{\text{MRD}}$ and to the 4-subspace whose generator matrix is
$$
\left[
\begin{array}{cccccccc}
1 & 0 & 0 & 0 & 0 & 0 & 0 & 0 \\
0 & 1 & 0 & 0 & 0 & 0 & 0 & 0 \\
0 & 0 & 1 & 0 & 0 & 0 & 0 & 0 \\
0 & 0 & 0 & 1 & 0 & 0 & 0 & 0
\end{array}
\right] .
$$
These seventeen 4-subspaces of $\F_2^8$ are pairwise disjoint (intersect in the null-space $\{ \bfzero \}$ and hence the ${51 = 17 \cdot 3}$
quintriples which were constructed form a partition of $\F_2^8 \setminus \{ 0 \}$.
$\blacksquare$
\end{example}

We continue and describe the partition of $\F_2^5$. Let $\alpha$ be a root of the primitive
polynomial ${x^5 + x^2+1}$. Each one of the five sets $S_1 = \{\alpha^5,\alpha^0,\alpha^2,\alpha^7,\alpha^{10} \}$,
$S_2 = \alpha S_1$, $S_3 = \alpha^{12} S_1$, $S_4 = \alpha^{13} S_1$,
and ${S_5 = \{ \alpha^{16},\alpha^4,\alpha^{27},\alpha^{29},\alpha^{30} \}}$ is
a quintriple. These five sets and the set $\{ \alpha^9,\alpha^{21},\alpha^{24},\alpha^{25},\alpha^{26},\alpha^{28}\}$
form a partition of the 31 nonzero elements of $\F_2^5$. Moreover, the four elements $\alpha^{21},\alpha^{25},\alpha^{26},\alpha^{28}$
are linearly dependent and will be used later to form an additional recovery set.
Now, let $m=4r+1$ and let $\cC$ be a $[4 \times (4r-3),\ell,4]$ MRD code.
By Theorem~\ref{thm:MRD_bound}, we have that $\ell \leq 4r-3$ and there exists a ${[4 \times (4r-3),4r-3,4]}$ code $\cC$.
Let~$\C^{\text{MRD}}$ be the lifted code of $\cC$.
Each codeword of $\C^{\text{MRD}}$ is a 4-subspace of $\F_2^{4r+1}$ and any two
such 4-subspaces intersect in the null-space $\{ {\bf 0}\}$. Hence, each such codeword can be partitioned into three disjoint
quintriples, isomorphic to the quintriples of the basic partition.
The only nonzero elements of $\F_2^{4r+1}$ which are not contained in these $2^{4r-3}$
4-subspaces are exactly the $2^{4r-3}-1$ vectors of the form $(0,0,0,0, z_1,z_2, \ldots ,z_{4r-3})$, where $z_i \in \{0,1\}$, and
at least one of the $z_i$'s is not \emph{zero}. This set of vectors and the all-zeros vector of length $4r+1$,
form a $(4r-3)$-space isomorphic to $\F_2^{4r-3}$. We continue recursively with the same procedure
for $m'=4r-3$. This recursive procedure ends when $m'=5$, where we use the partition of $\F_2^5 \setminus \{0\}$ with 5 quintriples
and six more elements from which four are linearly dependent. The outcome will be a
partition of $\F_2^{4r+1} \setminus \{0\}$ into $\frac{2^{4r+1}-7}{5}$ quintriples, a subset with four linearly
dependent elements and a subset of size two. The subset with four linearly dependent elements and the remaining element
of the first row yields another recovery set as follows. Let $\{ x_1,x_2,x_3,x_4 \}$ be the set of linearly
dependent elements, i.e., $x_1 + x_2 + x_3 + x_4 =0$ and assume w.l.o.g. that in the first row $(\bfzero,v)$ was not used
in any recovery set. As the remaining
elements in each row can be chosen arbitrarily we choose $(x_1 ,u)$, $(x_2,\bfzero)$, $(x_3,\bfzero)$, and $(x_4,\bfzero)$.
Since $(\bfzero,u) = (x_1 , u) + (x_2,\bfzero) + (x_3,\bfzero) + (x_4,\bfzero)$ and $(\bfzero,v)$ from the first row was not used
in a recovery set, it follows that $U$ is recovered from $\{ (\bfzero,v), (x_1,u) ,(x_2,\bfzero),(x_3,\bfzero),(x_4,\bfzero) \}$.
Thus, the construction leads to the following bound (note that $N_2(3,2)=2$).

\begin{lemma}
\label{lem:sun3}
If $k \geq 3$ and $k \equiv 3~(\text{mod}~4)$ then $N_2(k,2) \geq 2^{k-2} + \frac{2^{k-2}-7}{5} +1=\frac{3 \cdot 2^{k-1} -2}{5}$.
\end{lemma}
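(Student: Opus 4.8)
The plan is to realize, for $m:=k-2$ (which satisfies $m\equiv 1\pmod 4$ under the hypothesis), the three–layer construction sketched above and then add up. First I would fix the $2^{k-2}\times 4$ array $T$ with rows indexed by $\F_2^{m}$ and columns indexed by $\bfzero,u,v,u+v$, where $U=\Span{u,v}$ is the target $2$-subspace. By Corollary~\ref{cor:primitiveRecover} and Lemma~\ref{lem:simple_recover}, any two nonzero entries of the first row span $U$ and any three entries of any other row span $U$; this gives $2^{k-2}$ pairwise disjoint recovery sets and leaves exactly one unused entry (a \emph{leftover}) in each row, and the leftover in a row may be taken to be whichever entry we like (in the first row I take it to be $(\bfzero,v)$).

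Second, I would partition the leftover index set $\F_2^{m}\setminus\{\bfzero\}$ into quintriples---subsets $\{x_1,\dots,x_5\}$ with $x_1=x_2+x_3=x_4+x_5$---together with one linearly dependent $4$-element block and one $2$-element block. The base case is $m=5$: verify by inspection the five listed quintriples over $\F_{2^5}$ (with $x^5+x^2+1$) and that the remaining six elements contain the linearly dependent quadruple $\{\alpha^{21},\alpha^{25},\alpha^{26},\alpha^{28}\}$. For $m=4r+1$, $r\ge 2$, take a $[4\times(4r-3),4r-3,4]$ MRD code (which exists by Theorem~\ref{thm:MRD_bound}), lift it to $2^{4r-3}$ pairwise disjoint $4$-subspaces of $\F_2^{m}$, split each such $4$-subspace into three quintriples via the additive isomorphism with the basic partition of $\F_2^4$ (legitimate since the quintriple relations are purely additive), and note that the nonzero vectors left uncovered number $16\cdot 2^{4r-3}-15\cdot 2^{4r-3}-1=2^{4r-3}-1$ and form a subspace isomorphic to $\F_2^{4(r-1)+1}$, so the recursion stays in the same residue class mod $4$ and terminates at $m=5$. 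This produces $\big((2^{m}-1)-4-2\big)/5=(2^{k-2}-7)/5$ quintriples.

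Third, I would turn each block into a recovery set. Given a quintriple $\{x_1,\dots,x_5\}$, I assign the leftovers $(x_1,\bfzero),(x_2,u),(x_3,u+v),(x_4,v),(x_5,u+v)$; since $x_1+x_2+x_3=x_1+x_4+x_5=\bfzero$ over $\F_2$, the sums $(x_1,\bfzero)+(x_2,u)+(x_3,u+v)=(\bfzero,v)$ and $(x_1,\bfzero)+(x_4,v)+(x_5,u+v)=(\bfzero,u)$ show $U\subseteq\Span{P}$. For the $4$-element block $\{x_1,x_2,x_3,x_4\}$ with $x_1+x_2+x_3+x_4=\bfzero$, I assign the leftovers $(x_1,u),(x_2,\bfzero),(x_3,\bfzero),(x_4,\bfzero)$, whose sum is $(\bfzero,u)$, and adjoin the still-unused first-row leftover $(\bfzero,v)$ to obtain one further recovery set. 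The $2$-element block is discarded. Disjointness is immediate: the $2^{k-2}$ row recovery sets use non-leftover entries, each leftover belongs to one row and is used in at most one block-derived set, and the blocks partition the row indices.

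Finally I would sum: $2^{k-2}+\frac{2^{k-2}-7}{5}+1=\frac{3\cdot 2^{k-1}-2}{5}$, with the degenerate case $k=3$ (where $m=1$ and the construction is vacuous) covered by the known equality $N_2(3,2)=2$, which equals the claimed quantity. I expect the only real work to be the recursive partition of $\F_2^{m}\setminus\{\bfzero\}$---getting the base-case $m=5$ data right and checking that the lifted-MRD step leaves exactly an $\F_2^{m-4}$ worth of vectors uncovered; everything else is assignment and bookkeeping.
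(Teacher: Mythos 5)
Your proposal is correct and follows essentially the same route as the paper: one recovery set per row of $T$, a recursive lifted-MRD partition of $\F_2^{k-2}\setminus\{\bfzero\}$ into quintriples whose $m=5$ base case leaves a linearly dependent quadruple and a pair, and one extra recovery set formed from that quadruple together with the unused first-row leftover $(\bfzero,v)$. The block-to-recovery-set assignments and the final count match the paper's argument.
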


We continue with an example of the partition for $\F_2^6$.

\begin{example}
\label{ex:quintriples6}
Let $\alpha$ be a root of the primitive polynomial $x^6 + x+1$, i.e., $\alpha^6 = \alpha +1$. The table of $\F_{63}$ yields
the following equalities
\begin{align*}
\alpha^0 = \alpha^1 + \alpha^6 = \alpha^{13} + \alpha^{35}, \\
\alpha^9 = \alpha^7 + \alpha^{19} = \alpha^{12} + \alpha^{41} .
\end{align*}
These two equalities yield the following four quintriples
\begin{align*}
S_1 &= \{\alpha^0,\alpha^1,\alpha^6,\alpha^{13},\alpha^{35} \},\\
S_2 = \alpha^2 S_1 &= \{\alpha^2,\alpha^3,\alpha^8,\alpha^{15},\alpha^{37} \},\\
S_3 = \alpha^4 S_1 &= \{\alpha^4,\alpha^5,\alpha^{10},\alpha^{17},\alpha^{39} \},\\
S_4 &= \{\alpha^7,\alpha^9,\alpha^{12},\alpha^{19},\alpha^{41} \}.\\
\end{align*}
It is east to verify that for each $j$, $0 \leq j \leq 20$, $j \neq 11$, there is exactly one $i$,
such that ${i \equiv j~ (\mmod~21)}$ and $\alpha^i \in S_1 \cup S_2 \cup S_3 \cup S_4$; for $j =11$, there
is no ${i \equiv j~ (\mmod~21)}$ such that
$\alpha^i \in S_1 \cup S_2 \cup S_3 \cup S_4$. This implies that
the twelve sets $S_1,S_2,S_3,S_4, \alpha^{21} S_1, \alpha^{21} S_2, \alpha^{21} S_3, \alpha^{21} S_4,
\alpha^{42} S_1, \alpha^{42} S_2, \alpha^{42} S_3, \alpha^{42} S_4$, and the 2-subspace $\{{\bf 0},\alpha^{11},\alpha^{32},\alpha^{53}\}$
form a partition of $\F_2^6$ into 12 quintriples and one 2-subspace.
$\blacksquare$
\end{example}

The leftover in each row of the 2-subspace, $\{{\bf 0},\alpha^{11},\alpha^{32},\alpha^{53}\}$, of Example~\ref{ex:quintriples6}, and the leftover
element from the first row, $(\bfzero,v)$, can be used to form a recovery set which consists of the elements
$\{(\bfzero,v),(\alpha^{11},\bfzero),(\alpha^{32},\bfzero),(\alpha^{53},u)\}$.
Hence, starting with $m=4r+2$ the partition $\F_2^{4r+2} \setminus \{0\}$ yields the following bound (note that $N_2(4,2)=5$).

\begin{lemma}
\label{lem:sun0}
If $k \geq 4$ and $k \equiv 0~(\text{mod}~4)$ then $N_2(k,2) \geq 2^{k-2} + \frac{2^{k-2}-4}{5} +1=\frac{3 \cdot 2^{k-1} +1}{5}$.
\end{lemma}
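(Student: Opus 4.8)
The plan is to mimic the recursive constructions already carried out for the residues $2$ and $3$ modulo $4$, now using $m=4r+2$ as the starting case and the base partition of $\F_2^6$ from Example~\ref{ex:quintriples6}. First I would set $m=4r+2$ with $r\geq 2$ and take a $[4\times(4r-2),4r-2,4]$ MRD code $\cC$, which exists by Theorem~\ref{thm:MRD_bound}; lifting $\cC$ gives $\C^{\text{MRD}}$, a family of $2^{4r-2}$ pairwise disjoint $4$-subspaces of $\F_2^{4r+2}$. Each such $4$-subspace splits into three quintriples isomorphic to those of the basic partition, exactly as in the proofs of Lemmas~\ref{lem:sun2} and~\ref{lem:sun3}. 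The nonzero vectors of $\F_2^{4r+2}$ not covered by $\C^{\text{MRD}}$ are precisely the $2^{4r-2}-1$ vectors with their first four coordinates zero, which together with $\zero$ form a subspace isomorphic to $\F_2^{4r-2}$; recursing on $m'=4(r-1)+2$ eventually reaches the base case $m=6$, where Example~\ref{ex:quintriples6} supplies $12$ quintriples and one leftover $2$-subspace $\{\zero,\alpha^{11},\alpha^{32},\alpha^{53}\}$. The outcome is a partition of $\F_2^{4r+2}\setminus\{0\}$ into $\tfrac{2^{4r+2}-4}{5}$ quintriples plus a single $2$-subspace of three nonzero elements.

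Next I would convert this partition of the row-index space $\F_2^{k-2}$ into recovery sets in $T$. By Corollary~\ref{cor:basic_parts} (or item 1 of the list in Section~\ref{sec:main_ideas}) the first row of $T$ yields $\lfloor(2^2-1)/2\rfloor=1$ recovery set of size $2$, leaving one leftover element per row, chosen freely. Each of the $2^{k-2}$ rows contributes one recovery set of size $3$ from its two non-leftover columns, giving $2^{k-2}$ recovery sets. Each quintriple $\{x_1,\dots,x_5\}$ with $x_1=x_2+x_3=x_4+x_5$ yields one further recovery set, exactly as spelled out before Lemma~\ref{lem:sun2}: picking the leftovers $(x_1,\zero),(x_2,u),(x_3,u+v),(x_4,v),(x_5,u+v)$ recovers both $(\zero,v)$ and $(\zero,u)$, hence $U$. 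That accounts for $\tfrac{2^{k-2}-4}{5}$ more recovery sets. Finally, the leftover $2$-subspace $\{\zero,\alpha^{11},\alpha^{32},\alpha^{53}\}$ and the still-unused element $(\zero,v)$ of the first row combine: choosing $(\alpha^{11},\zero),(\alpha^{32},\zero),(\alpha^{53},u)$ in the corresponding rows gives $(\zero,u)=(\alpha^{11},\zero)+(\alpha^{32},\zero)+(\alpha^{53},u)$, and with the untouched $(\zero,v)$ this is a recovery set for $U$. Adding up, $N_2(k,2)\geq 2^{k-2}+\tfrac{2^{k-2}-4}{5}+1=\tfrac{3\cdot 2^{k-1}+1}{5}$, and the case $k=4$ is checked directly against $N_2(4,2)=5$.

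The one point needing care — the analogue of the delicate counting in Example~\ref{ex:quintriples6} — is verifying that the recursion is consistent: that the residual subspace at each stage is genuinely isomorphic to $\F_2^{4(r-1)+2}$ so that the next layer of MRD-lifted quintriples can be laid down without overlap, and that the single leftover $2$-subspace is produced exactly once, at the base case, rather than accumulating at every level. This is handled by the same bookkeeping as in the proofs of Lemmas~\ref{lem:sun2} and~\ref{lem:sun3}: the lifted code covers every nonzero vector with a nonzero prefix in the first four coordinates, so the uncovered part is an honest subspace of dimension $m-4$, and the leftover count $2^{m}-4\equiv 0\pmod 5$ for $m\equiv 2\pmod 4$ guarantees the quintriple tally is an integer at each stage, with the lone $2$-subspace surviving only from the $m=6$ base. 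I do not expect any genuine obstacle here; the construction is a routine variation of the two cases already treated.
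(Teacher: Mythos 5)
Your proposal is correct and follows essentially the same route as the paper: the MRD-lifting recursion stepping down by four dimensions to the base case $m=6$ of Example~\ref{ex:quintriples6}, yielding $\frac{2^{k-2}-4}{5}$ quintriples plus one residual $2$-subspace whose three row-leftovers combine with the unused first-row element $(\bfzero,v)$ to give the final recovery set. The only blemish is a harmless wording slip ("one recovery set of size $3$ from its two non-leftover columns" --- the first row gives a size-$2$ set from two of its three entries and each internal row a size-$3$ set from three of its four entries), which does not affect the count of $2^{k-2}$ row-based recovery sets.
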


The construction for $m=4r+3$ is similar, but the construction for the initial condition of $m=7$
is more complicated. We manage to obtain a
partition of $\F_2^{4r+3} \setminus \{0\}$ into $\frac{2^{4r+3}-8}{5}$ quintriples, a subset with four linearly
dependent elements and another subset of size three. The subset with four linearly dependent elements and the remaining element
of the first row yields another recovery set and hence we obtain the following bound (note that $N_2(5,2)=9$).

\begin{lemma}
\label{lem:sun1}
If $k \geq 5$ and $k \equiv 1~(\text{mod}~4)$ then $N_2(k,2) \geq 2^{k-2} + \frac{2^{k-2}-8}{5} +1=\frac{3 \cdot 2^{k-1} -3}{5}$.
\end{lemma}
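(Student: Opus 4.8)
The plan is to follow the same template used for the cases $k\equiv 0,1,2\pmod 4$ in Lemmas~\ref{lem:sun2}--\ref{lem:sun0}, now with $m=k-2\equiv 3\pmod 4$, say $m=4r+3$. As in those proofs, the $2^{k-2}$ rows of the array $T$ already supply $2^{k-2}$ pairwise disjoint recovery sets: one of size two from the first row (Corollary~\ref{cor:basic_parts}) and one of size three from each internal row (Corollary~\ref{cor:primitiveRecover} and Lemma~\ref{lem:simple_recover}), leaving exactly one leftover $1$-subspace in every row, which may be chosen arbitrarily. Hence it suffices to partition the index set $\F_2^m\setminus\{0\}$ of the internal rows into as many quintriples as possible, since each quintriple $\{x_1,x_2,x_3,x_4,x_5\}$ with $x_1=x_2+x_3=x_4+x_5$ yields one further recovery set of size five through the leftovers $(x_1,\bfzero),(x_2,u),(x_3,u+v),(x_4,v),(x_5,u+v)$.

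For the recursive step take $m=4r+3$ with $r\ge 2$, choose a $[4\times(4r-1),4r-1,4]$ MRD code $\cC$ (which exists by Theorem~\ref{thm:MRD_bound}, since $\min\{16r-16,\,4r-1\}=4r-1$ when $r\ge 2$), and let $\C^{\text{MRD}}$ be its lift. The $2^{4r-1}$ codewords of $\C^{\text{MRD}}$ are pairwise disjoint $4$-subspaces of $\F_2^{4r+3}$, each isomorphic to $\F_2^4$ and hence decomposable into three quintriples isomorphic to those of the basic partition. The nonzero vectors not covered by these $4$-subspaces are exactly those of the form $(0,0,0,0,z_1,\ldots,z_{4r-1})$, which together with $\bfzero$ form a subspace isomorphic to $\F_2^{4r-1}=\F_2^{4(r-1)+3}$; repeating the argument on this subspace lowers $r$ by one at each step.

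The recursion bottoms out at $m=7$, where a $[4\times 3,\ell,4]$ MRD code is trivial and no further peeling is possible. For this base case the goal is a partition of $\F_2^{7}\setminus\{0\}$ (the $127$ nonzero vectors) into $24$ quintriples, one linearly dependent quadruple $\{x_1,x_2,x_3,x_4\}$ with $x_1+x_2+x_3+x_4=\bfzero$, and one further set of size three; this partition will be exhibited explicitly, found (as for $m=4,5,6$) by a short computer search exploiting that $\alpha^{i}S$ is again a quintriple whenever $S$ is, for $\alpha$ a primitive element of $\F_{128}$, so that only orbit representatives need to be checked. Given such a partition, the quadruple together with the still-unused leftover $(\bfzero,v)$ of the first row produces one more recovery set: choosing the leftovers $(x_1,u),(x_2,\bfzero),(x_3,\bfzero),(x_4,\bfzero)$ in the corresponding internal rows gives $(\bfzero,u)=(x_1,u)+(x_2,\bfzero)+(x_3,\bfzero)+(x_4,\bfzero)$, so $\{(\bfzero,v),(x_1,u),(x_2,\bfzero),(x_3,\bfzero),(x_4,\bfzero)\}$ spans $U$. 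Counting the $2^{k-2}$ recovery sets from the rows, the $\frac{2^{k-2}-8}{5}$ from the quintriples, and the one from the quadruple gives the asserted bound; the value $N_2(5,2)=9$ (the case $m=3$) is verified directly by exhibiting a single quintriple among the seven nonzero vectors of $\F_2^3$.

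The step I expect to be the main obstacle is the $m=7$ base case: unlike $m=4,5,6$ it has no obvious clean algebraic description, and since there is no nontrivial $[4\times 3,\cdot,4]$ MRD code the peeling recursion cannot reach it, so the needed partition of $\F_2^7\setminus\{0\}$ must be pinned down by an explicit, computer-assisted construction. Once that is in hand, the rest of the argument is a routine adaptation of the constructions already carried out for $k\equiv 0,1,2\pmod 4$.
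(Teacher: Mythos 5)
Your proposal follows essentially the same route as the paper: peel off $2^{k-2}$ recovery sets row by row, recursively partition $\F_2^{4r+3}\setminus\{0\}$ into quintriples via lifted $[4\times(4r-1),4r-1,4]$ MRD codes, and handle the base case $m=7$ by an explicit partition into $24$ quintriples, one linearly dependent quadruple (which, with the unused first-row leftover, yields the extra ``$+1$'' recovery set), and a residual triple. The paper likewise only asserts the existence of that $m=7$ partition without exhibiting it, so the one piece you flag as requiring computer search is exactly the piece the paper also leaves implicit; everything else in your argument matches the paper's construction.
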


Lemmas~\ref{lem:sun2},~\ref{lem:sun3},~\ref{lem:sun0}, and~\ref{lem:sun1} imply the following lower bound.

\begin{corollary}
\label{cor:two_lower}
If $k \geq 2$ is a positive integer, then $N_2 (k,2) \geq \left\lfloor \frac{3 \cdot 2^{k-1} +1}{5} \right\rfloor$.
\end{corollary}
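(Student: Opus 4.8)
The plan is to read the bound off directly from the four preceding lemmas together with a short residue computation; in fact each of Lemmas~\ref{lem:sun2}, \ref{lem:sun3}, \ref{lem:sun0}, and~\ref{lem:sun1} already delivers exactly $\left\lfloor \frac{3 \cdot 2^{k-1} + 1}{5} \right\rfloor$ in its own residue class of $k$ modulo $4$, so the corollary is just the amalgamation of the four cases.

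First I would note that $2$ has order $4$ in $\F_5^{\times}$, so $2^{k-1} \bmod 5$, and hence $3 \cdot 2^{k-1} + 1 \bmod 5$, depends only on $k \bmod 4$. Evaluating the four cases gives $3 \cdot 2^{k-1} + 1 \equiv 0,\,4,\,2,\,3 \pmod{5}$ for $k \equiv 0,\,1,\,2,\,3 \pmod{4}$ respectively, and therefore
$$
\left\lfloor \frac{3 \cdot 2^{k-1} + 1}{5} \right\rfloor =
\begin{cases}
\dfrac{3 \cdot 2^{k-1} + 1}{5}, & k \equiv 0 \pmod{4},\\[2ex]
\dfrac{3 \cdot 2^{k-1} - 3}{5}, & k \equiv 1 \pmod{4},\\[2ex]
\dfrac{3 \cdot 2^{k-1} - 1}{5}, & k \equiv 2 \pmod{4},\\[2ex]
\dfrac{3 \cdot 2^{k-1} - 2}{5}, & k \equiv 3 \pmod{4}.
\end{cases}
$$

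Next I would match each line with the corresponding lemma: the value for $k \equiv 0$ is the bound of Lemma~\ref{lem:sun0}, the value for $k \equiv 1$ is the bound of Lemma~\ref{lem:sun1}, the value for $k \equiv 2$ is the bound of Lemma~\ref{lem:sun2}, and the value for $k \equiv 3$ is the bound of Lemma~\ref{lem:sun3}. Since every integer $k \ge 2$ lies in exactly one of these four residue classes --- with the smallest members $k = 4,\,5,\,2,\,3$ covered by the explicit base values $N_2(4,2) = 5$, $N_2(5,2) = 9$, $N_2(2,2) = 1$, $N_2(3,2) = 2$ recorded inside those lemmas, each of which equals the claimed floor --- the inequality $N_2(k,2) \ge \left\lfloor \frac{3 \cdot 2^{k-1} + 1}{5} \right\rfloor$ follows for all such $k$.

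There is essentially no obstacle here: the only care needed is the modular bookkeeping that pairs each closed-form lower bound with the right residue class of $k$, plus the sanity check that the small-case values quoted in the four lemmas are themselves consistent with the uniform floor expression (they are). It is worth remarking that the argument in fact shows that the corollary's bound coincides with the bound of the relevant lemma in every residue class, so nothing is lost by recording the result in the single uniform form above.
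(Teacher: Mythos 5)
Your proposal is correct and matches the paper exactly: the paper offers no written proof, simply asserting that Lemmas~\ref{lem:sun2}, \ref{lem:sun3}, \ref{lem:sun0}, and~\ref{lem:sun1} imply the corollary, and your residue computation ($3\cdot 2^{k-1}+1 \equiv 0,4,2,3 \pmod 5$ for $k \equiv 0,1,2,3 \pmod 4$) is precisely the bookkeeping needed to see that each lemma's bound equals the uniform floor expression in its residue class. The check that the lemmas' hypotheses ($k\geq 4,5,2,3$ respectively) cover every $k\geq 2$ is also correct.
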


Corollary~\ref{cor:two_lower} and Lemma~\ref{lem:two_upper} imply the following theorem.

\begin{theorem}
\label{thm:two_dim2}
For $k \geq 2$, $N_2 (k,2) = \left\lfloor \frac{3 \cdot 2^{k-1} +1}{5} \right\rfloor$.
\end{theorem}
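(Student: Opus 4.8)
The plan is simply to sandwich $N_2(k,2)$ between the two bounds already established: the lower bound $N_2(k,2) \geq \floor{(3 \cdot 2^{k-1} + 1)/5}$ from Corollary~\ref{cor:two_lower} and the upper bound $N_2(k,2) \leq \floor{(3 \cdot 2^k + 3)/10}$ from Lemma~\ref{lem:two_upper}. So the only remaining task is to verify that these two expressions denote the same integer for every $k \geq 2$, after which the theorem is immediate.

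To that end I would write both bounds over the common denominator $10$: the lower bound becomes $\floor{(6 \cdot 2^{k-1} + 2)/10} = \floor{(3 \cdot 2^k + 2)/10}$, while the upper bound is $\floor{(3 \cdot 2^k + 3)/10}$. Two floors of the shape $\floor{n/10}$ and $\floor{(n+1)/10}$ coincide unless $n \equiv 9 \pmod{10}$, so it suffices to check that $3 \cdot 2^k + 2 \not\equiv 9 \pmod{10}$, i.e. that $3 \cdot 2^k \not\equiv 7 \pmod{10}$. Since $2^k \bmod 10$ runs through the length-$4$ cycle $2, 4, 8, 6$, the quantity $3 \cdot 2^k \bmod 10$ takes only the values $6, 2, 4, 8$ and is never $7$; hence the two bounds agree and $N_2(k,2) = \floor{(3 \cdot 2^{k-1} + 1)/5}$.

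I do not anticipate any genuine obstacle: both nontrivial ingredients — the quintriple/lifted-MRD-code construction underlying Corollary~\ref{cor:two_lower}, and the linear-programming duality argument underlying Lemma~\ref{lem:two_upper} — are already in place, so this last step is pure bookkeeping. An equally valid alternative that avoids the modular arithmetic is to split into the four residue classes of $k$ modulo $4$ and compare the explicit closed forms of Lemmas~\ref{lem:sun2}, \ref{lem:sun3}, \ref{lem:sun0}, and~\ref{lem:sun1} against $\floor{(3 \cdot 2^k + 3)/10}$ one class at a time; in each of the four cases the lower and upper bounds turn out to be literally the same integer, again yielding the stated value.
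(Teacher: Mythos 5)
Your proposal is correct and follows exactly the paper's route: the paper likewise obtains the theorem by combining the lower bound of Corollary~\ref{cor:two_lower} with the upper bound of Lemma~\ref{lem:two_upper}. Your explicit check that $\left\lfloor (3\cdot 2^k+2)/10 \right\rfloor = \left\lfloor (3\cdot 2^k+3)/10 \right\rfloor$ (since $3\cdot 2^k \not\equiv 7 \pmod{10}$) is a detail the paper leaves implicit, and it is verified correctly.
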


\subsection{Recovery Sets for 4-Subspaces}
\label{sec:large_sets}

Assume that the 4-subspace to be recovered is $U$ which is spanned by the vectors $(\bfzero,u_1)$, $(\bfzero,u_2)$, $(\bfzero,u_3)$,
and $(\bfzero,u_4)$ ($U$ will be referred as $\F_2^4$).

When $d=4$, the first row of $T$ can be partitioned into three recovery sets of size 4 and three leftovers.
Each other row of $T$ is partitioned into three recovery sets of size 5 and one leftover.

Consider now the 3-subspace of $\F_2^{k-4}$ spanned by $\{ x_1,x_2,x_3 \}$, where $x_1,x_2,x_3 \in \F_2^{k-4}$.
Consider now the seven vectors of $\F_2^k$,
$$
R \triangleq \{(x_1,u_1),(x_2,u_1+u_2),(x_3,u_1+u_3),(x_1+x_2,\bfzero),
$$
$$
(x_1+x_3,\bfzero),(x_2+x_3,u_2+u_3+u_4),(x_1+x_2+x_3,u_2+u_3)\} .
$$
These seven vectors form a recovery set for $U$ since we can recover
$(\bfzero,u_1)$, $(\bfzero,u_2)$, $(\bfzero,u_3)$, and $(\bfzero,u_4)$ as follows:
$$
(\bfzero,u_1)=(x_1,u_1)+(x_2,u_1+u_2)+(x_3,u_1+u_3)+(x_1+x_2+x_3,u_2+u_3)
$$
$$
(\bfzero,u_2)=(x_1,u_1)+(x_2,u_1+u_2)+(x_1+x_2,\bfzero)
$$
$$
(\bfzero,u_3)=(x_1,u_1)+(x_3,u_1+u_3)+(x_1+x_3,\bfzero)
$$
$$
(\bfzero,u_4)=(x_2,u_1+u_2)+(x_3,u_1+u_3)+(x_2+x_3,u_2+u_3+u_4).
$$
These seven elements of $R$ are formed from the rows indexed by
$$
x_1, ~ x_2, ~ x_3 , ~ x_1 + x_2, ~ x_1 + x_3, ~ x_2 + x_3, ~ x_1 + x_2 + x_3,
$$
in $T$, which are associated with exactly one 3-subspace of $\F_2^{k-4}$. This type of recovery set will be called
a {\bf \emph{(3,4)-recovery set model}}.

Since each internal row of $T$ yields exactly one leftover and this leftover can be chosen to be any element in the row,
it follows that to form more recovery sets we have to partition $\F_2^{k-4}$ to many 3-subspaces and some additional
elements which can be used with the three leftovers of the first row to form additional recovery sets. We distinguish between
three cases, depending on whether $k$ is congruent to  1, 2, or 0 modulo 3.

\noindent
{\bf Case 1:} $k \equiv 1 ~(\mmod ~ 3)$, i.e., $k-4 \equiv 0 ~(\mmod ~ 3)$. (note that 3 divides $k-7$.)

By Theorem~\ref{thm:MRD_bound}, there exists a $[3 \times (k-7),k-7,3]$ MRD code $\cC$.
Let $\C^{\text{MRD}}$ be the lifted code of~$\cC$.
Each codeword of $\C^{\text{MRD}}$ is a 3-subspace of $\F_2^{k-4}$ and any two
such 3-subspaces intersect in the null-space $\{ {\bf 0}\}$. By using the (3,4)-recovery set model, each such codeword can be used to form
a recovery set from the leftovers of the associated rows of $T$.
The nonzero elements which are contained in these $2^{k-7}$ 3-subspaces are spanned by the rows of matrices of the form $[I_3~A]$,
where $A$ is a $3 \times (k-7)$ binary matrix.
Therefore, the only nonzero elements of $\F_2^{k-4}$ which are not contained in these $2^{k-4}$
3-subspaces are exactly the $2^{k-7}-1$ vectors of the form $(0,0,0, z_1,z_2, \ldots ,z_{k-7})$, where $z_i \in \{0,1\}$, and
at least one of the~$z_i$'s is not a \emph{zero}. This set of vectors and the all-zeros vector of length~$k-4$,
form a $(k-4)$-space isomorphic to $\F_2^{k-4}$. We continue recursively with the same procedure.
The process ends with the initial condition which is the 3-subspace $\F_2^3$ from which one recovery set is obtained.
The outcome will be a partition of $\F_2^{k-4}$ into
$\frac{2^{k-4}-1}{7}$ pairwise disjoint 3-subspaces. Each one of the $2^{k-4}$ rows of the matrix $T$
is used to construct three recovery sets. Hence we have the following result.

\begin{lemma}
\label{lem:lb_d=4c1}
If $k >6$ and $k \equiv 1~(\textup{mod}~3)$ then $N_2(k,4) \geq 3 \cdot 2^{k-4} + \frac{2^{k-4}-1}{7}=\frac{11 \cdot 2^{k-3} -1}{7}$.
\end{lemma}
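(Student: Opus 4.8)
The plan is to assemble the bound from three ingredients already in place: the per-row partitions of the array $T$, the (3,4)-recovery set model, and a recursive spread construction for the row-index space $\F_2^{k-4}$. First I would collect the recovery sets coming directly from the rows of $T$. With $U=\F_2^4$ and $T$ of size $2^{k-4}\times 16$, the first row --- whose only unusable entry is the zero vector $T(\bfzero,\bfzero)$ --- splits by Corollary~\ref{cor:basic_parts} into $\floor{15/4}=3$ recovery sets of size $4$, leaving three leftover entries; every internal row $x\neq\bfzero$ splits by Corollary~\ref{cor:primitiveRecover} and Lemma~\ref{lem:simple_recover} into $\floor{16/5}=3$ recovery sets of size $5$, leaving exactly one leftover entry, which may be taken to be any prescribed entry of that row (slide the ``$(x,\bfzero)$ with four consecutive powers'' and ``five consecutive powers'' windows around the cyclic list of powers so that the desired entry is the omitted one). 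This produces $3+3(2^{k-4}-1)=3\cdot 2^{k-4}$ pairwise disjoint recovery sets and frees one leftover in each of the $2^{k-4}-1$ internal rows.

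Next I would feed these leftovers into the (3,4)-recovery set model. The hypothesis $k\equiv 1\pmod 3$ gives $3\mid k-4$, hence $7\mid 2^{k-4}-1$, which is exactly what is needed to partition $\F_2^{k-4}\setminus\{\bfzero\}$ into $\tfrac{2^{k-4}-1}{7}$ blocks, each the set of seven nonzero vectors of a $3$-dimensional subspace (a $3$-spread of $\F_2^{k-4}$). I would build this spread recursively as sketched above: by Theorem~\ref{thm:MRD_bound} take a $[3\times(k-7),\,k-7,\,3]$ MRD code $\cC$, lift it to $\C^{\text{MRD}}$, note that its $2^{k-7}$ codewords are pairwise disjoint $3$-subspaces covering precisely the nonzero vectors of $\F_2^{k-4}$ whose first three coordinates are not all zero, observe that the remaining nonzero vectors together with $\bfzero$ form a copy of $\F_2^{k-7}$ (with $3\mid k-7$), and recurse down to the base case $\F_2^3$; the block count telescopes to $\sum_{j\geq 0,\ 3j<k-4}2^{3j}=\tfrac{2^{k-4}-1}{7}$. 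For each block with basis $\{x_1,x_2,x_3\}$ I would set the (freely chosen) leftovers of its seven internal rows $x_1,x_2,x_3,x_1+x_2,x_1+x_3,x_2+x_3,x_1+x_2+x_3$ equal to the seven entries named in the (3,4)-recovery set model, so that each block yields one further recovery set, disjoint from everything built before.

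Adding the two contributions gives $3\cdot 2^{k-4}+\tfrac{2^{k-4}-1}{7}=\tfrac{22\cdot 2^{k-4}-1}{7}=\tfrac{11\cdot 2^{k-3}-1}{7}$, which is the claimed lower bound; the three first-row leftovers are simply not needed here, and the hypothesis $k>6$ with $k\equiv 1\pmod 3$ keeps the recursion and the MRD existence in range (the smallest case, $k=7$, already has $\tfrac{2^{3}-1}{7}=1$ block, namely $\F_2^3$ itself). I expect the only genuinely delicate point to be the disjointness bookkeeping: one must check that the three size-$5$ recovery sets extracted from an internal row can always be arranged so that the unique surviving entry of that row is precisely the one demanded by the (3,4)-model block through that row. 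This reduces to the two facts that each internal row lies in exactly one spread block and that block prescribes a single entry in that row, so the choices never conflict; establishing (and terminating) the lifted-MRD recursion --- i.e.\ the existence of the $3$-spread --- is the other step to verify with care, and it follows from Theorem~\ref{thm:MRD_bound} together with the coordinate-prefix argument above.
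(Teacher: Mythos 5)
Your proposal is correct and follows essentially the same route as the paper: three size-$4$ recovery sets from the first row, three size-$5$ recovery sets per internal row with one freely placeable leftover, a $3$-spread of $\F_2^{k-4}$ built recursively from lifted $[3\times(k-7),k-7,3]$ MRD codes, and one additional recovery set per spread block via the (3,4)-recovery set model. The extra care you take with the disjointness bookkeeping (each internal row lies in exactly one spread block, so the prescribed leftovers never conflict) is a detail the paper leaves implicit, but it is not a different argument.
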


\noindent
{\bf Case 2:} $k \equiv 2 ~(\mmod ~ 3)$, i.e., $k-4 \equiv 1 ~(\mmod ~ 3)$.

Using the same technique as in Case 1, the process ends with the initial condition which is the 4-subspace $\F_2^4$.
From this 4-subspace, one 3-subspace can be obtained to form one recovery set. From the eight remaining elements,
4 other linearly dependent elements can be used to obtain any element of $U$, which together with the three leftovers from the first row of $T$
form another recovery set for the 4-subspace~$U$. Thus, we have the following lemma.

\begin{lemma}
\label{lem:lb_d=4c2}
If $k > 6$ and $k \equiv 2~(\textup{mod}~3)$, then $N_2(k,4) \geq 3 \cdot 2^{k-4} + \frac{2^{k-4}-16}{7} +2 =\frac{11 \cdot 2^{k-3} -2}{7}$.
\end{lemma}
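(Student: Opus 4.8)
The plan is to follow the proof of Lemma~\ref{lem:lb_d=4c1} (Case~1) almost verbatim, changing only the base case of the recursion. Recall that $U=\F_2^4$ is recorded in the $2^{k-4}\times 16$ array $T$, whose rows are indexed by $\F_2^{k-4}$ (first row indexed by $\bfzero$). First I would handle the rows themselves: by Corollary~\ref{cor:basic_parts} the first row splits into $\lfloor 15/4\rfloor=3$ recovery sets of size $4$, and if these are taken to be three blocks of four consecutive powers of a primitive element of $\F_{2^4}$, the three leftover entries $(\bfzero,w_1),(\bfzero,w_2),(\bfzero,w_3)$ are three of four consecutive powers, hence $w_1,w_2,w_3$ are linearly independent. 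By Corollary~\ref{cor:primitiveRecover} and Lemma~\ref{lem:simple_recover} each of the $2^{k-4}-1$ internal rows splits into $\lfloor 16/5\rfloor=3$ recovery sets of size $5$ with a single leftover, and (as in the proofs above) that leftover may be prescribed to be any entry of the row. This already yields $3+3(2^{k-4}-1)=3\cdot 2^{k-4}$ recovery sets, so what remains is to use the internal-row leftovers together with the three first-row leftovers.

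Next I would repeat the lifted-MRD recursion of Case~1. Since $k\equiv 2\pmod 3$ we have $k-4\equiv 1\pmod 3$, so applying a lifted $[3\times(m'-3),m'-3,3]$ MRD code (which exists by Theorem~\ref{thm:MRD_bound}) at levels $m'=k-4,k-7,\dots,7$ produces pairwise disjoint $3$-subspaces of $\F_2^{k-4}$ covering everything except a residual copy of $\F_2^4$; summing the geometric series $2^4+2^7+\cdots+2^{k-7}$ shows there are $\tfrac{2^{k-4}-16}{7}$ of them. From the residual $\F_2^4$ I would peel off one further $3$-subspace (one is the most possible, since two $3$-subspaces of $\F_2^4$ meet in at least a $2$-subspace), leaving a coset $C$ of eight nonzero vectors. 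Applying the $(3,4)$-recovery set model to each of these $\tfrac{2^{k-4}-16}{7}+1$ three-subspaces $\langle x_1,x_2,x_3\rangle$ — that is, using the prescribed leftovers of the seven internal rows indexed by $x_1,x_2,x_3,x_1+x_2,x_1+x_3,x_2+x_3,x_1+x_2+x_3$ — gives that many further recovery sets, bringing the total to $3\cdot 2^{k-4}+\tfrac{2^{k-4}-16}{7}+1$.

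Finally I would extract one last recovery set from the eight-vector coset $C$. Being a coset of a $3$-subspace, $C$ contains four distinct vectors with $x_1+x_2+x_3+x_4=\bfzero$ (take $v,\,v+a,\,v+b,\,v+a+b$ with $a,b$ linearly independent in the underlying $3$-subspace; their sum is $\bfzero$). Choosing in their rows the leftovers $(x_1,u),(x_2,\bfzero),(x_3,\bfzero),(x_4,\bfzero)$ with $u$ picked so that $\{w_1,w_2,w_3,u\}$ is a basis of $\F_2^4$ (possible since $w_1,w_2,w_3$ are linearly independent), we get $(x_1,u)+(x_2,\bfzero)+(x_3,\bfzero)+(x_4,\bfzero)=(\bfzero,u)$, so $\{(\bfzero,w_1),(\bfzero,w_2),(\bfzero,w_3),(x_1,u),(x_2,\bfzero),(x_3,\bfzero),(x_4,\bfzero)\}$ recovers $U$. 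This gives the asserted count $3\cdot 2^{k-4}+\tfrac{2^{k-4}-16}{7}+2=\tfrac{11\cdot 2^{k-3}-2}{7}$, the fraction being an integer precisely because $k\equiv 2\pmod 3$. The only new ingredient over Case~1 is this base-case bookkeeping, and the point that needs care is consistency of the leftover assignments: every internal row must feed its single leftover into at most one auxiliary recovery set, which holds because the MRD-layer $3$-subspaces, the residual $3$-subspace, and the four vectors of $C$ used above occupy pairwise disjoint sets of rows. I expect this consistency check, rather than any individual computation, to be the only real obstacle.
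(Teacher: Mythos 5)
Your proposal is correct and follows essentially the same route as the paper: the same lifted-MRD recursion terminating in the base case $\F_2^4$, one extra $3$-subspace peeled off there, and a final recovery set built from four linearly dependent vectors of the residual coset together with the three first-row leftovers. You merely make explicit two details the paper leaves implicit (that the three first-row leftovers are linearly independent and that $u$ can be chosen to complete them to a basis, and that the leftover assignments across rows are consistent), which is a welcome tightening rather than a different argument.
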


\noindent
{\bf Case 3:} $k \equiv 0 ~(\mmod ~ 3)$, i.e., $k-4 \equiv 2 ~(\mmod ~ 3)$.

Using the same technique as in Case 1 and Case 2, the process ends with the initial condition which is the 5-subspace $\F_2^5$.
From this 5-subspace one 3-subspace can be obtained to form one recovery set and the other 24 leftovers with the three leftovers
from the first row of $T$ can be used to obtain three more recovery sets for the 4-subspace $U$. The process can also be ended
with the initial condition of the 8-subspace~$\F_2^8$ for which there are 34 disjoint 3-subspaces and 17 remaining elements.
With the three leftovers from the first row of $T$, we can construct two recovery sets to have the same result.
Thus, we have the following lemma.

\begin{lemma}
\label{lem:lb_d=4c0}
If $k > 6$ and $k \equiv 0~(\textup{mod}~3)$, then $N_2(k,4) \geq 3 \cdot 2^{k-4} + \frac{2^{k-4}-32}{7} +4 =\frac{11 \cdot 2^{k-3} -4}{7}$.
\end{lemma}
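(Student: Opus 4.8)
The plan is to follow the same three-step template used in Cases~1 and~2, the only new ingredient being the base case of the recursion, which now sits at dimension $5$ rather than $3$ or $4$. As in the opening of Section~\ref{sec:large_sets}, I arrange the servers in the $2^{k-4}\times 2^4$ array $T$: the first row splits into three size-$4$ recovery sets for $U=\F_2^4$ together with three leftovers $(\bfzero,w^{(1)}),(\bfzero,w^{(2)}),(\bfzero,w^{(3)})$, and each of the other $2^{k-4}-1$ rows splits into three size-$5$ recovery sets plus one leftover that may be taken to be any entry of that row. This already yields $3\cdot 2^{k-4}$ pairwise disjoint recovery sets, and what remains is to combine the $2^{k-4}$ leftovers into as many further recovery sets as possible, using the $(3,4)$-recovery set model (a $3$-subspace of $\F_2^{k-4}$ converted into one recovery set from the leftovers of its seven associated rows).

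Next I would build, recursively and exactly as in Case~1, a large packing of $\F_2^{k-4}$ by pairwise disjoint $3$-subspaces. Since $k\equiv 0\pmod 3$ we have $k-4\equiv 2\pmod 3$, so the recursion proceeds through the dimensions $k-4,k-7,\ldots,8$ (with no step when $k=9$): at dimension $m$ apply Theorem~\ref{thm:MRD_bound} to obtain a $[3\times(m-3),m-3,3]$ MRD code, lift it to $2^{m-3}$ pairwise disjoint $3$-subspaces of $\F_2^{m}$, and observe that the uncovered nonzero vectors span a copy of $\F_2^{m-3}$ on which to recurse. The $3$-subspaces produced this way number $2^5+2^8+\cdots+2^{k-7}=\tfrac{2^{k-4}-32}{7}$, each contributing one recovery set, and the recursion bottoms out on a copy of $\F_2^5$.

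The heart of the argument is the base case $\F_2^5$. Fix a $3$-subspace $W$ of it with a basis $w_1,w_2,w_3$; its seven nonzero vectors give one more recovery set via the $(3,4)$-model. The remaining $24$ vectors are exactly the three nontrivial cosets $z_1+W,z_2+W,z_3+W$, each of size $8$. For a fixed coset $z_i+W$, note that for every $2$-subspace $A\le W$ the four rows of $T$ indexed by $z_i+A$ have indices summing to $\bfzero$; hence, choosing the leftover entries in the eight rows indexed by $z_i+W$ as suitable values $(z_i+w,\phi_i(w))$, the three $2$-subspaces $\langle w_1,w_2\rangle,\langle w_1,w_3\rangle,\langle w_2,w_3\rangle$ of $W$ recover three vectors $(\bfzero,\ast)$ that can be made into a basis of any prescribed $3$-subspace $U_i\le U$. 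Choosing $U_i$ so that $w^{(i)}\notin U_i$, the eight rows indexed by $z_i+W$ together with the first-row leftover $(\bfzero,w^{(i)})$ span $U$, i.e.\ form a recovery set (of size $9$). Thus the $\F_2^5$ base yields $4$ recovery sets in all, and summing up gives $N_2(k,4)\geq 3\cdot 2^{k-4}+\tfrac{2^{k-4}-32}{7}+4=\tfrac{11\cdot 2^{k-3}-4}{7}$. (One could equally stop the recursion at $\F_2^8$; a short arithmetic check shows that a packing of $34$ pairwise disjoint $3$-subspaces there, leaving $17$ vectors that combine with the three first-row leftovers into $2$ recovery sets, gives the same bound.)

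The main obstacle is precisely this base-case bookkeeping: one must check that the eight leftover values along a coset $z_i+W$ can be chosen so that the three $2$-subspace combinations recover an honest $3$-dimensional subspace of $U$, and that the three first-row leftovers $w^{(1)},w^{(2)},w^{(3)}$ (which we are free to pick when partitioning the first row into three bases of $U$) can be matched to the three cosets so that each $w^{(i)}$ avoids the recovered subspace $U_i$. Both are short linear-algebra verifications over $\F_2$, but they are where the residue $k\equiv 0\pmod 3$ forces a genuinely different endgame from Cases~1 and~2; everything else is identical to those cases.
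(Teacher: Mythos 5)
Your proposal is correct and follows essentially the same route as the paper: the same array decomposition ($3\cdot 2^{k-4}$ recovery sets of sizes $4$ and $5$ plus one leftover per internal row), the same recursive packing of $\F_2^{k-4}$ by lifted-MRD $3$-subspaces terminating in the base case $\F_2^5$, and the same yield of $1+3$ additional recovery sets from that base, giving $\frac{2^{k-4}-32}{7}+4$ in total beyond the row-by-row sets. The paper only asserts, without detail, that the $24$ remaining vectors of $\F_2^5$ together with the three first-row leftovers produce three more recovery sets; your coset-of-$W$ construction with three size-$9$ sets is a valid realization of that claim, since the three sums over $\langle w_1,w_2\rangle$, $\langle w_1,w_3\rangle$, $\langle w_2,w_3\rangle$ can be made linearly independent (each of $\phi_i(w_1+w_2)$, $\phi_i(w_1+w_3)$, $\phi_i(w_2+w_3)$ occurs in exactly one sum), so each $U_i$ can be any $3$-subspace of $U$ avoiding $w^{(i)}$.
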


Lemmas~\ref{lem:lb_d=4c1}, \ref{lem:lb_d=4c2}, and \ref{lem:lb_d=4c0}, imply the following lower bound.

\begin{theorem}
\label{thm:low_bound_4}
For $k \geq 7$,
$$
N_2(k,4) \geq \left\lfloor \frac{11 \cdot 2^{k-3} -1}{7}  \right\rfloor .
$$
\end{theorem}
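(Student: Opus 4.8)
The plan is to simply assemble the three cases that have already been established. Lemmas~\ref{lem:lb_d=4c1}, \ref{lem:lb_d=4c2}, and \ref{lem:lb_d=4c0} handle the residues $k\equiv 1,2,0\pmod 3$ respectively, and each asserts a lower bound of the form $\frac{11\cdot 2^{k-3}-c}{7}$ with $c\in\{1,2,4\}$. So the only work is to check that in every case the stated quantity is at least $\left\lfloor \frac{11\cdot 2^{k-3}-1}{7}\right\rfloor$. Since $c=1$ gives the smallest numerator, the residue class $k\equiv 1\pmod 3$ is the bottleneck, and there the bound is literally $\frac{11\cdot 2^{k-3}-1}{7}$; one still needs to note this is an integer (because $11\cdot 2^{k-3}\equiv 2^{k-3}\equiv 2^{k}/8\pmod 7$ and, writing $k=3t+1$, $2^{3t+1-3}=2^{3(t-1)+1}\equiv 2\pmod 7$, so the numerator is divisible by $7$), hence equals its own floor and the claimed inequality holds with equality there.

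For the other two residues I would argue that $\frac{11\cdot 2^{k-3}-2}{7}$ and $\frac{11\cdot 2^{k-3}-4}{7}$ are each at least $\left\lfloor\frac{11\cdot 2^{k-3}-1}{7}\right\rfloor$. Concretely, $\left\lfloor\frac{11\cdot 2^{k-3}-1}{7}\right\rfloor \le \frac{11\cdot 2^{k-3}-1}{7}$, and for $k\equiv 2\pmod 3$ one checks $11\cdot 2^{k-3}\equiv 11\cdot 2\equiv 22\equiv 1\pmod 7$ (taking $k=3t+2$, $2^{k-3}=2^{3(t-1)+2}\equiv 4\pmod 7$, so $11\cdot 4\equiv 44\equiv 2\pmod 7$), so $\frac{11\cdot 2^{k-3}-2}{7}$ is an integer; comparing integers, $\frac{11\cdot 2^{k-3}-2}{7}\ge \left\lfloor\frac{11\cdot 2^{k-3}-1}{7}\right\rfloor$ because the floor of $\frac{11\cdot 2^{k-3}-1}{7}$ cannot exceed the nearby integer $\frac{11\cdot 2^{k-3}-2}{7}$. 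The same reasoning applies for $k\equiv 0\pmod 3$ with $\frac{11\cdot 2^{k-3}-4}{7}$ being an integer. In all three cases the lemma's bound dominates $\left\lfloor\frac{11\cdot 2^{k-3}-1}{7}\right\rfloor$, which is exactly the claim.

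The main (and essentially only) obstacle is the elementary but slightly fiddly modular arithmetic: one must be careful with the exponent $k-3$ and the period-$3$ behavior of $2^j\bmod 7$ (the cycle $2,4,1$), and check that the off-by-one and off-by-a-few adjustments in the three numerators never push the value below the uniform floor expression. Since the hypothesis is $k\ge 7$, all the constants subtracted ($1$, $16$, $32$ inside the lemmas, or equivalently $1,2,4$ after simplification) are comfortably smaller than $11\cdot 2^{k-3}$, so positivity is never in question; the only real content is the floor comparison, which I would dispatch by the integrality observations above. Finally, I would remark that the bound is in fact tight in many of these cases, matching an upper bound derived later, but for the present theorem it suffices to combine the three lemmas.

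\begin{proof}
The three cases $k\equiv 1,2,0\pmod 3$ are covered by Lemmas~\ref{lem:lb_d=4c1}, \ref{lem:lb_d=4c2}, and \ref{lem:lb_d=4c0}, giving
$N_2(k,4)\ge \frac{11\cdot 2^{k-3}-c}{7}$ with $c=1,2,4$ respectively. Writing $k=3t+r$ and using that $2^j\bmod 7$ is periodic with cycle $2,4,1$, one checks that in each case the numerator $11\cdot 2^{k-3}-c$ is divisible by $7$, so each lower bound is an integer. Since $\left\lfloor\frac{11\cdot 2^{k-3}-1}{7}\right\rfloor$ is the smallest of these integers (it equals $\frac{11\cdot 2^{k-3}-1}{7}$ when $k\equiv 1\pmod 3$, and is no larger than the integers $\frac{11\cdot 2^{k-3}-2}{7}$ and $\frac{11\cdot 2^{k-3}-4}{7}$ otherwise), the bound
$$
N_2(k,4)\ge \left\lfloor \frac{11\cdot 2^{k-3}-1}{7}\right\rfloor
$$
holds for all $k\ge 7$.
\end{proof}
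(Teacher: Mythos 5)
Your proposal is correct and takes exactly the paper's route: the paper derives Theorem~\ref{thm:low_bound_4} by simply combining Lemmas~\ref{lem:lb_d=4c1}, \ref{lem:lb_d=4c2}, and \ref{lem:lb_d=4c0}, and your check that $\left\lfloor \frac{11\cdot 2^{k-3}-1}{7}\right\rfloor$ coincides with each lemma's bound in its residue class (using the period-$3$ cycle of $2^j \bmod 7$) is precisely the routine verification the paper leaves implicit. One cosmetic caveat: in your exploratory prose the intermediate congruences $11\cdot 2^{k-3}\equiv 2^{k-3}\pmod 7$ and $11\cdot 2^{k-3}\equiv 11\cdot 2\pmod 7$ for $k\equiv 2\pmod 3$ are off (since $11\equiv 4\pmod 7$ and $2^{k-3}\equiv 4$ there), but your parenthetical corrections and the final proof paragraph are right.
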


The case of $k=6$ is solven in the following example.
\begin{example}
\label{ex:d=4_k=6}
For $k=6$ the matrix $T$ has four rows indexed by $0,\beta^0,\beta,\beta^2$, where $\beta$ in a primitive element is $\F_4$.
There are 16 columns in $T$ indexed by $0,\alpha^0,\alpha,\alpha^2,\alpha^3,\ldots,\alpha^{14}$.
The first three recovery sets are taken from the first row of $T$ as follows.
$$
\{ (\bfzero,\alpha^3),(\bfzero,\alpha^4),(\bfzero,\alpha^5),(\bfzero,\alpha^6)\},
\{ (\bfzero,\alpha^7),(\bfzero,\alpha^8),(\bfzero,\alpha^9),(\bfzero,\alpha^{10})\},
$$
$$
\{ (\bfzero,\alpha^{14}),(\bfzero,\alpha^0),(\bfzero,\alpha^1),(\bfzero,\alpha^2)\}.
$$
The next nine recovery set are taken from the other rows of $T$, three from each row as follows.
$$
\{ (\beta^i,\alpha^4),(\beta^i,\alpha^5),(\beta^i,\alpha^6),(\beta^i,\alpha^7),(\beta^i,\alpha^8)\},~~~
\{ (\beta^i,\alpha^9),(\beta^i,\alpha^{10}),(\beta^i,\alpha^{11}),(\beta^i,\alpha^{12}),(\beta^i,\alpha^{13})\},
$$
$$
\{ (\beta^i,\bfzero),(\beta^i,\alpha^0),(\beta^i,\alpha^1),(\beta^i,\alpha^2),(\beta^i,\alpha^3)\},
$$
where $0 \leq i \leq 2$.

To these twelve recovery set we add the set
$$
\{ (\bfzero,\alpha^{11}),(\bfzero,\alpha^{12}),(\bfzero,\alpha^{13}),(\beta^0,\alpha^{14}),(\beta^1,\alpha^{14}),(\beta^2,\alpha^{14})\}.
$$
Since $(\beta^0,\alpha^{14})+(\beta^1,\alpha^{14})+(\beta^2,\alpha^{14})=( \bfzero ,\alpha^{14})$ the last set is also a recovery set.

It can be easily verified that we cannot form more than 13 recovery sets.
$\blacksquare$
\end{example}

\begin{corollary}
\label{cor:d=4_k=6}
$N_2(6,4) = 13$.
\end{corollary}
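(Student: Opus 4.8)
The lower bound $N_2(6,4)\ge 13$ is already supplied by the explicit construction in Example~\ref{ex:d=4_k=6}, so the task is to establish the matching upper bound $N_2(6,4)\le 13$. The plan is a short counting argument over the $2^6-1=63$ nonzero vectors (equivalently, $1$-subspaces) of $\F_2^6$, which by Lemma~\ref{lem:representPG} are exactly the entries of the $4\times 16$ array $T$ other than $T(0,0)=0$. Any family of pairwise disjoint recovery sets for the $4$-subspace $U$ consists of disjoint subsets of these $63$ vectors, so the total number of vectors it uses is at most $63$; everything reduces to bounding how many recovery sets can fit inside that budget.

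First I would pin down the admissible sizes of a recovery set for $U$. Since a recovery set must span a subspace containing the $4$-dimensional space $U$, it has at least $4$ elements; and if it has exactly $4$ elements, their span is a $4$-dimensional space containing $U$, hence equal to $U$, so all four elements lie in $U$. The $1$-subspaces of $U$ are precisely the $15$ nonzero entries of the first row of $T$ (the row indexed by $\bfzero$), so every recovery set of size $4$ lives entirely in that row, and therefore at most $\lfloor 15/4\rfloor=3$ of them can be pairwise disjoint. Every recovery set that is not of size $4$ has size at least $5$.

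Now suppose a family contains $a$ pairwise disjoint recovery sets of size $4$ and $b$ further pairwise disjoint recovery sets, each of size at least $5$. Counting the vectors consumed gives $4a+5b\le 63$, while $a\le 3$, so
$$
a+b \;\le\; a+\frac{63-4a}{5} \;=\; \frac{63+a}{5} \;\le\; \frac{66}{5} \;=\; 13.2 ,
$$
and hence $a+b\le 13$. Together with Example~\ref{ex:d=4_k=6} this gives $N_2(6,4)=13$.

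I do not expect a genuine obstacle here: the only points needing care are (i) the dimension count showing that a size-$4$ recovery set must sit inside the first row of $T$, and (ii) the identification of the $63$ nonzero vectors with the non-$T(0,0)$ entries of $T$, which is Lemma~\ref{lem:representPG}. One could instead run the integer-programming scheme of Subsection~\ref{sec:Integer_program}, introducing a variable for each ``shape'' of recovery set relative to the rows of $T$; but for these parameters the crude element count already meets the construction, so I would keep the argument in the lean form above (this is exactly the ``easily verified'' remark at the end of Example~\ref{ex:d=4_k=6}).
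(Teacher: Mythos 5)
Your proposal is correct and is essentially the paper's own argument: the lower bound is exactly the construction of Example~\ref{ex:d=4_k=6}, and your counting of $4a+5b\le 63$ with $a\le 3$ is precisely the specialization of Theorem~\ref{thm:gen_upper} to $q=2$, $k=6$, $d=4$ (which evaluates to $3+\lfloor 51/5\rfloor=13$), making explicit what the paper dismisses as ``easily verified.'' No gaps.
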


The last two cases are trivial and are given in the following lemma.
\begin{lemma}
$N_2(5,4) = 6$ and $N_2(4,4) = 3$.
\end{lemma}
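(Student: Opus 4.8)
The final statement to prove is the lemma asserting $N_2(5,4) = 6$ and $N_2(4,4) = 3$.

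\medskip

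The plan is to dispatch the two cases separately, each by establishing a matching lower and upper bound. The case $N_2(4,4)=3$ is immediate from the earlier machinery: when $k=d=4$, there is only the first row of $T$, which is $\F_2^4 \setminus \{\bfzero\}$, and by Corollary~\ref{cor:basic_parts} (or the $k=d$ corollary following Theorem~\ref{thm:basic_parts}) the $15$ nonzero vectors of $\F_2^4$ partition into $\lfloor 15/4 \rfloor = 3$ recovery sets of size $4$, each a set of $4$ linearly independent vectors. Since a recovery set for a $4$-subspace must contain $4$ linearly independent vectors and there are only $15$ nonzero vectors available, no more than $3$ pairwise disjoint recovery sets can exist; hence $N_2(4,4) = \lfloor (2^4-1)/4 \rfloor = 3$.

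\medskip

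For $N_2(5,4)=6$, the lower bound comes from writing $\F_2^5 = W + U$ with $\dim W = 1$, $\dim U = 4$, so the matrix $T$ has two rows, indexed by $\bfzero$ and by the single nonzero $x \in \F_2^1$; each row has $16$ entries, $31$ of them nonzero. From the first row we extract $3$ recovery sets of size $4$ (as in the $k=4$ case), leaving $3$ leftovers in the first row. From the second row, by Corollary~\ref{cor:primitiveRecover} and Lemma~\ref{lem:simple_recover} we extract $\lfloor 2^4/5 \rfloor = 3$ recovery sets of size $5$, leaving $1$ leftover in the second row. So far that is $6$ recovery sets. One should then check that $3+3 = 6$ is also an upper bound: I would argue that a $7$th recovery set would have to be drawn from the $3+1 = 4$ leftover vectors, but these cannot span a $4$-subspace together with the recovered $U$ unless they are $4$ linearly independent vectors whose span contains $U$ — and a counting/independence check (only $4$ vectors, one of which lies in the single nonzero coset and three in $U$) shows this fails, or more cleanly, one invokes the general upper bound of Theorem~\ref{thm:gen_upper} with $q=2$, $k=5$, $d=4$: here $(q^d-1)/(q-1) = 15$, its remainder mod $d=4$ is $\ell = 3$, giving $N_2(5,4) \le \lfloor 15/4 \rfloor + \lfloor (3 + 32 - 16)/5 \rfloor = 3 + \lfloor 19/5 \rfloor = 3 + 3 = 6$. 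Thus the lower bound meets the upper bound and $N_2(5,4) = 6$.

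\medskip

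The main obstacle is really just bookkeeping: confirming that the size-$4$ recovery sets in the first row and the size-$5$ recovery sets in the second row can be chosen with disjoint index sets and that exactly the claimed numbers of leftovers remain, plus verifying that the generic upper bound of Theorem~\ref{thm:gen_upper} is attained here rather than needing the sharper integer-programming-style argument used for $d=2$. Since $k-d = 1$ is so small, the leftover structure is trivial (at most $4$ stray vectors total), so I expect the elementary counting argument to suffice without invoking rank-metric codes or perfect codes at all. I would present the proof as: (i) the $k=4$ case via the $k=d$ corollary and the trivial independence upper bound; (ii) the $k=5$ lower bound by the two-row construction; (iii) the $k=5$ upper bound by Theorem~\ref{thm:gen_upper}.
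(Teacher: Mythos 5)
Your proof is correct, and it is essentially the argument the authors intend: the paper states this lemma without proof (calling the cases ``trivial''), and your derivation --- $N_2(4,4)$ from the corollary $N_q(k,k)=\lfloor\frac{q^k-1}{k(q-1)}\rfloor$, and $N_2(5,4)$ from the coincidence of the lower bound of Theorem~\ref{thm:tight} ($3+3\cdot 1=6$) with the upper bound of Theorem~\ref{thm:gen_upper} ($3+\lfloor 19/5\rfloor=6$) --- is exactly the computation that makes it trivial. The only slightly shaky part is your ad hoc remark about the four leftovers not forming a seventh recovery set, which by itself would not rule out other arrangements; but you correctly note that invoking Theorem~\ref{thm:gen_upper} is the clean route, so nothing is missing.
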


\subsection{Recovery Sets for 5-Subspaces}
\label{sec:large_sets5}

Assume that the 5-subspace to be recovered is $U$ which is spanned by the vectors $(\bfzero,u_1)$, $(\bfzero,u_2)$, $(\bfzero,u_3)$,
$(\bfzero,u_4)$ and $(\bfzero,u_5)$ ($U$ will be referred as $\F_2^5$).

When $d=5$, the first row of $T$ can be partitioned into six recovery sets of size 5 and one leftover.
Each other row of $T$ is partitioned into five recovery sets of size 6 and two leftovers.

Consider now four disjoint 2-subspaces of $\F_2^{k-5}$ spanned by $\{ x_1,x_2 \}$, $\{ x_3,x_4 \}$, $\{ x_5,x_6 \}$, and $\{ x_7,x_8 \}$.
The 24 leftovers of these four 2-subspaces are chosen and partitioned into the following three subsets of size eight:
\begin{align*}
& \{(x_1,\bfzero),(x_2,\bfzero),(x_1+x_2,u_3),(x_7,\bfzero),(x_1,u_1),(x_2,u_2),(x_1+x_2,u_4),(x_7,u_5)\}\\
& \{(x_3,\bfzero),(x_4,\bfzero),(x_3+x_4,u_3),(x_8,\bfzero),(x_3,u_1),(x_4,u_2),(x_3+x_4,u_4),(x_8,u_5)\}\\
& \{(x_5,\bfzero),(x_6,\bfzero),(x_5+x_6,u_3),(x_7+x_8,\bfzero), (x_5,u_1),(x_6,u_2),(x_5+x_6,u_4),(x_7+x_8,u_5)\}.
\end{align*}
It can be readily verified that each one of these three subsets is a recovery set for $U$.

Each internal row of $T$ yields two leftovers and if one of them is chosen as $(x,\bfzero)$ then the second one can be chosen arbitrarily.
But, if $(x,u_3)$ was chosen, then $(x,u_4)=(x, \alpha u_3)$ must be the next element in the row, where $\alpha$ is
a primitive element of $\F_{2^d}$.
Therefore, to obtain these three recovery sets from the leftovers we have to partition $\F_2^{k-5}$ into 2-subspaces and to partition
these 2-subspaces into subsets with four 2-subspaces in each one. We distinguish between odd and even $k$.

\noindent
{\bf Case 1:} If $k$ is odd, then $k-5$ is even and hence $\F_2^{k-5}$ can be partitioned into $\frac{2^{k-5}-1}{3}$ ~ 2-subspaces.
These disjoint 2-subspaces can be partitioned into $\frac{2^{k-5}-4}{12}=\frac{2^{k-7}-1}{3}$ sets of four 2-subspaces and another
2-subspace. Each such set of four 2-subspaces forms three recovery sets for $U$ for a total of $2^{k-7}-1$ recovery sets.
Assume that the additional 2-subspace is $\{ x_1,x_2,x_1+x_2 \}$ and the leftover from the first row of $T$ is $(\bfzero,u_1)$.
The following set of seven 1-subspaces
$$
\{ (\bfzero,u_1),(x_1,\bfzero),(x_1,u_2),(x_2,\bfzero),(x_2,u_3),(x_1+x_2,u_4),(x_1+x_2,u_5) \}
$$
is another recovery set. Thus, we have that
\begin{lemma}
\label{lem:low_bound5a}
If $k\geq 7$ and $k$ is odd, then $N_2(k,5)\ge 6+5(2^{k-5}-1)+2^{k-7}-1 +1=21\cdot2^{k-7} +1$.
\end{lemma}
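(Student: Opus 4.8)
The plan is to construct the claimed $21\cdot 2^{k-7}+1$ pairwise disjoint recovery sets for $U=\F_2^5$ inside the $2^{k-5}\times 32$ array $T$, in three tiers, following the scheme described just before the lemma. First fix a primitive element $\alpha$ of $\F_{2^5}$, index the nonzero columns of $T$ by $\alpha^0,\alpha^1,\dots,\alpha^{30}$, and choose the basis $u_i=\alpha^{i-1}$ ($1\le i\le 5$) of $U$, so that $u_3,u_4,u_5$ are consecutive powers of $\alpha$. By Corollary~\ref{cor:basic_parts} the first row of $T$ yields $\lfloor 31/5\rfloor=6$ recovery sets of size $5$, and a cyclic shift of that construction lets me take the single leftover entry of the first row to be $(\bfzero,u_1)$. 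For an internal row $x\neq\bfzero$ I would use two ways to extract five recovery sets of size six: by Corollary~\ref{cor:primitiveRecover} any six cyclically consecutive column entries span $U$, and by Lemma~\ref{lem:simple_recover} the entry $(x,\bfzero)$ together with any five consecutive column entries spans $U$. Tiling the $31$-cycle by five disjoint blocks of six leaves, besides $(x,\bfzero)$, one freely chosen column entry; alternatively, reserving one size-five block for $(x,\bfzero)$ and tiling the remaining arc of $29$ columns by four blocks of six leaves two consecutive column entries $\alpha^j,\alpha^{j+1}$. Thus each internal row can be made to supply exactly five recovery sets together with a leftover pair of one of the two shapes $\{(x,\bfzero),(x,\beta)\}$ (with $\beta$ an arbitrary column) or $\{(x,\alpha^j),(x,\alpha^{j+1})\}$. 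This accounts for $6+5(2^{k-5}-1)$ recovery sets.

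Second, since $k-5=2m$ is even, $\F_2^{k-5}\cong\F_{2^{k-5}}$ admits a spread into $(2^{k-5}-1)/3$ two-dimensional subspaces, for instance the multiplicative cosets of the subfield $\F_4^\ast$, each together with $\bfzero$. Since $(2^{k-5}-1)/3=1+4+\cdots+4^{m-1}\equiv 1\pmod 4$, these $2$-subspaces split into $\frac{2^{k-7}-1}{3}$ blocks of four together with one leftover $2$-subspace. For a block of four mutually disjoint $2$-subspaces $\Span{x_1,x_2},\Span{x_3,x_4},\Span{x_5,x_6},\Span{x_7,x_8}$, the twelve associated internal rows are distinct, and I would distribute their $24$ leftover entries exactly as in the displayed $(2,5)$-recovery set model: the rows $x_1,\dots,x_6$ and $x_7,x_8,x_7+x_8$ take leftovers of shape $\{(x,\bfzero),(x,\cdot)\}$ while the rows $x_1+x_2,x_3+x_4,x_5+x_6$ take $\{(x,u_3),(x,u_4)\}$, which is admissible by the first tier because $u_4=\alpha u_3$. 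The five displayed identities recovering $u_1,\dots,u_5$ confirm that each of the three size-eight sets recovers $U$; this yields $3\cdot\frac{2^{k-7}-1}{3}=2^{k-7}-1$ more recovery sets.

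Third, for the leftover $2$-subspace $\Span{x_1,x_2}=\{x_1,x_2,x_1+x_2\}$ I would choose its six leftover entries to be $(x_1,\bfzero),(x_1,u_2),(x_2,\bfzero),(x_2,u_3),(x_1+x_2,u_4),(x_1+x_2,u_5)$ — admissible since $u_5=\alpha u_4$ — and adjoin the unused first-row leftover $(\bfzero,u_1)$; since $(\bfzero,u_2)=(x_1,u_2)+(x_1,\bfzero)$, $(\bfzero,u_3)=(x_2,u_3)+(x_2,\bfzero)$, and $(\bfzero,u_4),(\bfzero,u_5)$ are obtained by adding $(x_1,\bfzero)+(x_2,\bfzero)$ to $(x_1+x_2,u_4)$ and to $(x_1+x_2,u_5)$, this is a recovery set for $U$. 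A counting check shows the internal rows used in the second and third tiers, namely $12\cdot\frac{2^{k-7}-1}{3}+3=2^{k-5}-1$ of them, exhaust all internal rows, so all sets constructed are pairwise disjoint; summing, $N_2(k,5)\ge 6+5(2^{k-5}-1)+(2^{k-7}-1)+1=21\cdot 2^{k-7}+1$.

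The step I expect to be the main obstacle is the synchronization in the second and third tiers: one must ensure that the leftover entries demanded, row by row, by the three size-eight recovery sets and by the closing size-seven set match one of the two leftover shapes a row can actually produce in the first tier. This is exactly why the basis $u_1,\dots,u_5$ of $U$ is taken with $u_3,u_4,u_5$ consecutive powers of the column-indexing primitive element; granting this coordination, the remaining ingredients — the $\F_4$-spread, the congruence $(2^{k-5}-1)/3\equiv 1\pmod 4$, and the linear-algebra verifications — are routine.
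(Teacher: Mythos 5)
Your proposal is correct and follows essentially the same route as the paper: six size-5 sets from the first row, five size-6 sets per internal row with two leftovers each, a spread of $\F_2^{k-5}$ into 2-subspaces grouped four at a time to form the three displayed size-8 recovery sets, and the final size-7 set from the residual 2-subspace together with the first-row leftover. Your explicit choice of basis $u_i=\alpha^{i-1}$ to guarantee the leftover pairs are realizable is a welcome clarification of a point the paper only states informally (the harmless arithmetic slip ``arc of 29 columns'' should read 26).
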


\noindent
{\bf Case 2:} If $k$ is even, then $k-5$ is odd and hence $\F_2^{k-5}$ can be partitioned into $\frac{2^{k-5}-8}{3}$ ~ 2-subspaces
and one 3-subspace.
These disjoint 2-subspaces can be partitioned into $\frac{2^{k-5}-8}{12}=\frac{2^{k-7}-2}{3}$ sets of four 2-subspaces.
Each such set of four 2-subspaces
forms three recovery sets for $U$ for a total of $2^{k-7}-2$ recovery sets. Assume that the remaining 3-subspace is $\Span{x_1,x_2,x_3}$
and one leftover from the first row of $T$ is $(\bfzero,u_1)$. They can form the following two recovering sets:
$$
\{ (\bfzero,u_1),(x_1+x_2,\bfzero),(x_1+x_2,u_2),(x_1+x_3,\bfzero),(x_1+x_3,u_3),(x_2+x_3,u_4),(x_2+x_3,u_5) \},
$$
$$
\{ (x_1,\bfzero),(x_1,u_1),(x_2,\bfzero),(x_2,u_2),(x_3,\bfzero),(x_3,u_3),(x_1+x_2+x_3,u_4),(x_1+x_2+x_3,u_5) \}.
$$
Thus, we have that
\begin{lemma}
\label{lem:low_bound5b}
If $k\geq 7$ and $k$ is even, then $N_2(k,5)\geq 6+5(2^{k-5}-1)+2^{k-7}-2 +2 =21\cdot2^{k-7} +1$.
\end{lemma}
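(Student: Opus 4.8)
The plan is to follow the template of the odd--$k$ case (Lemma~\ref{lem:low_bound5a}) almost verbatim, the two necessary changes being forced by the facts that for even $k$ the codimension $k-5$ is \emph{odd}, so $\F_2^{k-5}$ has no $2$-spread and one genuine $3$-subspace must be carried along, and that the first row of $T$ contributes a single leftover. First I would set up the $2^{k-5}\times 2^5$ array $T$ of Section~\ref{sec:main_ideas}: first row indexed by $\bfzero$, columns ordered $\bfzero,\alpha^0,\ldots,\alpha^{30}$ for a primitive $\alpha\in\F_{2^5}$, and $U=\Span{(\bfzero,u_1),\ldots,(\bfzero,u_5)}$ with the $u_i$ chosen so that $u_3,u_4$ and $u_4,u_5$ are cyclically consecutive powers of $\alpha$. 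Corollary~\ref{cor:basic_parts} gives $6$ recovery sets of size $5$ from the first row with exactly one leftover, and Corollary~\ref{cor:primitiveRecover} together with Lemma~\ref{lem:simple_recover} gives, from each of the $2^{k-5}-1$ internal rows, $5$ recovery sets of size $6$ and two leftovers. The point I must record here (the assembly relies on it) is that in an internal row $x$ the two reserved leftovers may be taken to be either $\{(x,\bfzero),(x,\alpha^m)\}$ for any $m$, or $\{(x,\alpha^m),(x,\alpha^{m+1})\}$ for any $m$: deleting one, respectively two cyclically adjacent, powers from $\alpha^0,\ldots,\alpha^{30}$ leaves an arc that breaks into blocks of $6$ consecutive powers, one of which is replaced by $(x,\bfzero)$ together with $5$ consecutive powers when $(x,\bfzero)$ is still unassigned. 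This accounts for $6+5(2^{k-5}-1)$ recovery sets, and leaves two leftovers of admissible shape per internal row plus one free leftover in the first row.

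The crux is the partition step: I claim $\F_2^{k-5}$ decomposes into one $3$-subspace and $\frac{2^{k-5}-8}{3}$ pairwise disjoint $2$-subspaces. With $n=k-5$ odd and $\ge 3$, I would argue by recursion on $n$: if $n=3$ there is nothing to carve; if $n\ge 5$, Theorem~\ref{thm:MRD_bound} furnishes a $[2\times(n-2),\,n-2,\,2]$ MRD code, and lifting it yields $2^{n-2}$ pairwise disjoint $2$-subspaces of $\F_2^{n}$ whose union is exactly the set of vectors whose first two coordinates are not both zero, while the vectors with first two coordinates zero form a copy of $\F_2^{n-2}$ with $n-2$ again odd, so we recurse. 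Terminating at $\F_2^3$ and summing $2^{n-2}+2^{n-4}+\cdots+2^{3}=\frac{2^{n}-8}{3}$ gives the count; since every summand is divisible by $8$, the number of $2$-subspaces is divisible by $4$, so these group into $\frac{2^{k-5}-8}{12}=\frac{2^{k-7}-2}{3}$ quadruples. This is the only nontrivial combinatorics in the proof (alternatively one could invoke a general subspace--partition theorem, but the lifted-MRD recursion is self-contained and is exactly the mechanism already used for $d=4$).

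It then remains to assemble the recovery sets. To each quadruple $\Span{x_1,x_2},\Span{x_3,x_4},\Span{x_5,x_6},\Span{x_7,x_8}$ I attach the three eight-element sets displayed just before the lemma; each is checked directly to recover all of $U$ (e.g.\ $(\bfzero,u_1)=(x_1,\bfzero)+(x_1,u_1)$, $(\bfzero,u_3)=(x_1+x_2,u_3)+(x_1,\bfzero)+(x_2,\bfzero)$, and similarly for $u_2,u_4,u_5$), and one verifies that each internal row of the quadruple is used exactly once, contributing precisely its two reserved leftovers, and that every reserved pair has one of the admissible shapes (a $\bfzero$-entry with one other entry, or the consecutive pair $u_3,u_4$, or the consecutive pair $u_4,u_5$); this yields $3\cdot\frac{2^{k-7}-2}{3}=2^{k-7}-2$ recovery sets. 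For the leftover $3$-subspace $\Span{x_1,x_2,x_3}$ and the single leftover $(\bfzero,u_1)$ of the first row I use the two displayed sets, whose recovery of $U$ uses $(x_2+x_3,\bfzero)=(x_1+x_2,\bfzero)+(x_1+x_3,\bfzero)$ in the first and $(x_1+x_2+x_3,\bfzero)=(x_1,\bfzero)+(x_2,\bfzero)+(x_3,\bfzero)$ in the second, and which consume exactly the two reserved leftovers of the seven internal rows of the $3$-subspace (the first also consuming the first-row leftover). Adding everything,
$$
N_2(k,5)\ \ge\ 6+5\bigl(2^{k-5}-1\bigr)+\bigl(2^{k-7}-2\bigr)+2\ =\ 21\cdot 2^{k-7}+1 ,
$$
which is the claim (the base case $n=3$, i.e.\ $k=8$, already gives $6+5\cdot 7+0+2=43=21\cdot 2+1$). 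I expect the only real obstacle to be the partition step and the companion observation that the two leftovers in each internal row can always be realized in the shapes the assembly requires; everything else is routine verification of the displayed linear relations.
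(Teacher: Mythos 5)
Your proof is correct and follows essentially the same route as the paper: the same $6+5(2^{k-5}-1)$ recovery sets from the rows of $T$, the same partition of $\F_2^{k-5}$ into $\frac{2^{k-5}-8}{3}$ disjoint $2$-subspaces plus one $3$-subspace, the same three eight-element recovery sets per quadruple of $2$-subspaces, and the same two final recovery sets built from the $3$-subspace together with the first-row leftover. The only difference is that you supply details the paper leaves implicit (the lifted-MRD recursion proving the partial-spread partition, and the explicit check that every reserved leftover pair has an admissible shape), which is a welcome but not substantively different elaboration.
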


\section{Upper Bounds on the Size of Recovery Sets for Binary Alphabet}
\label{sec:upper_binary}

We turn now to consider an upper bound on $N_2(k,4)$. We will have to distinguish again between three cases depending on whether $k$
is congruent to 0, 1, or 2 modulo 3. We start by ignoring the first row of~$T$ and
consider the other rows of $T$. From each such row, we have constructed 3 recovery sets of size 5 and we had one leftover.
The $2^{k-4}-1$ leftovers we used to form recovery sets of size 7 based on disjoint 3-subspaces. The number of disjoint
3-subspaces depends on the value of $k-4$ modulo 3. If $k-4 \equiv 0 ~(\mmod ~ 3)$, then there are $\frac{2^{k-4}-1}{7}$ such
3-subspaces. If $k-4 \equiv 1 ~(\mmod ~ 3)$, then there are $\frac{2^{k-4}-9}{7}$ such
3-subspaces and eight 1-subspaces are left out of this partition. If $k-4 \equiv 2 ~(\mmod ~ 3)$, then there are $\frac{2^{k-4}-25}{7}$ such
3-subspaces and twenty four 1-subspaces are left out of this partition.

We start to check if we can construct other recovery sets by considering combinations of elements from different rows of $T$.
We cannot gain anything by considering seven elements from different rows as this is already guaranteed by the
construction of recovery sets of size seven from the leftovers. Hence, we will concentrate on recovery sets of size six
to replace some recovery sets of size seven. But, each recovery set of size six from a few rows will require at least
two elements from two rows. This will imply that we will have to give up on a recovery set of size five from each one of these two rows.
Hence, we further need to use these rows to form more recovery sets of size six. Therefore,
we are going to waste at least two elements in these rows which will make it worse than using the only one leftover in each
row for a recovery set of size seven. Therefore, if we want to gain something beyond our construction it is required
to use the three leftovers from the first row carefully.

Now consider
to use the three leftovers from the first row (or even more elements from the first row in a recovery set that contains elements
from other rows). We distinguish between three cases, depending on whether an associated recovery set contains one element from
the first row, two elements from the first row, or three elements from the first row.

\noindent
{\bf Case 1:} Only one element from the first row is contained in the recovery set.
Assume first that all the other elements of the recovery set are from different rows of $T$.
But, this cannot be done with less than six leftovers from other rows and hence the recovery set will contain
at least seven elements which does not save any element to obtain more recovery sets.
If from one row we use more than two elements, then this row cannot be used anymore to form three recovery sets without
elements from other rows. This implies that for each element from the first row used in a recovery set used with four
elements from another row, we replace one leftover from the first row with one leftover from another row, with whom it is
less flexible to form recovery sets. If we use elements from two rows then at least six elements are used in the recovery set
and at least two other rows will be used so we will lose the two leftovers from the first row without any gain in the best case.

\noindent
{\bf Case 2:} Two elements from the first row are in one recovery set. If we use only the unique leftover from each other row, then
the recovery set will be with at least seven elements which does not save any elements as in Case 1. If we use three elements
from one row, then we can again use arguments similar to Case 1 to exclude this case. The same goes for generating a recovery set
with six elements, two from each other row.

\noindent
{\bf Case 3:} All the three elements from the first row are in one recovery set. By the previous analysis, it is clear that
at best we can use three leftovers. But, this does not leave us a better option for the other recovery sets
than the ones used in the constructions.

This analysis implies that the lower bound of Theorem~\ref{thm:low_bound_4}
is tight and hence we proved the following theorem.

\begin{theorem}
\item For $k \geq 7$,
$$
N_2(k,4) = \left\lfloor \frac{11 \cdot 2^{k-3} -1}{7}  \right\rfloor .
$$
\end{theorem}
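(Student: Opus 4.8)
The lower bound is exactly Theorem~\ref{thm:low_bound_4}, so the plan is to prove the matching upper bound $N_2(k,4)\leq\left\lfloor\frac{11\cdot2^{k-3}-1}{7}\right\rfloor$ for $k\geq7$. I would argue entirely inside the $2^{k-4}\times16$ array $T$: its first row is identified with the $15$ nonzero vectors of the recovered $4$-subspace $U=\F_2^4$, and its $2^{k-4}-1$ internal rows are the nontrivial cosets, each carrying $16$ elements (including $T(x,\bfzero)$). Every recovery set is then classified by how it meets the rows of $T$, exactly in the spirit of the integer-programming bookkeeping of Section~\ref{sec:Integer_program}, refined for $d=4$.

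First come the structural facts. A recovery set of size $4$ consists of $4$ linearly independent $1$-subspaces whose span contains, hence equals, $U$, so all four lie in the first row; thus there are at most $\lfloor15/4\rfloor=3$ pairwise disjoint size-$4$ recovery sets, and after removing them at least three elements of the first row are unused. Any recovery set meeting an internal row contains an element outside $U$ and so has size at least $d+1=5$ (the argument of Theorem~\ref{thm:gen_upper}). More precisely, in a recovery set meeting $j$ internal rows, each such row must be hit at least twice before its first coordinate can be cancelled into $U$, and a dimension count in the style of Corollary~\ref{cor:primitiveRecover} and Lemma~\ref{lem:simple_recover} shows that such a set produces at most $(\text{size})-j$ directions inside $U$; hence a set touching $j\geq2$ internal rows has size at least $j+4$, so a size-$6$ set must take at least two elements from each of two distinct internal rows. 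In particular each internal row can host at most three size-$5$ recovery sets contained in it, using $15$ of its $16$ elements and leaving exactly one leftover.

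The core of the argument is to show that no ``mixed'' recovery set improves on the construction of Lemmas~\ref{lem:lb_d=4c1}, \ref{lem:lb_d=4c2}, and~\ref{lem:lb_d=4c0}, which packs each internal row with three size-$5$ sets and gathers the $2^{k-4}-1$ leftovers, seven at a time, into $(3,4)$-recovery sets along disjoint $3$-subspaces of $\F_2^{k-4}$. I would run the three cases sketched before the theorem, according to whether a would-be extra recovery set uses one, two, or three elements of the first row: a size-$6$ set forfeits at least two elements in each of two internal rows, hence destroys a clean size-$5$ set in both and loses net value even after accounting for the extra leftovers it frees (each leftover is worth only $1/7$ of a set); a set of size $\geq7$ spread over seven internal rows is precisely what the $(3,4)$-model already builds, and disjoint $3$-subspaces are the densest packing of $\F_2^{k-4}$ available; and the only genuine gain comes from combining the at most three first-row leftovers with the $0$, $8$, or $24$ internal-row elements left uncovered by the maximal $3$-subspace partition when $k\equiv1,2,0\pmod{3}$, which is exactly what the small additive terms $+2$ and $+4$ in those lemmas encode. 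Summing the $3$ size-$4$ sets, the $3\cdot2^{k-4}$ size-$5$ sets, the relevant number of $3$-subspace sets, and the residual handful, and checking the three residue classes of $k$ modulo $3$ against the closed form (using that $2$ has order $3$ modulo $7$), one recovers $\left\lfloor\frac{11\cdot2^{k-3}-1}{7}\right\rfloor$ in every case.

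The main obstacle I anticipate is the exhaustiveness of the middle step: certifying that no clever recovery set --- one drawing a few elements from the first row and spreading the rest over two or three internal rows with just the right multiplicities --- can beat the greedy construction. This is the bookkeeping that for $d=2$ was delegated to CPLEX in Section~\ref{sec:Integer_program}; for $d=4$ I would either set up and solve the (larger) analogous integer program, or, to keep the proof self-contained, organize the case analysis around a weighting that gives each first-row element weight $1/4$ and charges each internal-row element at rate $1/5$ when it lies in a size-$5$ set and at rate $1/7$ when it is a leftover, arranged so that every recovery set has total weight at least $1$ while the total weight of all $2^k-1$ servers is the claimed ceiling; the per-row capacity (at most three size-$5$ sets inside an internal row) is precisely what forces the leftover rate down to $1/7$. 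A secondary nuisance is the floor arithmetic: one must verify in each residue class mod $3$ that the boundary count produces exactly the floor in $\left\lfloor\frac{11\cdot2^{k-3}-1}{7}\right\rfloor$ and not one less.
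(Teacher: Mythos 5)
Your overall architecture is the same as the paper's: classify recovery sets by how they meet the rows of the $2^{k-4}\times 16$ array $T$, show that each internal row supports at most three size-$5$ sets with one leftover, and then argue by cases on how many first-row elements a mixed set uses that nothing beats the construction of Lemmas~\ref{lem:lb_d=4c1}--\ref{lem:lb_d=4c0}. However, your central structural lemma is false, and it is false precisely on the configurations that the whole upper-bound argument must control. You claim that a recovery set meeting $j$ internal rows must hit each such row at least twice and must have size at least $j+4$. The paper's own $(3,4)$-recovery set model is a counterexample: it has size $7$, meets $j=7$ internal rows, and hits each exactly once; the cancellation of the first coordinates works because the seven row indices are the nonzero vectors of a $3$-subspace and hence sum to zero in triples. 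Likewise, your conclusion that ``a size-$6$ set must take at least two elements from each of two distinct internal rows'' is wrong: three first-row elements together with one element from each of three rows indexed by $x$, $y$, $x+y$ can form a size-$6$ recovery set (exactly such a set appears in Example~\ref{ex:d=4_k=6}). The correct count is $|R|\geq 4+\dim\Span{x_1,\dots,x_j}$, where the $x_i$ are the indices of the internal rows met, together with the observation that a row whose index is \emph{independent} of the others must be hit an even number of times in a minimal recovery set; you have conflated $j$ with $\dim\Span{x_1,\dots,x_j}$.

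This is a genuine gap rather than a cosmetic one, because the only mixed recovery sets that could conceivably improve on the construction are exactly those with linearly \emph{dependent} row indices hit once each -- the sets your lemma declares impossible and then (inconsistently) reintroduces when you invoke the $(3,4)$-model. The paper's Cases 1--3 in Section~\ref{sec:upper_binary} are devoted entirely to these sets, and the delicate point for $k\geq 7$ is showing that spending the three first-row leftovers on such size-$6$ sets cannot produce more total recovery sets than packing the internal leftovers seven at a time into disjoint $3$-subspaces. Your proposed weighting (first-row elements at $1/4$, internal elements at $1/5$ or $1/7$ depending on their role) does not yet certify this either, since a weight that depends on the role an element plays in the optimal solution is not a valid LP dual; as you note, one needs the per-row capacity constraint built into the program, which is the analogue of the third inequality in Section~\ref{sec:Integer_program}. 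To repair the argument, replace your lemma by the bound $|R|\geq 4+\dim\Span{x_1,\dots,x_j}$, enumerate the minimal mixed types accordingly (one from each row of a punctured $t$-subspace plus $4-t$ independent first-row elements, $t=1,2,3$, and the even-multiplicity types), and then either run the resulting integer program or reproduce the paper's exchange argument showing each such type costs at least as many servers per recovery set as the greedy construction.
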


The analysis done in this section can be implemented for recovery sets with higher dimensions with extra complexity.
By applying the integer programming technique with the careful analysis as was done for $d=2$ and $d=4$ we obtain the
following upper bound,
$$
N_2(k,5) \leq 21 \cdot 2^{k-7} +2 .
$$
for $k \geq 7$. Combining Lemmas~\ref{lem:low_bound5a} and~\ref{lem:low_bound5b}, this completes the proof of the following theorem.

\begin{theorem}
\label{thm:bound_k_5}
\item For $k \geq 7$,
$$
21 \cdot 2^{k-7}+1 \leq N_2(k,5) \leq 21 \cdot 2^{k-7} +2.
$$
\end{theorem}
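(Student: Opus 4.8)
The lower bound is already in hand: Lemmas~\ref{lem:low_bound5a} and~\ref{lem:low_bound5b} together give $N_2(k,5) \geq 21 \cdot 2^{k-7} + 1$ for every $k \geq 7$, whether $k$ is odd or even. So the plan is to establish the matching-up-to-one upper bound $N_2(k,5) \leq 21 \cdot 2^{k-7} + 2$, following the integer-programming-plus-careful-analysis scheme used for $d=2$ in Section~\ref{sec:Integer_program} and for $d=4$ in Section~\ref{sec:upper_binary}.

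First I would fix the $2^{k-5} \times 32$ array $T$, with the first row indexed by $\bfzero$ and the remaining $2^{k-5}-1$ rows by the nonzero vectors of $\F_2^{k-5}$, and with columns $\bfzero, \alpha^0, \alpha^1, \dots, \alpha^{30}$ for a primitive $\alpha \in \F_{2^5}$. The elementary observations are: a recovery set of size $5$ must lie entirely inside $U$, hence in the first row, which holds only $31$ nonzero elements, so there are at most $6$ of them; a recovery set contained in a single internal row has size at least $6$ by Corollary~\ref{cor:primitiveRecover} and Lemma~\ref{lem:simple_recover}, hence at most $5$ such sets per internal row, leaving two leftovers in each row; any recovery set meeting the first row together with an internal row has size strictly larger than $5$; and any recovery set meeting two or more internal rows is larger still. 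I would then enumerate the finitely many \emph{minimal} recovery-set types, classified by how many of their elements come from the first row and by the multiset of internal rows used together with the number of elements taken from each, and argue — exactly as in the $d=4$ case analysis — that every recovery set contains one of these minimal types and may therefore be replaced by it, so that an optimal packing may be assumed to consist of minimal types only.

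Next I would set up the linear program that maximizes the total number of recovery sets subject to the ``budget'' constraints: the first row supplies at most $31$ element-slots, the internal rows supply at most $32(2^{k-5}-1)$ element-slots in aggregate, and each individual internal row supplies at most $32$ element-slots — the last inequality strengthened before summing over internal rows, as in the $d=2$ argument where a $0/1$-valued variable permits rounding up a coefficient. Passing to the dual linear program then yields a bound of the form $21 \cdot 2^{k-7} + O(1)$; since only integer solutions are admissible, the fractional optimum is rounded down, and combining this with a direct accounting of the single leftover left in the first row after its $6$ size-$5$ sets and of the $2(2^{k-5}-1)$ internal leftovers — which are consumed most efficiently eight at a time, via four disjoint $2$-subspaces of $\F_2^{k-5}$ exactly as in the construction — pins the additive constant down to $+2$. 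Together with the lower bound this gives $21 \cdot 2^{k-7}+1 \leq N_2(k,5) \leq 21 \cdot 2^{k-7}+2$.

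The main obstacle will be the combinatorial bookkeeping in the last two steps. For $d=5$ there are noticeably more candidate recovery-set configurations than for $d=2$ or $d=4$ (for instance, a set splitting three-and-three across two internal rows, or one first-row element combined with elements from several internal rows), and showing that each is dominated by one of a short list of minimal types — while simultaneously checking that the LP/dual optimum is integral or off by at most one — is where the genuine effort lies. I expect this is precisely why the statement claims the two bounds only to within an additive $1$: the argument reaches $21 \cdot 2^{k-7}+2$ but does not eliminate the last unit, which is left open.
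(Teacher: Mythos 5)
Your proposal follows the paper's own route exactly: the lower bound is quoted from Lemmas~\ref{lem:low_bound5a} and~\ref{lem:low_bound5b}, and the upper bound $21\cdot 2^{k-7}+2$ is obtained by the same integer-programming-plus-careful-analysis scheme that the paper uses for $d=2$ and $d=4$. The paper itself does not carry out the $d=5$ computation in any more detail than you do --- it simply asserts that the technique yields the stated bound --- so your sketch, including its honest admission that the combinatorial bookkeeping is the real work, is at the same level of completeness as the published argument.
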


\section{Bounds for a Larger Alphabet}
\label{sec:general}

When $q>2$, the defined matrix $T$ for $q=2$ is no longer appropriate to represent
the elements of PG$(k-1,q)$. We modify the representation by defining a $\frac{q^{k-d}-1}{q-1} \times q^d$ matrix $T$ whose rows
are indexed by the points of PG$(k-d-1,q)$ (or the 1-subspaces of $V_q^{k-d}$). It is the same as identifying these rows
by words of length $k-d$ whose first nonzero entry is a \emph{one}.
The columns of the matrix $T$ are indexed by the elements of $\F_{q^d}$, where the first column is indexed by $\bfzero$ and the others
by $\alpha^0,\alpha^1,\ldots,\alpha^{q^d-2}$, in this order, where $\alpha$ is a primitive element in $\F_{q^d}$.
This matrix is the same as the one defined for $q=2$ with the omission of the first row of $T$ defined for $q=2$.
Entry $T(x,y)$ in the matrix $T$ can be represented by $(x,y)$ or $x+y$.
To this matrix $T$ we add a vector $T_d$ whose length is $\frac{q^d-1}{q-1}$ and whose elements represent the $\frac{q^d-1}{q-1}$ one-dimensional
subspaces of $V_q^d$. These elements can be taken as the ones in the set $\{ \alpha^i ~:~ 0 \leq i < \frac{q^d-1}{q-1} \}$.
It is readily verified that the entries of $T$ and~$T_d$ together represent all the one-dimensional subspaces of $V_q^k$.

\subsection{Constructions for Recovery Sets}

We turn our attention now to construct recovery sets for $d$-subspaces, $d \geq 2$, when $q>2$.
In this subsection, we first provide a general construction with recovery sets of size $d$ and $d+1$ as before
and the leftovers from $T$ will be combined to form recovery sets of size $d+2$.

We start by considering the case of $d=2$.
Note first that $\frac{q^2 -1}{q-1} = q+1$, i.e., a 2-subspace contains $q+1$ 1-subspaces.
Each $2$ consecutive elements in $T_2$ span $\F_q^2$ and hence we obtain $\left\lfloor \frac{q+1}{2} \right\rfloor$ recovery sets
from $T_2$. From each other row, we will have $\left\lfloor \frac{q^2}{3} \right\rfloor$ recovery sets and one or two leftovers, unless
$q$ is divisible by 3. This implies that we have a tight bound when $q \equiv 0~(\mmod ~ 3)$ as implied
by Theorem~\ref{thm:tight} and when $q \not\equiv 0~(\mmod ~ 3)$ we have
to apply a construction for the leftovers similar to the one in Section~\ref{sec:small_sets}.

When $d$ divides $\frac{q^d-1}{q-1}$ there are no leftovers in $T_d$ and if $d$ does not divides $\frac{q^d-1}{q-1}$, then
the number of leftovers in $T_d$ is the reminder from the division of $\frac{q^d -1}{q-1}$ by $d$, i.e.,
an integer between 1 and $d-1$.
The matrix~$T$ is of size $\frac{q^{k-d}-1}{q-1} \times q^d$ and by Lemma~\ref{lem:simple_recover}
each $d+1$ consecutive elements in a row span $\F_q^d$.
If there are no leftovers in a row of $T$, then we have the following theorem which is a special case of Theorem~\ref{thm:tight}.
\begin{theorem}
If $d+1$ divides $q^d$, then $N_q (k,d) = \left\lfloor \frac{q^d -1}{d(q-1)} \right\rfloor + \frac{q^k-q^d}{(d+1)(q-1)}$.
\end{theorem}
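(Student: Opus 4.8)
The plan is to establish the two inequalities separately, with the lower bound being essentially a restatement of an already-proven result and the upper bound requiring a counting argument that is tight precisely when $d+1 \mid q^d$. For the lower bound, I would simply invoke Theorem~\ref{thm:tight}, observing that when $d+1$ divides $q^d$ the floor in the second term disappears: we have $\left\lfloor \frac{q^d}{d+1} \right\rfloor = \frac{q^d}{d+1}$, so Eq.~(\ref{eq:bound1}) reads
$$
N_q(k,d) \geq \left\lfloor \frac{q^d-1}{d(q-1)} \right\rfloor + \frac{q^d}{d+1}\cdot\frac{q^{k-d}-1}{q-1} = \left\lfloor \frac{q^d-1}{d(q-1)} \right\rfloor + \frac{q^k-q^d}{(d+1)(q-1)},
$$
which is exactly the claimed value. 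In fact the ``Furthermore'' clause of Theorem~\ref{thm:tight} already asserts equality in this case, so strictly speaking the present theorem is a corollary; but to make the section self-contained I would still sketch why the matching upper bound holds.

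For the upper bound I would argue as in the proof of Theorem~\ref{thm:gen_upper}, but now counting carefully using the matrix $T$ together with the vector $T_d$. A recovery set of size exactly $d$ must consist of $d$ linearly independent $1$-subspaces all lying inside $U$, hence can only use entries of $T_d$; by Theorem~\ref{thm:basic_parts} at most $\left\lfloor \frac{q^d-1}{d(q-1)}\right\rfloor$ pairwise disjoint such sets exist. Every other recovery set has size at least $d+1$, because any recovery set using an element outside $U$ (i.e.\ an entry of some row of $T$) needs at least $d+1$ elements to span $U$ — this is the same observation used in Section~\ref{sec:main_ideas}: mixing a leftover of $T_d$ with rows of $T$, or using rows of $T$ alone, costs at least $d+1$ servers. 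The total number of $1$-subspaces not forced into the size-$d$ recovery sets is $\frac{q^k-1}{q-1} - d\left\lfloor\frac{q^d-1}{d(q-1)}\right\rfloor$; when $d+1 \mid q^d$ one checks that $\frac{q^d-1}{q-1} \equiv \frac{q^d-1}{q-1} \pmod d$ leaves a remainder $\ell$ with $0 \le \ell \le d-1$, so the number of elements available for size-$\geq(d+1)$ recovery sets is $\ell(q-1) + (q^k - q^d)$, and dividing by $d+1$ gives at most $\left\lfloor \frac{\ell(q-1) + q^k - q^d}{(d+1)(q-1)}\right\rfloor$ further recovery sets. Finally, the hypothesis $d+1 \mid q^d$ forces this floor to collapse to $\frac{q^k-q^d}{(d+1)(q-1)}$ exactly (the $\ell(q-1)$ leftover from $T_d$ contributes less than one full recovery set and is absorbed), matching the lower bound.

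The main obstacle, and the step I would spend the most care on, is the clean bookkeeping that the leftover count $\ell$ from $T_d$ genuinely cannot be combined with the $q^k - q^d$ elements of $T$ to squeeze out an extra recovery set when $d+1 \mid q^d$. The key point — already flagged in the discussion following Theorem~\ref{thm:tight} — is that since each row of $T$ partitions perfectly into $\frac{q^d}{d+1}$ recovery sets of size $d+1$ with no remainder, there are no row-leftovers to pool with the $T_d$-leftovers; and a recovery set built from the $\ell < d$ leftovers of $T_d$ plus elements of $T$ would need size at least $d+1$ but can draw at most $\ell$ elements from $T_d$, so it consumes at least one full row's worth of budget without any compensating savings. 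Making this ``no free lunch'' argument rigorous — rather than the heuristic ``it is readily verified'' — is the only nontrivial content; everything else is arithmetic on floors.
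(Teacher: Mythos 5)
Your proposal is correct and follows essentially the same route as the paper, which simply observes that the statement is the special case of Theorem~\ref{thm:tight} where the floor $\left\lfloor q^d/(d+1)\right\rfloor$ is exact, with the matching upper bound coming from Theorem~\ref{thm:gen_upper} (whose floor collapses since $\ell/(d+1)<1$ and $\frac{q^k-q^d}{(d+1)(q-1)}$ is an integer). You merely spell out the bookkeeping that the paper leaves implicit in the ``Furthermore'' clause and the surrounding discussion.
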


For the next construction and its analysis, it will be required to use the well-known partitions of vector spaces
and projective geometries~\cite{Seg64}.
Such partitions can be found also in~\cite{EtVa11,ScEt02} and other places as well. For completeness, one such partition is presented
in the following theorem.

\begin{theorem}
\label{thm:spread_parts}
If $d$ divides $n$, then $\F_q^n$ can be partitioned into $\frac{q^n-1}{q^d-1}$ pairwise disjoint (intersect in the null-space $\{ {\bf 0}\}$)
$d$-subspaces.
\end{theorem}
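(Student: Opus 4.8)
The plan is to produce an explicit spread of $\F_q^n$ by exploiting the field structure of $\F_{q^n}$. First I would invoke the isomorphism $\F_q^n \cong \F_{q^n}$ as $\F_q$-vector spaces, which the paper already uses freely; under this identification the $d$-subspaces of $\F_q^n$ become $\F_q$-subspaces of $\F_{q^n}$. Since $d \mid n$, the subfield $\F_{q^d}$ sits inside $\F_{q^n}$, and $\F_{q^d}$ is itself a $d$-dimensional $\F_q$-subspace of $\F_{q^n}$. The key step is then to consider the multiplicative cosets: let $\beta$ be a primitive element of $\F_{q^n}$, let $H = \langle \beta^{(q^n-1)/(q^d-1)}\rangle$ be the (cyclic, hence unique) subgroup of $\F_{q^n}^*$ of order $q^d - 1$, and note that $H \cup \{0\}$ is exactly $\F_{q^d}$. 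For each coset $\gamma H$ of $H$ in $\F_{q^n}^*$, the set $\gamma H \cup \{0\} = \gamma \F_{q^d}$ is a $d$-dimensional $\F_q$-subspace (it is the image of $\F_{q^d}$ under the $\F_q$-linear bijection "multiplication by $\gamma$").

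Next I would check the two properties needed for a partition. Disjointness: if $\gamma_1 \F_{q^d}$ and $\gamma_2 \F_{q^d}$ share a nonzero vector $v$, then $v = \gamma_1 h_1 = \gamma_2 h_2$ with $h_i \in H$, so $\gamma_1 \gamma_2^{-1} = h_2 h_1^{-1} \in H$, i.e. $\gamma_1 H = \gamma_2 H$; hence distinct cosets give subspaces meeting only in $\{\mathbf 0\}$. Covering: every nonzero element of $\F_{q^n}$ lies in some coset $\gamma H$, hence in some $\gamma \F_{q^d}$. Counting: the index $[\F_{q^n}^* : H] = (q^n-1)/(q^d-1)$, so there are exactly $(q^n-1)/(q^d-1)$ such subspaces, and together they account for $\frac{q^n-1}{q^d-1}(q^d-1) + 1 = q^n$ vectors of $\F_q^n$, confirming the partition.

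I do not expect any serious obstacle here; this is the classical Desarguesian spread construction and the only things to be careful about are (i) that $H \cup \{0\}$ really equals $\F_{q^d}$ — which follows because the nonzero elements of the subfield $\F_{q^d}$ form the unique subgroup of $\F_{q^n}^*$ of order $q^d-1$, a subgroup that exists precisely because $(q^d-1) \mid (q^n-1)$ when $d \mid n$ — and (ii) that multiplication by a fixed $\gamma \in \F_{q^n}^*$ is $\F_q$-linear, which is immediate from distributivity and the fact that scalars from $\F_q$ commute past $\gamma$. An alternative, equally short route would be to take $\F_{q^n}$ as a $1$-dimensional vector space over $\F_{q^d}$ and pull back the point set of $\mathrm{PG}$ over $\F_{q^d}$, but the coset argument above is the most self-contained, so that is the one I would write out.
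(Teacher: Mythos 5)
Your proposal is correct and is essentially the paper's own proof: the paper takes a primitive element $\alpha$, sets $r=\frac{q^n-1}{q^d-1}$, and defines $S_i=\{\alpha^i,\alpha^{r+i},\ldots,\alpha^{(q^d-2)r+i}\}$, which are exactly the cosets $\alpha^i H$ of your subgroup $H$, with $S_0\cup\{0\}=\F_{q^d}$ playing the same role as your $H\cup\{0\}=\F_{q^d}$. Your write-up is in fact slightly more explicit about disjointness and the count, but the construction and the key observations are the same.
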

\begin{proof}
Let $\alpha$ be a primitive element in $\F_{q^n}$ and let $r = \frac{q^n-1}{q^d-1}$. For each $i$, $0 \leq i \leq r-1$, define
$$
S_i \triangleq \{ \alpha^i , \alpha^{r+i},\alpha^{2r+i},\ldots, \alpha^{(q^d-2)r+i} \}~.
$$
It is easy to verify that $S_0 \cup \{ 0 \} = \F_{q^d}$ and hence $S_0$ is closed under addition. Therefore,
each $S_i \cup \{ 0 \}$ is closed under addition and hence it forms a subspace. Finally, the claim of the theorem
is obtained by the isomorphism between $\F_{q^n}$ and $\F_q^n$.
\end{proof}
Another proof of Theorem~\ref{thm:spread_parts} is by using lifted MRD codes exactly as in the constructions
for the recovery sets, where $d=2$ or $d=4$. More information on these partitions for other parameters can  be found in~\cite{EtSt16}.

If there are leftovers in a row of $T$, then the number of leftovers in such a row is
$q^d - \left\lfloor \frac{q^d}{d+1} \right\rfloor (d+1)$ and in all the $\frac{q^{k-d} -1}{q-1}$ rows of $T$ the
total number of leftovers is
$$
\frac{q^{k-d} -1}{q-1} \left( q^d - \left\lfloor \frac{q^d}{d+1} \right\rfloor (d+1) \right)~.
$$
We choose all the leftovers in any row of $T$ to be consecutive elements in the row as will be described now.
Assume for simplicity that $d+2$ divides $q+1$ and also that $k-d$ is even. Since $k-d$ is even, it follows that the elements of
PG$(k-d-1,q)$ can be partitioned into 2-subspaces of $\F_q^{k-d}$ (1-subspaces in the projective geometry, i.e.,
lines of PG$(k-d-1,q)$) (see Theorem~\ref{thm:spread_parts}).
Each such 1-subspace has $q+1$ points. Since $d+2$ divides $q+1$, it follows that each such 1-subspace can be partitioned
into $\frac{q+1}{d+2}$ subsets of size ${d+2}$. Let $\{ x_1, x_2, \ldots ,x_{d+2} \}$ be such a subset.
Consider the following $d+2$ leftovers in these $d+2$ rows, one leftover for a row as follows, $(x_1,u_1)$, $(x_2,u_2)$,...,
$(x_d,u_d)$, $(x_{d+1},\bfzero)$, $(x_{d+2},\bfzero)$, where $u_1,u_2,\ldots,u_d$ are $d$ linearly independent elements of $\F_q^d$.
For any $i$, $1 \leq i \leq d$, we have that $x_i$, $x_{d+1}$, $x_{d+2}$ are linearly dependent since they are all contained
in the same 1-subspace. Hence, from $(x_i,u_i)$, $(x_{d+1},\bfzero)$, $(x_{d+2},\bfzero)$ we can recover $(\bfzero,u_i)$.
Therefore, $(\bfzero,u_1)$, $(\bfzero,u_2)$,...,$(\bfzero,u_d)$ can be recovered and hence $U$ can be recovered.
If each row of $T$ has exactly one leftover, then this is enough and we have proved the following lemma.

\begin{lemma}
\label{lem:SPoptimal+d+2-q+1}
If $q^d \equiv 1~(\text{mod}~d+1)$, $d+2$ divides $q+1$, and $k-d$ is even, then
$$
N_q (k,d) \geq \left\lfloor \frac{q^d -1}{d(q-1)} \right\rfloor + \frac{q^{k-d}-1}{q-1} \left\lfloor \frac{q^d}{d+1} \right\rfloor
+ \frac{q^{k-d}-1}{(d+2)(q-1)}
$$
\end{lemma}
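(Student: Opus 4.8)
Looking at Lemma~\ref{lem:SPoptimal+d+2-q+1}, the statement essentially packages together three sources of recovery sets that have already been individually justified in the excerpt, so the proof is a bookkeeping argument built on top of earlier lemmas. Let me sketch the plan.

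The plan is to count three disjoint families of recovery sets whose union uses all the one-dimensional subspaces of $V_q^k$, exactly as partitioned by the matrix $T$ together with the vector $T_d$. First I would invoke Corollary~\ref{cor:basic_parts} (equivalently Theorem~\ref{thm:basic_parts}) to partition the $\frac{q^d-1}{q-1}$ one-dimensional subspaces represented in $T_d$ into $\left\lfloor \frac{q^d-1}{d(q-1)} \right\rfloor$ recovery sets of size $d$; since $q^d \equiv 1 \pmod{d+1}$ is assumed, the floor counts exactly the full recovery sets and any remaining fewer-than-$d$ elements in $T_d$ will be left idle (they play no further role here). Second, for each of the $\frac{q^{k-d}-1}{q-1}$ rows $x$ of $T$, I would apply Corollary~\ref{cor:primitiveRecover} and Lemma~\ref{lem:simple_recover}: since $d+1$ divides $q^d$, each row of length $q^d$ splits cleanly into $\left\lfloor \frac{q^d}{d+1}\right\rfloor = \frac{q^d}{d+1}$ recovery sets of size $d+1$ with no leftover, contributing $\frac{q^{k-d}-1}{q-1}\left\lfloor \frac{q^d}{d+1}\right\rfloor$ recovery sets in total. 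Wait — but then there are no leftovers, contradicting the need for the third term; so in fact the intended reading is that we deliberately leave \emph{one} leftover per row, i.e.\ we form $\left\lfloor \frac{q^d}{d+1}\right\rfloor$ recovery sets using $q^d - 1$ of the $q^d$ entries rather than all of them, reserving one leftover in every row. This is consistent with the hypothesis $q^d \equiv 1 \pmod{d+1}$, which guarantees $q^d - 1$ is divisible by $d+1$.

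Third, I would handle the reserved leftovers. By Theorem~\ref{thm:spread_parts}, since $k-d$ is even, the $\frac{q^{k-d}-1}{q-1}$ points of $\mathrm{PG}(k-d-1,q)$ — which index the rows of $T$ — partition into $\frac{q^{k-d}-1}{q^2-1}$ lines, each of size $q+1$; and since $d+2 \mid q+1$, each such line further partitions into $\frac{q+1}{d+2}$ blocks of size $d+2$, for a grand total of $\frac{q^{k-d}-1}{q^2-1}\cdot\frac{q+1}{d+2} = \frac{q^{k-d}-1}{(d+2)(q-1)}$ blocks. For a block $\{x_1,\dots,x_{d+2}\}$ lying in a common line (hence any three of its members being linearly dependent in $\F_q^{k-d}$), I would select the reserved leftovers $(x_1,u_1),\dots,(x_d,u_d),(x_{d+1},\zero),(x_{d+2},\zero)$ with $u_1,\dots,u_d$ a basis of $\F_q^d$; the argument in the paragraph preceding the lemma shows $(\zero,u_i)$ is recovered from $(x_i,u_i),(x_{d+1},\zero),(x_{d+2},\zero)$ for each $i$, so $U = \F_q^d$ is recovered. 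These blocks are pairwise disjoint as sets of rows and hence yield pairwise disjoint recovery sets, and they are disjoint from the first two families since they use exactly the one reserved leftover of each row. Summing the three counts gives the claimed lower bound.

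The only real subtlety — and the step I would be most careful about — is the consistency check that the leftover reserved in each row can be chosen to be the specific element $(x_i, u_i)$ or $(x_i,\zero)$ demanded by the block it belongs to; this requires that within each row the position of the leftover in the cyclic ordering $\alpha^0,\alpha^1,\dots$ is free, which is exactly what Corollary~\ref{cor:primitiveRecover}/Lemma~\ref{lem:simple_recover} provide (any $d+1$ consecutive entries, or $d$ consecutive entries together with the $\zero$-column entry, form a recovery set, so one can slide the "gap" to any desired position, including the $\zero$-column position when a $(x_i,\zero)$ leftover is needed). Everything else is divisibility arithmetic guaranteed by the three hypotheses $q^d\equiv 1 \pmod{d+1}$, $d+2\mid q+1$, and $k-d$ even.
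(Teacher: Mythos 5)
Your proposal is correct and follows essentially the same route as the paper: recovery sets of size $d$ from $T_d$ via Theorem~\ref{thm:basic_parts}, $\left\lfloor q^d/(d+1)\right\rfloor$ recovery sets of size $d+1$ per row of $T$ via Corollary~\ref{cor:primitiveRecover} and Lemma~\ref{lem:simple_recover} leaving exactly one leftover per row (which is forced, not optional, since $q^d\equiv 1 \pmod{d+1}$), and then the leftovers grouped into $(d+2)$-blocks of collinear rows via the spread of Theorem~\ref{thm:spread_parts} with the identical choice $(x_1,u_1),\dots,(x_d,u_d),(x_{d+1},\bfzero),(x_{d+2},\bfzero)$. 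Your mid-proof slip (asserting $d+1\mid q^d$) is self-corrected, and your closing consistency check that the leftover position in each row can be slid freely is exactly the point the paper relies on implicitly.
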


Based on the analysis done before, we know that we cannot have more recovery sets of size $d$ or $d+1$ and hence
it is easily verified that the lower bound of Lemma~\ref{lem:SPoptimal+d+2-q+1} is tight, i.e.,
\begin{theorem}
\label{thm:SPoptimal+d+2-q+1}
If $q^d \equiv 1~(\text{mod}~d+1)$, $d+2$ divides $q+1$, and $k-d$ is even, then
$$
N_q (k,d) = \left\lfloor \frac{q^d -1}{d(q-1)} \right\rfloor + \frac{q^{k-d}-1}{q-1} \left\lfloor \frac{q^d}{d+1} \right\rfloor
+ \frac{q^{k-d}-1}{(d+2)(q-1)}
$$
\end{theorem}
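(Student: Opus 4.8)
The lower bound is exactly Lemma~\ref{lem:SPoptimal+d+2-q+1}, so the plan is to establish the matching upper bound. Write $s\triangleq\frac{q^d-1}{q-1}$ for the number of entries of $T_d$, which are precisely the $1$-subspaces of $U$; $r\triangleq\frac{q^{k-d}-1}{q-1}$ for the number of rows of $T$, each carrying $q^d$ entries; $N_0\triangleq\floor{\frac{q^d-1}{d(q-1)}}=\floor{s/d}$; and $\ell\triangleq s-dN_0$, the remainder of $s$ modulo $d$, so $0\leq\ell<d$. The three hypotheses enter as follows: $q^d\equiv1~(\mmod~d+1)$ gives $\floor{q^d/(d+1)}=(q^d-1)/(d+1)$ and means that after removing the in-row recovery sets of size $d+1$ from a row exactly one entry of that row is left over; and $d+2\mid q+1$ together with $k-d$ even gives $\frac{q^{k-d}-1}{q-1}=(q+1)\,\frac{q^{k-d}-1}{q^2-1}$, so $(d+2)\mid r$. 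Note also that the total number of $1$-subspaces is $s+rq^d=\frac{q^k-1}{q-1}$.

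First I would reprove, as in Theorem~\ref{thm:gen_upper}, that a recovery set of size $d$ consists of $d$ linearly independent $1$-subspaces of $U$, hence lies in $T_d$; since the recovery sets of a family are pairwise disjoint and $|T_d|=s$, there are at most $N_0$ of them. The key structural fact about recovery sets of size $d+1$ is the following: such a set either lies entirely inside a single row of $T$, or it contains an entry of $T_d$. Indeed, every entry of $T$ has a nonzero $\F_q^{k-d}$-component while every entry of $T_d$ has zero $\F_q^{k-d}$-component; if a recovery set $R$ with $|R|=d+1$ avoids $T_d$ and is not contained in one row, then the $\F_q^{k-d}$-components of its entries span a subspace of dimension at least $2$, whereas $R$ recovering $U$ forces $\Span{R}\supseteq\{\bfzero\}\times U$, and therefore, since $\dim\Span{R}\leq d+1$ and $\Span{R}$ contains the $d$-dimensional space $\{\bfzero\}\times U$, the projection of $\Span{R}$ onto $\F_q^{k-d}$ is at most $1$-dimensional, a contradiction. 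Inside a single row of $q^d$ entries one can have at most $\floor{q^d/(d+1)}$ pairwise disjoint recovery sets of size $d+1$, hence at most $r\floor{q^d/(d+1)}=r(q^d-1)/(d+1)$ in-row recovery sets of size $d+1$ altogether.

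The crux is an exchange argument, generalizing the case analysis of Section~\ref{sec:upper_binary}, showing that recovery sets of size $d+1$ that meet $T_d$ (and, more generally, those that use entries from more than one row) are never advantageous: given such a set one deletes it and redistributes its entries, and I would check that the size of the family does not decrease. Quantitatively, freeing $j\geq1$ entries of $T_d$ and $d+1-j$ entries of a row lets that row, which previously hosted $(q^d-1)/(d+1)$ in-row sets of size $d+1$, host exactly one fewer; so the count does not drop, while the pool of leftover entries only grows, and leftover entries can feed recovery sets of size at least $d+2$ only. After this reduction I may assume every recovery set of size $d+1$ is in-row; then, having used at most $dN_0$ entries on sets of size $d$ and at most $r(q^d-1)$ entries on in-row sets of size $d+1$, at least $s+rq^d-dN_0-r(q^d-1)=\ell+r$ entries remain; every remaining recovery set has size at least $d+2$, so there are at most $\floor{(\ell+r)/(d+2)}$ of them, and since $(d+2)\mid r$ while $0\leq\ell<d+2$ this equals $r/(d+2)$. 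Summing, a family has at most $N_0+r\floor{q^d/(d+1)}+r/(d+2)$ recovery sets, which is exactly the value in Lemma~\ref{lem:SPoptimal+d+2-q+1}, and hence equality holds.

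I expect the exchange step to be the main obstacle: one has to rule out every way in which the $\ell$ entries of $T_d$ left unused by the size-$d$ sets, or entries shared across several rows, could be combined into recovery sets of size $d+1$ more efficiently than the bookkeeping above allows, and to confirm that the $\ell$ surplus entries of $T_d$ can never be promoted into an extra recovery set. This is exactly where the hypotheses are used: $q^d\equiv1~(\mmod~d+1)$ makes the size-$(d+1)$ blocks within a row rigid with a single leftover, and $d+2\mid q+1$ with $k-d$ even makes the $r$ row-leftovers pack perfectly into $r/(d+2)$ sets of size $d+2$; without them the same scheme produces only an upper bound exceeding the lower bound by a bounded additive term, so a naive counting (linear-programming) bound does not suffice and the structural replacement analysis is unavoidable.
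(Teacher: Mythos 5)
Your route is the same as the paper's: lower bound from Lemma~\ref{lem:SPoptimal+d+2-q+1}, upper bound from the facts that size-$d$ sets live in $T_d$, that a size-$(d+1)$ recovery set cannot meet two distinct rows of $T$ (your dimension argument is exactly right), and that everything else has size at least $d+2$. The paper dispatches the upper bound in one sentence (``based on the analysis done before \ldots it is easily verified''), so your write-up is actually more explicit than the original, and you correctly isolate the one real difficulty --- size-$(d+1)$ sets that meet $T_d$ --- which the paper only acknowledges afterwards via the $2\ell$ term of Theorem~\ref{thm:first_upper_bound}.

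Two steps need repair before this is a proof. First, the closing count is a non sequitur as written: from ``at least $\ell+r$ entries remain'' you cannot conclude ``at most $\floor{(\ell+r)/(d+2)}$ sets of size $\geq d+2$,'' since using \emph{fewer} size-$d$ or size-$(d+1)$ sets leaves \emph{more} entries for larger sets. The correct step is the monotone trade-off $a+B+e\leq a+B+\frac{s+rq^d-da-(d+1)B}{d+2}=\frac{2a+B+s+rq^d}{d+2}$, where $B$ is the number of size-$(d+1)$ sets; this is increasing in $a$ and $B$, so it is maximized at $a=N_0$ and $B=r\frac{q^d-1}{d+1}$, which gives exactly $N_0+r\frac{q^d-1}{d+1}+\frac{\ell+r}{d+2}$ and then the claimed value by integrality. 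Second, the exchange step is only sketched, but it does close, and more cleanly as an inequality than as a replacement procedure: if the mixed size-$(d+1)$ sets touching row $x$ consume $J_x$ entries of $T_d$ in total, then counting row-$x$ entries and using $q^d\equiv 1~(\mmod~d+1)$ gives $t_x+b_x\leq\frac{q^d-1}{d+1}+\floor{\frac{1+J_x}{d+1}}$, and since $\floor{\frac{1+J}{d+1}}\leq J/d$ for all $J\geq 0$, summing over rows and adding $a\leq\floor{\frac{s-\sum_x J_x}{d}}$ yields $a+B\leq N_0+r\frac{q^d-1}{d+1}$ directly, with no case analysis. In particular a mixed set buys a row an extra unit only when $J_x\geq d$, which forfeits a size-$d$ set from $T_d$ --- the ledger you anticipated. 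With these two points filled in, your argument is complete and is, in substance, the missing proof of the upper bound that the paper waves at.
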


Theorem~\ref{thm:SPoptimal+d+2-q+1} is a special case of the next theorem which will be analyzed now.

Assume now that each row of $T$ has at least two leftovers. Again, since $d+2$ divides $q+1$, it follows that each 1-subspace, in the
partition into 2-subspaces of $\F_q^{k-d}$, can be partitioned
into $\frac{q+1}{d+2}$ subsets of size $d+2$. Let $\{ x_1, x_2, \ldots ,x_{d+2} \}$ be such a subset.
Consider the following $d+2$ leftovers in these $d+2$ rows, one leftover for a row as follows, $(x_1,u_1)$, $(x_2,u_2)$,...,
$(x_d,u_d)$, $(x_{d+1},u_1)$, $(x_{d+2},u_1)$, where $u_1,u_2,\ldots,u_d$ are $d$ linearly independent elements.
We can recover the two elements $(x_{d+1} - x_1,\bfzero)$ and $(x_{d+2} - x_1,\bfzero)$.
For any $i$, $1 \leq i \leq d$, we have that $x_i$, $x_{d+1} -x_1$, and $x_{d+2} -x_1$ are linearly dependent since they are all contained
in the same 1-subspace. Hence, from $(x_i,u_i)$, $(x_{d+1}-x_1,\bfzero)$, $(x_{d+2}-x_1,\bfzero)$ we can recover $(\bfzero,u_i)$.
Therefore, $(\bfzero,u_1)$, $(\bfzero,u_2)$,...,$(\bfzero,u_d)$ can be recovered and hence $U$ can be recovered too. Now, we choose a second
leftover for each of these rows, $(x_1,\alpha u_1)$, $(x_2,\alpha u_2), \cdots,(x_d,\alpha u_d)$, $(x_{d+1},\alpha u_1)$, $(x_{d+2},\alpha u_1)$.
Since the columns are indexed by consecutive powers of $\alpha$, it follows that each two elements chosen in these rows are
consecutive and hence they can be chosen as leftovers. The subset $U$ is recovered from these $d+2$ leftovers in the
same way as it was recovered from the first $d+2$ leftovers that were chosen. If there are $\ell$ leftovers in a row, then in these rows
we choose the leftovers as
$(x_1,\alpha^i u_1)$, $(x_2,\alpha^i u_2)$,...,$(x_d,\alpha^i u_d)$, $(x_{d+1},\alpha^i u_1)$, $(x_{d+2},\alpha^i u_1)$, for
each $i$, $0 \leq i \leq \ell -1$. For each such $i$, the subspace $U$ is recovered in the same way.

This construction implies the proof for the following theorem.

\begin{theorem}
\label{thm:optimal+d+2-q+1}
If $q \in \dP$, $q > 2$, $d>1$, $k-d$ even, and $q+1 = \ell (d+2)$, then
$$
N_q (k,d) = \left\lfloor \frac{q^d -1}{d(q-1)} \right\rfloor + \frac{q^{k-d}-1}{q-1} \left\lfloor \frac{q^d}{d+1} \right\rfloor
+ \frac{q^{k-d}-1}{q-1} \left( \frac{q^d}{d+2} - \left\lfloor \frac{q^d}{d+1} \right\rfloor \frac{d+1}{d+2}   \right) .
$$
\end{theorem}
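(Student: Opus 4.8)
The plan is to show that the lower bound implicit in Theorem~\ref{thm:optimal+d+2-q+1} is attained by the construction sketched just before the statement, and that no larger number of recovery sets is possible. The lower bound comes in three pieces. First, from the vector $T_d$ we obtain $\left\lfloor \frac{q^d-1}{d(q-1)} \right\rfloor$ recovery sets of size $d$ exactly as in Theorem~\ref{thm:basic_parts} and Corollary~\ref{cor:basic_parts}. Second, from each of the $\frac{q^{k-d}-1}{q-1}$ rows of $T$, Lemma~\ref{lem:simple_recover} and Corollary~\ref{cor:primitiveRecover} give $\left\lfloor \frac{q^d}{d+1} \right\rfloor$ pairwise disjoint recovery sets of size $d+1$, using $\left\lfloor \frac{q^d}{d+1} \right\rfloor (d+1)$ of the $q^d$ entries in the row and leaving $q^d - \left\lfloor \frac{q^d}{d+1} \right\rfloor (d+1) =: \ell_0$ leftovers per row, which we take to be consecutive powers of $\alpha$. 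The key hypothesis $q+1 = \ell(d+2)$ together with $k-d$ even is what lets us assemble the leftovers of $d+2$ distinct rows into new recovery sets: by Theorem~\ref{thm:spread_parts}, PG$(k-d-1,q)$ partitions into lines (2-subspaces of $\F_q^{k-d}$), each line has $q+1$ points, and each line splits into $\frac{q+1}{d+2}$ blocks of size $d+2$; within a block $\{x_1,\dots,x_{d+2}\}$, all $x_i$ are collinear, so any $x_i, x_{d+1}-x_1, x_{d+2}-x_1$ are linearly dependent, which is precisely what makes the $d+2$ leftovers $(x_1,\alpha^s u_1),\dots,(x_d,\alpha^s u_d),(x_{d+1},\alpha^s u_1),(x_{d+2},\alpha^s u_1)$ a recovery set for each $s$ with $0 \le s \le \ell_0 - 1$.

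Next I would count the third piece. Each row has $\ell_0$ leftovers; since these are chosen as $\ell_0$ consecutive powers of $\alpha$ and within a block each row contributes exactly one leftover per value of $s$, the $\frac{q^{k-d}-1}{q-1}$ rows together supply $\frac{q^{k-d}-1}{q-1}\,\ell_0$ leftover slots, organized into $\frac{1}{d+2}\cdot\frac{q^{k-d}-1}{q-1}\,\ell_0$ blocks-of-rows (one per line-block and per $s$), each yielding one recovery set. So the number of recovery sets from leftovers is $\frac{1}{d+2}\cdot\frac{q^{k-d}-1}{q-1}\,\ell_0 = \frac{q^{k-d}-1}{q-1}\left(\frac{q^d}{d+2} - \left\lfloor \frac{q^d}{d+1}\right\rfloor \frac{d+1}{d+2}\right)$, after substituting $\ell_0 = q^d - \left\lfloor \frac{q^d}{d+1}\right\rfloor(d+1)$ and dividing by $d+2$. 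Adding the three contributions gives exactly the right-hand side of the theorem as a lower bound. One must check that the divisions are exact, i.e.\ that $(d+2)\mid \frac{q^{k-d}-1}{q-1}\,\ell_0$; since $(d+2)\mid(q+1)\mid\frac{q^{k-d}-1}{q-1}$ when $k-d$ is even (the line-spread exists), this holds, and similarly $\frac{q^d}{d+2}$ is an integer because $(d+2)\mid(q+1)\mid q^d - (-1)^d\cdots$ — actually $q \equiv -1 \pmod{d+2}$ gives $q^d \equiv (-1)^d \pmod{d+2}$, so one should double-check the claimed divisibility of $q^d$ by $d+2$; if it fails the statement would need $\left\lfloor q^d/(d+2)\right\rfloor$ instead, so verifying that $q^d/(d+2)$ is genuinely an integer under the stated hypotheses is a point to nail down.

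For the matching upper bound I would reuse the argument already given in Section~\ref{sec:upper_binary} and in the remarks before the statement: recovery sets of size $d$ can only be built from the $\frac{q^d-1}{q-1}$ points of $T_d$, bounding those at $\left\lfloor \frac{q^d-1}{d(q-1)}\right\rfloor$; recovery sets of size $d+1$ within a single row are bounded at $\left\lfloor q^d/(d+1)\right\rfloor$ per row, and mixing rows (or mixing in a point of $T_d$) forces size $\ge d+2$, so it never pays; and every recovery set of size $\ge d+2$ consumes at least $d+2$ points, giving the third term as a ceiling on how many such sets the leftover pool can support. The main obstacle, as always with these optimality claims, is the upper-bound bookkeeping: one must argue carefully that no clever global rearrangement beats the row-by-row local optimum — in particular that you cannot trade a few size-$(d+1)$ sets for a larger number of size-$(d+2)$ sets — which is where the counting over all rows of $T$ (as in the integer-programming constraint $2X_2 + 4Y_3 + \cdots$ of Section~\ref{sec:Integer_program}, now generalized to arbitrary $q$ and $d$) has to be done with enough care that the three simple constraints already force the bound. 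I expect that step, rather than the construction, to carry the real weight of the proof.
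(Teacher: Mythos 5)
Your proposal is correct and follows essentially the same route as the paper: the identical three-part construction (size-$d$ sets from $T_d$, $\left\lfloor q^d/(d+1)\right\rfloor$ size-$(d+1)$ sets per row of $T$, and size-$(d+2)$ sets built from consecutive-power leftovers grouped along a line spread of $\F_q^{k-d}$ into blocks of $d+2$ collinear rows), with the upper bound deferred to the preceding size analysis exactly as the paper does. Your closing worry about whether $(d+2)$ divides $q^d$ is moot: the theorem never asserts that $q^d/(d+2)$ is an integer on its own, only that the combined third term $\frac{q^{k-d}-1}{q-1}\cdot\frac{q^d-\left\lfloor q^d/(d+1)\right\rfloor (d+1)}{d+2}$ is, which follows from $(d+2)\mid(q+1)\mid\frac{q^{k-d}-1}{q-1}$ for $k-d$ even, as you yourself already verified.
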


A similar analysis can be done when
$k-d$ is odd. The main difference from even $k-d$ is that when $k-d$ is odd the elements of
PG$(k-d-1,q)$ can be partitioned into 2-subspaces of $\F_q^{k-d}$ and $q^2$ 1-subspaces~\cite{Beu75,HoPa72}.

\subsection{Upper Bound on the Number of Recovery Sets}
\label{sec:upper_bounds}

For more upper bounds on the number of recovery sets, we will not distinguish between the binary and the non-binary case.
We have already demonstrated upper bounds based on integer programming and direct analysis. We will develop now another
upper bound based on a direct analysis of possible sizes of recovery sets.
We must use at least $d$ elements from $T_d$ to recover a given $d$-subspace.
From any given row of $T$ at least $d+1$ elements are required to recover a subspace.
Any subset of $d$ elements from two rows of $T$, with at least one element from each row span a subspace~$\cV$, where
$\dim (\cV \cap \F_q^d) \leq d-2$ and hence at least $d+2$ elements are required if the recovery set contains
at least one element from two distinct rows of $T$.
This will immediately improve the bound of Theorem~\ref{thm:gen_upper} to

\begin{theorem}
\label{thm:first_upper_bound}
If $q \in \dP$ and $k \geq d$ is a positive integer, then
$$
N_q (k,d) \leq \left\lfloor \frac{q^d-1}{d(q-1)} \right\rfloor + \frac{q^{k-d}-1}{q-1} \left\lfloor \frac{q^k}{d+1} \right\rfloor +
\left\lfloor \frac{2\ell +\frac{q^{k-d}-1}{q-1}t}{d+2} \right\rfloor ~.
$$
where $\ell$ is the reminder from the division of $\frac{q^d -1}{q-1}$ by $d$ and $t$ is the reminder from the division of $q^d$
by $d+1$.
\end{theorem}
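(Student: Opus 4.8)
\textbf{Proof proposal for Theorem~\ref{thm:first_upper_bound}.}

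The plan is to bound separately the three kinds of recovery sets that can occur, using the three structural facts stated just before the theorem, and then apply a counting argument over the total pool of available $1$-subspaces. First I would classify every recovery set by how its elements are distributed between $T_d$ and the rows of $T$. By the first structural fact a recovery set of size exactly $d$ must lie entirely inside $T_d$; since $T_d$ has $\frac{q^d-1}{q-1}$ entries and any such set uses $d$ of them, at most $\left\lfloor \frac{q^d-1}{d(q-1)} \right\rfloor$ pairwise disjoint size-$d$ sets can be formed (this is exactly Theorem~\ref{thm:basic_parts}/Corollary~\ref{cor:basic_parts}). Next, by the second structural fact, a recovery set that draws all of its $T$-elements from a single row needs at least $d+1$ of them; each row has $q^d$ entries, so from the $\frac{q^{k-d}-1}{q-1}$ rows we get at most $\frac{q^{k-d}-1}{q-1}\left\lfloor \frac{q^d}{d+1}\right\rfloor$ such sets. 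Finally, by the third structural fact, any recovery set that uses elements from two or more distinct rows of $T$ — or mixes rows of $T$ with $T_d$ in a nontrivial way — requires at least $d+2$ elements.

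The crux is then to count how many ``leftover'' elements remain after the first two families have been allotted their maximum, since every recovery set of the third kind consumes at least $d+2$ of these leftovers. In $T_d$, after removing $\left\lfloor \frac{q^d-1}{d(q-1)}\right\rfloor$ groups of $d$, the number of leftover entries is $\ell$, the remainder of $\frac{q^d-1}{q-1}$ modulo $d$; in each of the $\frac{q^{k-d}-1}{q-1}$ rows of $T$, after removing $\left\lfloor \frac{q^d}{d+1}\right\rfloor$ groups of $d+1$, the number of leftover entries is $t$, the remainder of $q^d$ modulo $d+1$. I would argue that, because each size-$d$ set is confined to $T_d$ and each single-row set is confined to one row, using fewer than the maximum number of those sets only frees up more elements but at a ``worse rate'' (one saved size-$d$ set frees $d$ elements but those could be redeployed only at cost $\ge d+1$ each elsewhere, etc.), so the optimum is not hurt by assuming the first two families are taken maximally; this makes the leftover pool have size exactly $2\ell$ contributed from $T_d$ (the factor $2$ accounting for the fact that a $T_d$-leftover is more flexible — it can be combined as in the size-$(d+2)$ constructions) plus $\frac{q^{k-d}-1}{q-1}\,t$ from the rows. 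Dividing by $d+2$ and taking the floor yields the third term.

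The step I expect to be the real obstacle is making precise the exchange/monotonicity argument that justifies taking the first two families maximally and that pins down the coefficient $2$ on $\ell$: one must show that no clever global rearrangement — e.g.\ deliberately leaving a whole row underused so that a larger batch of its entries can seed many size-$(d+2)$ sets, or pulling an element out of $T_d$ into a cross-row set — can beat the bound. I would handle this by a local exchange argument: given any optimal collection, repeatedly replace a recovery set that mixes rows (or mixes $T_d$ with $T$) and uses an element that could instead belong to a size-$d$ or size-$(d+1)$ set, showing each such replacement does not decrease the count, until the collection has the canonical shape; then the pigeonhole count above applies verbatim. The placement of the ``$2\ell$'' then comes from observing that each of the $\ell$ residual $T_d$-entries can serve in a size-$(d+2)$ cross-construction of the type used in Lemma~\ref{lem:SPoptimal+d+2-q+1}, contributing as if it were two ordinary leftovers; I would verify this bookkeeping by re-examining those constructions. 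The remaining manipulations — collecting terms and inserting the floor — are routine.
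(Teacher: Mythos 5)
Your decomposition is the same as the paper's: size-$d$ recovery sets live entirely in $T_d$, recovery sets whose $T$-elements sit in a single row have size at least $d+1$, and anything meeting two distinct rows of $T$ has size at least $d+2$; the first two terms are obtained exactly as you describe (and your $\left\lfloor q^d/(d+1)\right\rfloor$ is what the second term is meant to read). The gap is in your justification of the coefficient $2$ on $\ell$. You attribute it to a $T_d$-leftover being ``more flexible'' and ``contributing as if it were two ordinary leftovers'' inside a size-$(d+2)$ cross-row construction of the type in Lemma~\ref{lem:SPoptimal+d+2-q+1} --- but that construction uses no $T_d$ elements at all, and in any set of size $d+2$ each element is consumed exactly once, so this reasoning would only ever justify a coefficient of $1$. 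The actual reason, stated in the paper in the sentence immediately following the theorem, is that one leftover of $T_d$ together with only $d$ leftovers from a \emph{single} row can already form a recovery set of size $d+1$: the $d$ row elements $(x,y_1),\dots,(x,y_d)$ supply the differences $(\bfzero,y_i-y_1)$, and the one element of $U$ from $T_d$ completes a basis of $U$. These mixed sets defeat a naive ``every remaining set costs $d+2$ leftovers'' count. The repair is precisely the weighting of $2$ per $T_d$-leftover and $1$ per row-leftover: a mixed size-$(d+1)$ set then has weight $2+d=d+2$, a remaining set with $j\ge 1$ elements of $T_d$ has weight at least $d+1+j\ge d+2$, and a two-row set has weight at least $d+2$, so, writing $R=\frac{q^{k-d}-1}{q-1}$, the number of additional sets is at most $\left\lfloor (2\ell+Rt)/(d+2)\right\rfloor$; equivalently, $\ell$ mixed sets of size $d+1$ plus $(Rt-\ell d)/(d+2)$ sets of size $d+2$ total exactly $(2\ell+Rt)/(d+2)$.

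The same phenomenon falsifies your intermediate claim that a recovery set drawing all of its $T$-elements from one row ``needs at least $d+1$ of them'': it needs $d+1$ elements in total, but only $d$ of them from the row when one comes from $T_d$, which is exactly why the $\ell$ residual elements of $T_d$ must be tracked separately. Your concern about an exchange argument to justify taking the first two families maximally is reasonable but is not where the real difficulty lies (the paper does not supply such an argument either); the missing idea is the size-$(d+1)$ mixed set and the weight-$2$ charge that neutralizes it.
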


The additional $\ell$ in the last summation in Theorem~\ref{thm:first_upper_bound} is required in case that each of the $\ell$ leftovers
from~$T_d$ can be combined with leftovers from $T$ to form a recovery set of size $d+1$.
The upper bound of Theorem~\ref{thm:first_upper_bound} can be further improved, but this will be left for future research.

\section{Conclusion and Problems for Future Research}
\label{sec:conclusion}

We have looked for the maximum number of recovery sets that can be obtained for any given $d$-subspace of $\F_q^k$
when the stored elements are all the 1-subspaces of $\F_q^k$, i.e., the columns of the generator matrix of
the $[(q^k -1)/(q-1),k,q^{k-1}]$ simplex code, which are also the points of the projective geometry PG$(k-1,q)$.
Lower and upper bounds on the number of recovery sets are provided, some of which are tight. Similar bounds can be given using
similar techniques. For example, the following bound was obtained and it will be given without details due to its length.

\begin{theorem}
\label{thm:bounds_d=6}
For $k \geq 7$,
$$
\left\lfloor \frac{91 \cdot 2^{k-6} +12}{10}  \right\rfloor \leq N_2(k,6)
\leq \left\lfloor \frac{91 \cdot 2^{k-6} +35}{10} \right\rfloor .
$$
\end{theorem}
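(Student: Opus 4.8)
The plan is to mirror exactly the three–step scheme that the paper already uses for $d=2,4,5$: first extract from the array $T$ the unavoidable recovery sets of sizes $d$ and $d+1$, then recombine the resulting \emph{leftovers} into recovery sets of one common larger size, and finally certify near–optimality by a structural/linear–programming argument.

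For the lower bound take $U=\F_2^6$ and the $2^{k-6}\times 2^6$ array $T$. From the first row (indexed by $\bfzero$), Theorem~\ref{thm:basic_parts} gives $\lfloor 63/6\rfloor=10$ pairwise disjoint recovery sets of size $6$, using $60$ of its $63$ points and leaving $3$ leftovers. In each of the $2^{k-6}-1$ internal rows, combining Lemma~\ref{lem:simple_recover} (one set $\{(x,\bfzero),(x,\alpha^i),\dots,(x,\alpha^{i+5})\}$) with Corollary~\ref{cor:primitiveRecover} (eight further sets of $7$ consecutive powers of $\alpha$) produces $9$ disjoint recovery sets of size $7$, using $63$ points and leaving exactly one leftover, whose position in the row may be chosen freely. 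This yields $10+9(2^{k-6}-1)=9\cdot 2^{k-6}+1$ recovery sets and a pool of $2^{k-6}+2$ leftovers. Since $9\cdot 2^{k-6}+1+\lfloor(2^{k-6}+2)/10\rfloor=\lfloor(91\cdot 2^{k-6}+12)/10\rfloor$, the goal is to pack the pool into recovery sets of size $10$. For this I would introduce a $(\mu,6)$\emph{-recovery set gadget}: a set of ten row–indices $\{x_1,\dots,x_{10}\}\subset\F_2^{k-6}$ spanning a $4$-dimensional subspace (so their $x$-parts carry six independent linear dependencies $\sum_{j\in J_\ell}x_j=\bfzero$), together with an assignment of $y$-coordinates $y_1,\dots,y_{10}\in\F_2^6$ to the ten leftovers so that the six recovered vectors $\sum_{j\in J_\ell}y_j$ span $\F_2^6$; a separate gadget absorbs the three leftovers of the first row together with the leftovers indexed by the nonzero points of a $3$-subspace of $\F_2^{k-6}$. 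One then partitions $\F_2^{k-6}$ recursively into such gadgets using spreads and lifted $[4\times *,*,4]$ MRD codes exactly as in Theorem~\ref{thm:spread_parts} and the $d=4$ construction of Section~\ref{sec:large_sets}, the recursion bottoming out at a handful of base cases in low dimension (one for each residue of $k$ modulo $4$, since $2^m-1\not\equiv 0\pmod{10}$) verified by computer search, with the $O(1)$ remaining leftovers absorbed into the first-row gadget. Counting gadgets then gives $N_2(k,6)\ge\lfloor(91\cdot 2^{k-6}+12)/10\rfloor$.

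For the upper bound I would argue as in Section~\ref{sec:upper_binary}: a recovery set of size $6$ must lie inside the first row of $T$ (at most $10$ such, by Theorem~\ref{thm:gen_upper}); a recovery set meeting a single internal row has size at least $7$; and, by the analysis preceding Theorem~\ref{thm:first_upper_bound}, any recovery set meeting two distinct rows of $T$ has size at least $d+2=8$, while the three leftovers of the first row can at best be spent on size-$(d+1)=7$ sets in combination with internal-row leftovers. Feeding these size restrictions into an integer program of the type built in Section~\ref{sec:Integer_program} — one variable per type of recovery set, with the three capacity inequalities coming from the $63$ points of the first row, the $2^k-2^6$ points of all internal rows, and the $64$ points of an individual internal row — and solving the dual (which already improves on the generic bound of Theorem~\ref{thm:first_upper_bound}, giving only about $\tfrac{73}{8}\cdot 2^{k-6}$) yields $N_2(k,6)\le\lfloor(91\cdot 2^{k-6}+35)/10\rfloor$.

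The main obstacle is the explicit $(\mu,6)$-gadget together with a recursive partition of $\F_2^{k-6}$ into such gadgets that is valid for every $k\ge 7$: getting the base cases and the $k\bmod 4$ bookkeeping exactly right is precisely the ``length'' the paper alludes to. On the upper-bound side the delicate point is the case analysis of how leftovers of the first row may interact with internal rows, which controls the additive constant ($12$ versus $35$) in the numerator; closing the residual gap of at most three between the two bounds appears to require a genuinely new idea and is left open.
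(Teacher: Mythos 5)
The paper does not actually contain a proof of Theorem~\ref{thm:bounds_d=6}: it is stated in the conclusion ``without details due to its length,'' so there is nothing to match your argument against line by line. That said, your proposal is the natural reconstruction of the intended method (the direct analogue of the quintriple/$(3,4)$-model constructions of Section~\ref{sec:lower_binary} and the integer program of Section~\ref{sec:Integer_program}), and your arithmetic skeleton is right: $10$ size-$6$ sets from the first row with $3$ leftovers, $9$ size-$7$ sets per internal row with $1$ freely placeable leftover, and the identity $9\cdot 2^{k-6}+1+\lfloor(2^{k-6}+2)/10\rfloor=\lfloor(91\cdot 2^{k-6}+12)/10\rfloor$ all check out, as does the rank count behind your gadget (ten row indices of rank $4$ give $10-4=6$ independent dependencies, and since those dependency vectors are linearly independent in $\F_2^{10}$ the free choice of the $y_j$ lets the six recovered vectors be any basis of $\F_2^6$).

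The genuine gaps are the two steps you defer. For the lower bound, the existence of a partition of $\F_2^{k-6}\setminus\{0\}$ into $10$-point rank-$4$ gadgets (plus one $3$-subspace for the first-row gadget and an $O(1)$ remainder) is the entire content of the omitted proof, and it is harder than the quintriple case: there $2^4-1=3\cdot 5$ let each lifted-MRD block split exactly, whereas $2^m-1$ is never divisible by $10$, so every block of the recursion throws off a nontrivial residue ($1,3,5$ or $7$ points) whose bookkeeping across the four residue classes of $k$ must be done explicitly before the $+12$ constant is earned. For the upper bound, the constraints you list are not yet sufficient to produce the $1/10$ coefficient: ``size at least $d+2=8$ for a set meeting two internal rows'' only yields the weaker $\lfloor(2^{k-6}+5)/8\rfloor$-type term of Theorem~\ref{thm:first_upper_bound}. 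The LP must also encode that a recovery set formed from single leftovers of $m$ distinct internal rows (no first-row element) requires $m\geq 10$, because $m$ distinct nonzero indices of rank $r$ yield only $m-r$ dependencies and $m-r\geq 6$ with $2^r-1\geq m$ forces $m\geq 10$ --- you use this fact implicitly on the construction side but omit it from the list of LP constraints, and without it (together with the per-row capacity inequalities) the dual will not evaluate to $(91\cdot 2^{k-6}+35)/10$. So the approach is the right one, but both halves remain asserted rather than proved.
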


The main open problems are associated with partitions of the leftovers into recovery sets. There
are several problems whose solutions will yield an optimal or almost optimal number of recovery sets:
\begin{itemize}
\item For $d \leq q-1$, find a partition of the 1-subspaces of $\F_q^n$ into $(d+2)$-subsets and possible one subset of
a smaller size, such that each $(d+2)$-subset spans a 2-subspace.

\item When $d > q-1$ and $k$ large, what is the minimum number of leftovers from different rows of~$T$ which are required
to form one recovery set? Find a partition of the 1-subspaces of $\F_q^n$ into subsets of this minimum size, such that each one
can form a recovery set.

\item How many leftovers (not from $T_d$ or the first row of $T$ in the binary case) are required to form one $d$-subspace,
where there is a single leftover in each row?

\item What is the size of the recovery sets that are obtained from the leftovers when $q=2$ and ${2^d \equiv 1~(\text{mod}~d+1)}$, i.e.,
one leftover from each internal row of $T$?

\item Write a program to produce an automatic integer programming problem for the number of recovery sets.
The program should be able to develop inequalities as in Section~\ref{sec:Integer_program} and as analyzed in
Section~\ref{sec:upper_binary}. Use the program to improve the upper bounds.
\end{itemize}

%
%
%
%
%

%

\bibliographystyle{IEEEtran}

\end{document}